\documentclass[a4paper,11pt]{article}

\usepackage[utf8]{inputenc} 
\usepackage[T1]{fontenc}    
\usepackage{url}            
\usepackage{booktabs}       
\usepackage{amsfonts}       
\usepackage{nicefrac}       
\usepackage{microtype}      
\usepackage{geometry, xcolor}         
\usepackage{enumitem}
\usepackage{fancyhdr}
\newcommand{\ignore}[1]{}
\RequirePackage{amsthm,amsmath,amsfonts,amssymb}
\RequirePackage{natbib}
\RequirePackage[colorlinks,citecolor=blue,urlcolor=blue]{hyperref}
\usepackage{algorithm}
\usepackage{algorithmic}
\RequirePackage{graphicx,xcolor}
\usepackage{siunitx}
\usepackage{capt-of}
\usepackage{booktabs}
\usepackage{varwidth}
\usepackage{enumitem}   
\usepackage{comment}
\usepackage{lineno}
\usepackage{amsbsy}
\usepackage{blindtext}

\usepackage{graphicx,booktabs,multirow,xcolor}
\usepackage[font=small, labelfont=bf]{caption}
\usepackage{subcaption,wrapfig}

\numberwithin{equation}{section}
\theoremstyle{plain}

\newtheorem{theorem}{Theorem}[section]
\newtheorem{proposition}{Proposition}
\newtheorem{lemma}[theorem]{Lemma}

\newtheorem{assumption}{Assumption}[section]

\theoremstyle{remark}

\newtheorem*{example}{Example}

\newtheorem{remark}{Remark}[section]

\RequirePackage[utf8]{inputenc} 
\RequirePackage[T1]{fontenc}    
\RequirePackage{multicol,booktabs,longtable, multirow, makecell}       
\RequirePackage{bm}  
\RequirePackage{nicefrac}       
\usepackage{textcomp}
\RequirePackage{microtype}      
\RequirePackage{cleveref}       
\RequirePackage{psfrag,epsf}
\RequirePackage{caption,subcaption}
\usepackage{enumitem}

\def\IE{\mathbb{E}}
\def\R{\mathbb{R}}

\def\Z{\mathbb{Z}}
\def\N{\mathbb{N}}

\newcommand{\Acal}{\mathcal{A}}

\DeclareMathOperator*{\argmin}{arg\,min}

\newcommand{\bb}[1]{\boldsymbol{#1}}

\newcommand{\IP}{\mathbb{P}}

\definecolor{mygreen}{rgb}{0.09,0.62,0.45} 
\definecolor{myorange}{rgb}{0.83,0.37,0.10} 
\definecolor{myblue}{rgb}{0.06,0.45,0.69}  
\definecolor{myred}{rgb}{0.94,0.20,0.13}
\definecolor{mypurple}{RGB}{76,0,153} 

\newcommand{\argmax}{\mathop{\rm arg\max}}

\def\limsup{\mathop{\overline{\rm lim}}}
\def\liminf{\mathop{\underline{\rm lim}}}

\title{Fast localization of anomalous patches in spatial data under dependence}

\author{%
\begin{tabular}{ccc}
Soham Bonnerjee\footnote{Equal contributions} & Sayar Karmakar$^*$ & George Michailidis \\
\textit{University of Chicago} & \textit{University of Florida} & \textit{UCLA}
\end{tabular}
}
\date{}

\begin{document}






\maketitle

\begin{abstract}
We propose a scalable, provably accurate method for localizing an unknown number of multiple axis-aligned anomalous patches in spatial data under a general class of spatial dependence. Motivated by the practical need to detect localized changes rather than completely segment large spatial grids, we first introduce both a naive and a significantly faster intelligent-sampling-based estimator for a single patch. We then extend this methodology to the highly challenging multiple-patch setting and propose a two-stage Spatial Patch Localization of Anomalies under DEpendence procedure (SPLADE). Under mild conditions on signal strength, separation from the boundary, inter-patch separation, and a uniform Gaussian approximation, we establish simultaneous consistency for the estimated number of patches and for each individual patch boundary. Extensive numerical results based on synthetic data scenarios demonstrate that the proposed method exhibits significant computational and accuracy gains over competing approaches, as well as robustness to moderate and severe spatial dependence. Finally, we demonstrate the real-world utility of the proposed method by applying it to frame-to-frame video surveillance data, where it accurately detects small, closely separated subjects, a task where existing methods are significantly slower and highly prone to spurious detections due to not accounting for spatial dependence. A second application on 3D fibrous media is deferred to the Appendix.
\end{abstract}

\textit{Keywords}: Spatial dependence, Anomaly detection, Boundary detection, Multiple change-points

\section{Introduction}
Inference for anomalous patches or clusters in spatial data has a long history, going back at least to \cite{besagnewell1991}. The problem is rooted in the earlier literature on epidemic change-points for time-ordered data \citep{levin1985cusum, yao1993tests, inclan1994use, huvskova1995estimators, csorgo1997limit, chen2016detecting, ravckauskas2004holder,ravckauskas2006testing, ning2012empirical}, aided by mathematical tools such as those developed in \cite{naus1965}. In the spatial setting, one natural analogue is boundary or partition recovery, where separating hypersurfaces divide the domain into heterogeneous regions \citep{hallpengraujrssb, chan-yau, song2011comparative, otto2016detection, fan2018approximate, can-yau-dep}. In many applications, however, the practically relevant departure is not a global partition, but a local hotspot; namely, a small region whose behavior differs from its surroundings. This viewpoint is especially natural in public health \citep{hjalmars1996childhood, souza2019did,  lord2020investigation}, public safety and urban planning \citep{warden2008comparison, gao2013early, zeoli2014homicide, basu2025data}, and environmental monitoring \citep{riitters2005hot, shackelford2015conservation, vega2012cluster}, where the goal is to screen an entire map for localized excess risk without specifying the anomalous region in advance. 

Formally, we model the observed data as a $d$-dimensional lattice field
\begin{equation}\label{eq:spatialmodel}
X_{\boldsymbol i} = \mu_{\boldsymbol i} + \eta_{\boldsymbol i},
\qquad \boldsymbol i \in [\mathbf n],
\end{equation}
where $\mathbf n=(n_1,\dots,n_d)\in\mathbb N^d$ and
$[\mathbf n] := \prod_{\ell=1}^d \{1,2,\dots,n_\ell\}$.
Let $I := \bigcup_{j=1}^{K} I_{j}$ denote the union of $K$ (pairwise disjoint) anomalous patches, with each $I_{j}\subseteq[\mathbf n]$.
We model the mean field by a baseline level $\mu_0$ outside the anomalous region and patch-specific mean shifts inside:
\begin{equation}\label{eq:meanfield_deltas}
\mu_{\boldsymbol i}=
\begin{cases}
\mu_0 + \delta_j, & \boldsymbol i \in I_{j},\quad j=1,\dots,K,\\
\mu_0, & \boldsymbol i \in [\mathbf n]\setminus I,
\end{cases}
\end{equation}
where $\delta_j\neq 0$ for anomalous patches (and $\delta_{j_1}\neq \delta_{j_2}$ is allowed). With this formulation, the two fundamental statistical problems are: (i) to test whether anomalous patches exist at all, and (ii) to localize them when they do.

Likelihood-based spatial scan statistics systematized anomalous patch detection by computing likelihood ratios over candidate regions and using Monte Carlo calibrations to adjust for multiple testing \citep{kulldorffnagarwalla1995, glaz2001scan}. Pioneered by \cite{kulldorff1997} and popularized by the \texttt{SaTScan} software \citep{kulldorff2006satscan, block2007satscan}, this framework has been widely extended. Developments include adaptations for count \citep{neill2004detecting, neil2012fast}, ordinal \citep{jung2007spatial}, and Bernoulli data \citep{boustikas, walther2010}, alongside minimax-optimal procedures for Gaussian settings \citep{castro2005ieeeit, castro2011, chansinica, sharpnackejs, bodhi-scam, walthercalibrating}. Drawing on limit theory for Gaussian random fields \citep{jiang2002, kabluchko2011}, much of this literature primarily focuses on the testing problem \citep{walther2010, walthercalibrating, castro2011, sharpnackejs, gaojmlr2016, castrojasa2018, bodhi-scam, klein2022scan, stoepker2025, kohne2025edge}, though recent advances also include non-parametric, rank-, and permutation-based methods \citep{cucala2014distribution, jung2015nonparametric, castrojasa2018, konigaos2020, stoepker2025}. For comprehensive reviews of the extensive scan-statistics literature, see \cite{castro2011, abolhassani2021up, xie2022statistically}.

Conceptually, the ``scan statistic'' can be argued to adopt a \textit{top-down} perspective: one specifies a class of anomalous patterns a priori, and systematically scans the field to detect regions consistent with that structure. On the other hand, there has been a stream of complementary literature stemming from partition recovery or anomalous sub-graph or cluster detection problems \citep{madrid2021lattice, yu2022optimal, wang2025optimal}, which usually proceeds via a \textit{bottom-up} approach by constructing the anomalous region incrementally from local, fine-scale evidence. In such an approach, one can quantify uncertainty in the boundary by constructing location-wise “membership evidence’’ scores from the ensemble of near-optimal clusters, producing a fuzzy boundary map rather than a single hard contour \cite{oliveira_etal_2018}. These distinctions are not mutually exclusive: for example, several subgraph or sub-matrix recovery problems \cite{castro2011, sharpnack2013near, butucea-submatrix} also perform scans of some statistics over a family of corresponding candidates, blurring the boundary between the two approaches. Nevertheless, we find this distinction convenient for exposition. 
 Our framework aligns closely with the top-down scan-statistic perspective, though we contextualize this work within both frameworks.

\textit{\textbf{The case for dependent-data}}: A vast majority of the literature on the ``bottom-up'' approach, perhaps constrained by the available theoretical guarantees of the clustering algorithms, exclusively consider independently distributed observations. A similar restriction appears in a substantial portion of the scan-statistics literature as well. This assumption is known to be consequential in practice; for example, \cite{loh_zhu_2007} show that when spatial data exhibit unmodeled positive autocorrelation, classical scan statistics that assume independence can yield overly small $p$-values and spurious cluster detections. To the best of our knowledge, only a small number of works consider spatial dependence in anomalous patch detection, including \cite{can-yau-dep, dresvyanskiy2020detecting, makogin2024statistical, kirch2025scan, wang2025optimal}. Among these, \cite{wang2025optimal} is, to our knowledge, the only work that directly addresses the localization problem, providing theoretical guarantees for consistent estimation of both the number of anomalous patches and their locations; rest of these works primarily investigate the testing problem for existence of anomalous patches. Moreover, \cite{dresvyanskiy2020detecting, makogin2024statistical, wang2025optimal} can be viewed as adopting a \textit{bottom-up} perspective, using clustering-based methods to construct anomalous regions from local evidence. In contrast, \cite{can-yau-dep, kirch2025scan} formulate the problem in the scan-statistics framework and concentrate on testing for the presence of anomalous regions. Moreover, \cite{dresvyanskiy2020detecting, kirch2025scan} assume $M$-dependent observations, which substantially restricts the class of admissible dependence structures; for example, standard spatial autoregressive models \citep{anselin1998spatial, smirnov2001fast} generally do not satisfy this assumption. On the other hand, \cite{wang2025optimal} assumes sub-Gaussian tails, which allows the analysis to proceed using techniques similar to those developed for the Gaussian case.

This paper focuses on the \textit{localization of multiple anomalous patches under general spatial dependence}. Crucially, our methodology requires only the existence of \textit{finite $p$-th moments}, thereby seamlessly accommodating \textit{heavy-tailed noise} distributions. As noted by \cite{castro2011}, while testing and localization may share superficial computational traits, they demand fundamentally distinct theoretical assumptions and techniques. By advancing the scan-statistics paradigm, we leverage threshold calibration across spatial scales to establish sharp detection regimes alongside rigorous localization guarantees for structured clusters. Consequently, our methodology delineates the precise conditions under which anomalous boundaries can be reliably localized.

\textit{\textbf{The case for scalability}}: Computational feasibility is a crucial issue often overlooked in the pursuit of optimal detection. For spatially dependent data, \cite{wang2025optimal} proposes a rank-based scan algorithm (similar to \cite{sharpnackejs}) that localizes arbitrarily shaped patches in approximately $O(|\bb{n}|^{3/2})$ time, but lacks accompanying theoretical guarantees. In contrast, the detection algorithms in \cite{can-yau-dep} and \cite{kirch2025scan} require $O(|\bb{n}|^2)$ time. Bottom-up clustering approaches for general anomalous patches typically incur even higher computational cost, an issue highlighted by \cite{yu2022optimal} despite achieving optimal statistical rates. {While \cite{madrid2021lattice} prescribes an $O(|\bb{n}|)$ algorithm, our numerical work reveals its empirical runtime is considerably larger than the method proposed in this paper.} Ultimately, most methods for detecting \textit{irregularly shaped} patches \citep{patil2004detection, tango_takahashi_2005, duczmal2007, takahashi2008flexibly, kim2017evaluation, otani2021flexible, inoue2023detection, oliveira2025arbitrarily} suffer from at least one of four limitations: (i) substantial computational costs, (ii) restrictive assumptions on cluster size corresponding to the anomalous patch, (iii) reliance on knowing the true number of clusters, or (iv) intractable analyses driven by independence assumptions coupled with machine-learning models that are hard to analyze theoretically.

On the other hand, much of the scan-statistics literature alleviates computational costs by restricting the search space to specific geometrical shapes for the anomalies. While the \texttt{SaTScan} software \cite{kulldorff2006satscan} focuses on circular or elliptical patches \citep{tango_takahashi_2005, kulldorff-elliptic}, another prominent approach employs rectangular scan windows. A substantial body of work \citep{neill2004rapid, castro2005ieeeit, walther2010, castro2011, bucchia_jspi, sharpnackejs, bucchia2017jmva, konigaos2020, dresvyanskiy2020detecting, kourect2023, makogin2024statistical} has developed approximately linear-time algorithms for detecting axis-parallel (hyper)-rectangular patches in d-dimensional fields, primarily under independence. Theoretically, as noted by \cite{sharpnackejs, bucchia2017jmva}, scanning over rectangular windows facilitates the derivation of tight asymptotic results even under complex spatial dependency structures.

Moreover, rectangular clusters arise naturally in certain applications, including fibrous media \citep{wirjadi2014characterization, dresvyanskiy2019application, dresvyanskiy2020detecting} and data mining \citep{huo2002multiscale}. In surveillance, criminology, and geography, axis-aligned rectangles and related raster-style grids are also preferred for three primary reasons: (i) they are operationally interpretable, allowing analysts to quickly link hotspots to concrete action levels like street segments \citep{ChaineyTompsonUhlig2008, EckChaineyCameronLeitnerWilson2005}; (ii) regular grids provide comparable micro-units for studying fine-scale heterogeneity, which is essential since crime is often concentrated at specific micro-places \citep{MallesonSteenbeekAndresen2019, Weisburd2015}; and (iii) they permit fast multiscale scanning and rigorous statistical analysis, even if they are more restrictive than irregular shapes \citep{walther2010, PeiJasraHandZhuZhou2009}. This utility is also demonstrated in CAVIAR video surveillance dataset \citep{Fisher2004PETS04, CAVIARD23}, where subjects are identified via rectangular bounding boxes. The rectangular representation is computationally efficient; as noted in the CAVIAR documentation, restricting processing to such detection zones significantly improves speed while maintaining high performance. This shows why such box-based representations are attractive for surveillance analysis in general. Later, we analyze a part of the data using our proposed scalable method and contrast it with other baselines through the lens of precise localization. 

\textbf{Main contributions and Paper Organization:} 
By focusing on axis-parallel rectangles as a canonical structured class of anomalies, we develop a computationally efficient $O(|\bb{n}|)$ algorithm  that provides rigorous detection and localization guarantees under \textit{significantly richer spatial dependence structures} than previously studied. Our core contributions are organized as follows:
\begin{itemize}
    \item \textit{Optimal single-Patch localization (Section \ref{se:model}):} We introduce a fast intelligent sub-sampling approach for localizing a single anomalous patch. Notably, for $d=1$, our method recovers optimal minimax rates, offering a highly scalable solution for epidemic time-series change point localization.
    \item \textit{Scalable multi-Patch detection (Section \ref{se:multiple}):} Building on the single-patch localization algorithm, we develop a two-stage Spatial Patch Localization of Anomalies under DEpendence procedure (SPLADE)
to simultaneously estimate an \textit{unknown number of anomalous patches and their precise boundaries}. SPLADE enjoys strong theoretical guarantees under minimal separation assumptions and is highly scalable, unlike standard binary segmentation techniques that scale poorly in spatial settings.
    \item \textit{Strong empirical performance (Section \ref{sec:simu}):} Extensive synthetic data experiments illustrate that SPLADE exhibits robustness to tuning parameters. More importantly, SPLADE enjoys significant gains in speed and accuracy across various patch configurations, signal strength and types of spatial dependence, compared to existing baselines, including DCART \citep{madrid2021lattice} and recent extensions DPLS-SAD \citep{wang2025optimal}. While competing methods falter under spatial dependence, our approach consistently achieves high Adjusted Rand Indices and low normalized Hausdorff distances, all the while maintaining attractive computational efficiency.
    \item \textit{Real-world efficacy (Section \ref{sec:data} and Appendix \ref{se:appendix real data}):} We validate our framework on video surveillance footage and 3D fibrous media, two domains inherently characterized by strong spatial dependence. For the first application, on video surveillance, we demonstrate the ability to accurately resolve and bound closely situated subjects in complex environments where baseline methods fail to distinguish individual entities. In the second application on fibrous media, we explored beyond 2D images to a 3D use-case, underlining the scalability SPLADE enjoys over other baselines. 
\end{itemize}
All theoretical proofs, auxiliary lemmas, and additional simulation studies are deferred to the Appendix.

\medskip
\noindent \textbf{Notation:} For $\bb{i} \in \Z^d$, $|\bb{i}|=i_1\ldots ,i_d$. Define a \textit{Rectangle} $I_{[\bb{a},\bb{b}]} \subseteq Z^d$ with end-points $\bb{a} \in \Z^d$ and $\bb{b} \in \Z^d$, given by $I_{[\bb{a},\bb{b}]}=\{x \in \Z^d: \bb{a}< x \leq \bb{b}\}$,
where for two vectors $\bb{x}$ and $\bb{y} \in \Z^d$, we say $\bb{x}\leq \bb{y}$ if $x_j\leq y_j$ for $j=1(1)d$. Note that, $|I_{[\bb{a},\bb{b}]}|=|\bb{b}-\bb{a}|$. We denote $I_{[\bb{0}, \bb{n}]}$ as $[\bb{n}]$. We also denote the sample size $n:=|\bb{n}|$. For two points $\bb{a}, \bb{b}\in \Z^d$, let $|\bb{a} / \bb{b}|_{\min}:= \min_{1\leq k \leq d} |a_k/b_k|$, and correspondingly $|\bb{a} / \bb{b}|_{\infty}:= \max_{1\leq k \leq d} |a_k/b_k|$. For sequences $\{a_n\}$ and $\{b_n\}$, $a_n=O(b_n)$ and $a_n = \Theta(b_n)$ imply $\limsup_{n\to \infty} a_n/b_n < \infty$ and $\liminf_{n\to \infty}a_n/b_n \to \infty$ respectively. $A \Delta B$ denotes the symmetric difference of two sets $A$ and $B$.

\section{Localization of a single anomalous patch}
Next, we introduce a generalized framework for spatial dependency. This foundation motivates our two-step localization strategy: first, a "naive" estimator with established theoretical consistency (Section \ref{se:naive estimate}), and second, an intelligent-sampling algorithm that ensures computational scalability while maintaining theoretical validity (Section \ref{se:intelligent-epidemic}). These components form the essential building blocks for our subsequent multiple-patch SPLADE methodology.


\subsection{Preliminaries: dependent spatial fields} \label{se:model}

Let $(\varepsilon_{\bb{i}})_{\bb{i} \in \Z^d}$ be a mean-zero stationary random field. We accommodate a very general dependence structure by assuming only that the $\varepsilon_i$'s satisfy a mild, maximal $\mathcal{L}_p$ bound for some $p>2$. Formally, we require:
\begin{assumption}\label{asmp:doob}
For any rectangle $I \subseteq [\mathbf{n}]$, define its partial sum $S_I^\varepsilon := \sum_{\bb{j}\in I}\varepsilon_j$. Assume that $\|\varepsilon_{\bb{0}}\|_p < \infty$ for some $p>2$. Then, we require that
    \[ \| \max_{I \subseteq [\bb{n}] } |S_I^\varepsilon|\|_p \leq C' |\bb{n}|^{1/2}, \]
    where $C'$ is independent of $\bb{n}$, but may depend on $d$ and $p$.
\end{assumption}
Assumption \ref{asmp:doob} serves as a spatial analogue of  \textit{Rosenthal's} or \textit{Doob's maximal inequality} \citep{liuxiaowu, peligrad2007maximal}. Proving such inequalities for $d\geq 2$ is non-trivial due to lack of total ordering; the independent case was pioneered by \cite{Cairoli1970} and extended to specific dependent martingales by others \citep{christofides1990maximal, hirsch1995potential, walsh2006martingales}. Given its relative obscurity in statistics, we provide a proof in Appendix \S\ref{se:cairoli} that also illustrates the mathematical techniques central to this paper. Next, we show validity of Assumption \ref{asmp:doob} across various spatial dependency mechanisms.

\begin{example}[m-dependent linear field]
Assumption \ref{asmp:doob} easily follows from Cairoli's inequality for $m$-dependent fields (See Lemma \ref{lem:m-dependent}), recently studied by \cite{kirch2025scan}. 
\end{example}

\begin{example}[Linear random fields]\label{rem:ass-1-1} 
Assumption \ref{asmp:doob} also holds (Lemma \ref{lem:linear}) for the broad class of linear random fields that are not $m$- dependent
$$\varepsilon_{\bb{i}} = \sum_{\bb{s}\in \Z^d} a_{\bb{s}} e_{\bb{i}- \bb{s}}, \text{ where } \sum_{\bb{s}\in \Z^d} a_{\bb{s}}< \infty, e_{\bb{k}} \text{ i.i.d. with } \|e_{\bb{0}}\|_p<\infty,$$ 
which encompasses the \textit{Spatial Autoregressive Model} (SAR) (See \eqref{eq:sar}) \citep{ord1975estimation} widely used in econometrics and geography \citep{anselin1998spatial,cressie2015statistics, paul2024spatial}. Later in Section \ref{sec:simu}, we revisit this model for simulations.
\end{example}

Moving beyond such specific classes of dependence, arguably the most general representation for $(\varepsilon_{\bb{i}})_{\bb{i} \in \Z^d}$ is given by the functional form:
\begin{equation}\label{eq:causalspatial}
    \varepsilon_{\bb{i}}=g(e_{\bb{i}-\bb{s}}: \bb{s}\in \Z^d),
\end{equation}
where $g: \bigotimes_{i = 1}^{\infty} \R^d \to \R^d$ is a progressively measurable function and the innovations $(e_{\bb{i}})_{\bb{i} \in \Z^d}$ are i.i.d. This representation intuitively allows for spatial dependence from any direction and is directly inspired from writing the joint distribution of dependent random variables in terms of compositions of conditional quantile functions of i.i.d. uniform random variables.  However, for meaningful analysis, we must control the influence of distant indices by imposing a decay structure on spatial correlations. A standard approach \citep{el2013central, bucchia_jspi, steland2025inference} is to assume a finite long-run variance.
\begin{assumption} \label{asmp:lrv}
    The mean-zero spatial random field $(\varepsilon_{\bb{i}})_{\bb{i} \in \Z^d}$ has a finite long-run variance $\sigma^2$, defined by
    \[ \sigma^2= \IE[\varepsilon_{\bb{0}}^2]+ \sum_{\bb{i}\ne \bb{0}} \IE[\varepsilon_{\bb{0}} \varepsilon_{\bb{i}}] <\infty \]
\end{assumption}
A pertinent question is whether Assumption \ref{asmp:lrv} suffices to guarantee Assumption \ref{asmp:doob}.
\begin{remark}[Assumption \ref{asmp:doob} under general dependency]\label{rem:ass-1-2}
    Under mild regularity conditions like Assumption \ref{asmp:lrv}, fields of type \eqref{eq:causalspatial} satisfy a weaker version of Assumption \ref{asmp:doob}. For instance, \cite{el2013central} establish:
    \begin{lemma}[Abridged from Proposition 1, \cite{el2013central}]
        Under Assumption \ref{asmp:lrv}, for any rectangle $I\subseteq [\bb{n}]$, $\|S_I\|_p \leq C |I|^{1/2}$, where $C$ is independent of $\bb{n}$.   
    \end{lemma}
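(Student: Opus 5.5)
The plan is to follow the martingale-approximation route of \cite{el2013central}, working directly with the Bernoulli-shift representation \eqref{eq:causalspatial}. One caveat up front: finiteness of the long-run variance alone does not control $p$-th moments. I would therefore make the ``mild regularity'' precise through the physical dependence coefficients
\[
\delta_{\bb{j},p}:=\|\varepsilon_{\bb{j}}-\varepsilon_{\bb{j}}^{*}\|_p ,
\]
where $\varepsilon_{\bb{j}}^{*}$ is obtained from $\varepsilon_{\bb{j}}$ by replacing the single innovation $e_{\bb{0}}$ by an i.i.d. copy $e_{\bb{0}}'$. The working assumption is $\Delta_p:=\sum_{\bb{j}\in\Z^d}\delta_{\bb{j},p}<\infty$. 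This condition implies Assumption \ref{asmp:lrv}, and I would take it as the content of ``Assumption \ref{asmp:lrv} plus mild regularity.''

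\emph{Step 1 (martingale decomposition).} Fix a bijection $\tau:\N\to\Z^d$ and let $\mathcal F_m:=\sigma(e_{\tau(1)},\dots,e_{\tau(m)})$, with $\mathcal F_0$ trivial. Define the projections $P_m(\cdot):=\IE[\cdot\mid\mathcal F_m]-\IE[\cdot\mid\mathcal F_{m-1}]$. Since $S_I$ is measurable with respect to $\sigma(e_{\bb{k}}:\bb{k}\in\Z^d)=\bigvee_m\mathcal F_m$ and has mean zero, Lévy's martingale convergence theorem in $L^p$ gives $S_I=\sum_{m\ge1}P_mS_I$. This is a sum of martingale differences. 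Burkholder's inequality together with Minkowski's inequality in $L^{p/2}$ (valid since $p\ge2$) then yields
\[
\|S_I\|_p^2\le C_p\sum_{m\ge1}\|P_mS_I\|_p^2 .
\]

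\emph{Step 2 (projection bound via coupling).} Let $\varepsilon_{\bb{j}}^{(\bb{k})}$ denote $\varepsilon_{\bb{j}}$ with $e_{\bb{k}}$ replaced by $e_{\bb{k}}'$. Then
\[
\IE[\varepsilon_{\bb{j}}\mid\mathcal F_{m-1}]=\IE[\varepsilon_{\bb{j}}^{(\tau(m))}\mid\mathcal F_m],
\qquad\text{so}\qquad
P_m\varepsilon_{\bb{j}}=\IE\bigl[\varepsilon_{\bb{j}}-\varepsilon_{\bb{j}}^{(\tau(m))}\bigm|\mathcal F_m\bigr].
\]
Conditional Jensen and stationarity give $\|P_m\varepsilon_{\bb{j}}\|_p\le\delta_{\bb{j}-\tau(m),p}$. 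Summing over the rectangle,
\[
\|P_mS_I\|_p\le\sum_{\bb{j}\in I}\delta_{\bb{j}-\tau(m),p}=:A_{\tau(m)} .
\]

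\emph{Step 3 (summation).} Each $A_{\bb{k}}\le\Delta_p$, and $\sum_{\bb{k}}A_{\bb{k}}=|I|\Delta_p$. Hence
\[
\sum_m\|P_mS_I\|_p^2\le\Bigl(\sup_{\bb{k}}A_{\bb{k}}\Bigr)\sum_{\bb{k}}A_{\bb{k}}\le\Delta_p^2\,|I| .
\]
Combining with Step 1 gives $\|S_I\|_p\le C_p^{1/2}\Delta_p|I|^{1/2}$. The constant $C=C_p^{1/2}\Delta_p$ is independent of $\bb{n}$ and of $I$.

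The main obstacle is Step 1. The decomposition runs over an infinite, non-causal family of innovations, so convergence of $\sum_mP_mS_I$ in $L^p$ and the application of Burkholder's inequality need justification. I would first handle this by truncation: apply Burkholder's inequality to the finite partial sums $\sum_{m\le M}P_mS_I=\IE[S_I\mid\mathcal F_M]$. For these the bound of Step 3 holds uniformly in $M$. Then let $M\to\infty$ using $L^p$ martingale convergence and Fatou's lemma. Note also that the argument never uses the rectangle structure of $I$ beyond its cardinality. The same bound therefore holds for any finite index set, which is why the result is weaker than the maximal version in Assumption \ref{asmp:doob}: it gives no control of $\max_I|S_I|$.
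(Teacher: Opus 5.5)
Your argument is correct and is essentially the proof of Proposition 1 in \cite{el2013central}, specialized to unit weights: projections along an enumeration $\tau$ of $\Z^d$, then Burkholder, then the bound $\|P_m\varepsilon_{\bb{j}}\|_p\le\delta_{\bb{j}-\tau(m),p}$, then summation; the paper itself offers no proof beyond that citation. Your caveat is also well placed, since the cited result genuinely requires $\Delta_p<\infty$ rather than only the finite long-run variance of Assumption \ref{asmp:lrv}.
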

    However, confirming the full maximal inequality of Assumption \ref{asmp:doob} under general dependence remains an open problem in probability theory, primarily due to the lack of total ordering in $\Z^d$. Existing results typically impose a causal structure to leverage Cairoli's inequality. The most general formulation currently available (derived from \citet{cuny2025weak} and recorded in Lemma~\ref{lem:new-rosenthal}) restricts \eqref{eq:causalspatial} to:
    \[  \varepsilon_{\bb{i}}=g(e_{\bb{i}-\bb{s}}: \bb{s}\in \Z^d, \bb{s} \ge \bb{0}), \]
    which limits dependence to specific axial directions. Despite this technical gap for the fully general case, the extensive evidence validating Assumption~\ref{asmp:doob} across diverse settings makes it a practical and reasonable condition. 
\end{remark}

Under Assumption \ref{asmp:doob}, we next introduce a theoretically valid but \textit{naive} estimator for a single anomalous patch, denoted $I_0$. Analyzing its computational limitations naturally motivates the fast, scalable algorithm developed in Section \ref{se:intelligent-epidemic}.


    




\subsection{A naive estimator}\label{se:naive estimate}
We first establish key notation. Let $\bb{\tau}_{k} = (\tau_{k,1}, \ldots, \tau_{k,d}) \in [0,1]^d$ for $k=1,2$, and denote the true rectangular anomalous patch by $I_0 := \prod_{j=1}^d [n_j \tau_{1, j}, n_j \tau_{2,j}]$. For expositional simplicity, we assume $n_j\tau_{k,j} \in \N$, ignoring negligible fractional rounding since $n^{-1}\lfloor n\gamma \rfloor \sim \gamma$ asymptotically. We start with the ``naive'' least-squares estimator as: 
\begin{equation}\label{eq:ls}
    \hat{I}_{LS}(\lambda_1, \lambda_2) = \argmin_{\substack{I \subset [\bb{n}] \\ n\lambda_1 < |I| < n\lambda_2}} \left( \sum_{k \in I} (X_k - \Bar{X}_{I})^2 + \sum_{k \in I^c} (X_k - \Bar{X}_{I^c})^2 \right),
\end{equation}
where $\lambda_1$ and $\lambda_2$ act as search-space parameters bounding the admissible patch volume. Central to our analysis is an equivalent formulation of \eqref{eq:ls}:
\begin{equation}\label{eq:lsspatial}
    \hat{I}_{LS}(\lambda_1, \lambda_2) = \argmax_{\substack{I \subset [\bb{n}] \\ n\lambda_1 < |I| < n\lambda_2}} \sqrt{\frac{|I|(n-|I|)}{n^2}} |\bar{X}_I - \bar{X}_{I^c}|. 
\end{equation}
This representation streamlines the error analysis for localizing $I_0$ by isolating the contrast between internal (the patch) and external sample means. Additionally, mirroring standard conventions in change-point localization, we assume the true anomaly is bounded away from the spatial domain boundaries:
\begin{assumption}\label{asmp:away-from-boundary}
    Let $\bb{a}$ and $\bb{b}$ denote the endpoints of the rectangle $I_0$. There exists a constant $c>0$ such that $\max\{|\bb{a}/ \bb{n}|_{\min}, 1 - |\bb{b}/\bb{n}|_{\max} \} \geq c$. 
\end{assumption}

\noindent Generalizing the techniques of \cite{bai1994} to spatial random fields, we establish the consistency of $\hat{I}_{LS}$.
\begin{theorem} \label{thm:epiconsistency}
    Consider the model in \eqref{eq:spatialmodel} satisfying Assumption \ref{asmp:doob} for some $p > \sqrt{2(d-1)} \vee 2$. Let $I_0$ satisfy Assumption \ref{asmp:away-from-boundary} and define $c_n = \min\{|I_0|/n, 1-|I_0|/n\}$. Provided $nc_n^2 \delta^2 \to \infty$ as $n \to \infty$, the estimator \eqref{eq:ls} satisfies
    \begin{equation}\label{eq:consistencyrate}
       |I_0 \ \Delta \ \hat{I}_{LS}(\mathcal{C}_0c_n, 1 - \mathcal{C}_1c_n)| = O_{\IP}(r_{n, \delta}^{-1}), \quad r_{n,\delta} := \delta^{\frac{2}{1- 2(d-1)/p^2}} \big(\max\{\log_2 \frac{\sqrt{n}}{\delta}, 1\}\big)^{- \frac{2}{p^2/(d-1) -2}},
    \end{equation}
    for sufficiently small constants $\mathcal{C}_0, \mathcal{C}_1 \in (0,1)$.
\end{theorem}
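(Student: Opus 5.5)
Writing $\hat I=\hat I_{LS}$, we follow the Bai (1994) scheme: first show $\hat I$ is consistent at the volume scale, then obtain the rate from a peeling argument over shells of increasing symmetric difference. The criterion is
\[
M(I)=\sqrt{|I|(n-|I|)}\,\bigl|\bar X_I-\bar X_{I^c}\bigr| .
\]
Algebraically $\bar X_I-\bar X_{I^c}=\bigl(nS_I-|I|S_{[\bb n]}\bigr)/\bigl(|I|(n-|I|)\bigr)$, so $M(I)=\bigl|nS_I-|I|S_{[\bb n]}\bigr|/\sqrt{|I|(n-|I|)}$. Split $S_I=\mu$-part plus $S^\varepsilon_I$. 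The deterministic part is
\[
D(I)=\delta\,\frac{n|I\cap I_0|-|I||I_0|}{\sqrt{|I|(n-|I|)}}.
\]
We show $D(I_0)-D(I)\ge c\,\delta\, c_n^{1/2}|I\Delta I_0|/\sqrt n$ uniformly over admissible $I$ with $n\mathcal C_0c_n<|I|<n(1-\mathcal C_1c_n)$. This uses an explicit computation with $|I\cap I_0|$, $|I\setminus I_0|$, $|I_0\setminus I|$, together with Assumption \ref{asmp:away-from-boundary} to handle rectangles near the boundary. Since only rectangles are admissible, the stochastic part $|nS^\varepsilon_I-|I|S^\varepsilon_{[\bb n]}|/\sqrt{|I|(n-|I|)}$ is bounded by $\max_{I}|S^\varepsilon_I|/\sqrt{n c_n}$ up to constants. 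Assumption \ref{asmp:doob} then gives $O_{\IP}(c_n^{-1/2})$ for the sup, which yields $|I\Delta I_0|/n=O_{\IP}\bigl(1/(\sqrt n\, c_n\delta)\bigr)\to0$ under $nc_n^2\delta^2\to\infty$.

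For the rate, we use $M(\hat I)\ge M(I_0)$ and restrict to the event $|\hat I\Delta I_0|\le \eta n$. The difference $M(I)-M(I_0)$ expands to the drift $-(D(I_0)-D(I))$ plus a stochastic term dominated by $|S^\varepsilon_{I\setminus I_0}|+|S^\varepsilon_{I_0\setminus I}|$ plus a lower-order term of size $|I\Delta I_0|\,|S^\varepsilon_{[\bb n]}|/n$. We then peel: for dyadic $2^{j}\le|I\Delta I_0|<2^{j+1}$ with $2^j\ge M/r_{n,\delta}$, we need
\[
\IP\Bigl(\max_{I:\,|I\Delta I_0|\sim 2^j}|S^\varepsilon_{I\Delta I_0}|\ge c\,\delta\, 2^j\Bigr).
\]
The key geometric point is that, with $I$ and $I_0$ both rectangles, $I\Delta I_0$ decomposes into at most $2d$ "slab" rectangles, each adjacent to a face of $I_0$. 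A slab of volume $v$ has thickness $t$ in one coordinate and cross-section up to order $n^{(d-1)/d}$ in the remaining $d-1$ coordinates. We bound the maximum over slabs with the same face via Assumption \ref{asmp:doob} applied to boxes of volume $\asymp$ (thickness $\times$ full face). The thickness can be tiny while the face dimensions vary, so we must cover the $(d-1)$ free face-endpoint coordinates by a dyadic grid. There are about $\log_2(\sqrt n/\delta)$ scales per coordinate, and a union bound with Markov at order $p$ then costs a factor $(\log)^{d-1}$ relative to a single scale.

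We expect this step to be the main obstacle. Combining the Markov bound $\IP\le C\, v^{p/2}/(\delta v)^{p}$ per piece with the polynomial number of pieces coming from $d-1$ free directions, and summing over $j$, should produce a convergent series exactly when $v\gtrsim r_{n,\delta}^{-1}$. The exponents $1-2(d-1)/p^2$ on $\delta$ and $2/(p^2/(d-1)-2)$ on the logarithm should come from optimizing the grid resolution against the moment order. The condition $p>\sqrt{2(d-1)}$ is what makes $1-2(d-1)/p^2>0$, so the summation converges. In that balancing we would first try grid spacing $v^{\theta}$ with $\theta$ chosen to equate the discretization error (controlled again by Assumption \ref{asmp:doob} on thin remainder boxes) with the union-bound count.

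Finally, the lower-order term involving $S^\varepsilon_{[\bb n]}$ is $O_{\IP}(|I\Delta I_0|/\sqrt n)$. This is negligible against the drift $\delta|I\Delta I_0|c_n^{1/2}/\sqrt n$ only after using $\delta$ and the normalization factor, which we check directly. Combining the consistency step with the summed peeling bound gives $|I_0\Delta\hat I|=O_{\IP}(r_{n,\delta}^{-1})$.
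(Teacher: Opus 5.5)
Your skeleton matches the paper's proof. Both use a preliminary Bai-type rate, restrict to a neighbourhood of $I_0$, and work on dyadic shells $2^j\le|I\Delta I_0|<2^{j+1}$. Both decompose $I\setminus I_0$ and $I_0\setminus I$ into rectangles anchored at faces of $I_0$, then union-bound over dyadic side-length classes and grid positions in the $d-1$ free coordinates (Lemma \ref{lemma:anchoring}).

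\textbf{The gap.} It is the step you flag yourself: the exponents are asserted, not derived, and the accounting you describe cannot produce them. A union bound over $N$ pieces, each with $\|\max_J|S^\varepsilon_J|\|_p\lesssim 2^{j/2}$, gives a shell probability $\lesssim N\,2^{jp/2}/(\delta 2^j)^p$, and $N$ enters to the first power.
\begin{itemize}
\item With $N\asymp j^{d-1}$ (your $(\log)^{d-1}$ factor) this yields the rate $\delta^{-2}(\log\frac{\sqrt n}{\delta})^{2(d-1)/p}$.
\item With $N\asymp j^{d-1}2^{j(d-1)}$ (classes times positions in a box of side $2^{j+1}$, as in the paper) it yields $\delta^{-p}j^{d-1}2^{j(d-1-p/2)}$. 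This is summable only if $p>2(d-1)$, and gives $\delta$-exponent $2/(1-2(d-1)/p)$.
\end{itemize}
The stated $r_{n,\delta}$ corresponds instead to the shell bound $\delta^{-p}j^{(d-1)/p}2^{j((d-1)/p-p/2)}$ of \eqref{eq:HR}. The paper reaches it in \eqref{eq:appl-of-new-lemma} by inserting the $L^p$ bound of Lemma \ref{lemma:anchoring} into Markov's inequality without raising the factor $(\log 2^j)^{(d-1)/p}(2^j)^{(d-1)/p}$ to the $p$-th power. A union-bound count always enters linearly. So the factor $j^{(d-1)/p}$, which is the source of the logarithmic exponent $2/(p^2/(d-1)-2)$, cannot come out of any choice of grid spacing $v^\theta$. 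Optimizing the grid resolution therefore gives no route to convergence ``exactly when $p>\sqrt{2(d-1)}$''.

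\textbf{How to close it.} Your own ``thickness $\times$ full face'' remark does it, with no grid at all, and gives a stronger result.
\begin{itemize}
\item If $|I\Delta I_0|<2^{j+1}\le|I_0|/2$, then $|I\cap I_0|\ge|I_0|/2$. This forces every displacement of $I$ in coordinate $k$ (outward or inward) to be below $2^{j+2}/F_k$, where $F_k:=|I_0|/|I_0^k|$ is the face area.
\item Hence each piece $\prod_k\sigma^{(k)}(I)$, $\sigma\in\{L,M,R\}^d\setminus\{(M,\dots,M)\}$, of $I\setminus I_0$ (and likewise of $I_0\setminus I$) lies in one deterministic box $\mathcal{B}_\sigma$. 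This box is an interval of length $\lesssim 2^j/F_k$ at the face in each of the $t$ non-$M$ coordinates, and the interval $I_0^{l}$ in each remaining coordinate $l$. So $|\mathcal{B}_\sigma|\lesssim 2^j(2^j/|I_0|)^{t-1}\lesssim 2^j$.
\item Assumption \ref{asmp:doob} controls the maximum over all sub-rectangles of a box, so the varying cross-sections cost nothing. The shell probability is $\lesssim 3^d\delta^{-p}2^{-jp/2}$.
\item Hence $|\hat I_{LS}\Delta I_0|=O_{\IP}(\delta^{-2})$ for every $p>2$, with exact recovery with probability tending to one if $\delta\to\infty$. This implies \eqref{eq:consistencyrate}.
\end{itemize}

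\textbf{Normalisations.} These also need fixing.
\begin{itemize}
\item On the $M$-scale the drift is of order $\delta|I\Delta I_0|/\sqrt{c_n}$ (cf.\ \eqref{eq:absdiff2}). The noise is $\lesssim\max_J|S^\varepsilon_J|/\sqrt{c_n}$, not $\max_J|S^\varepsilon_J|/\sqrt{nc_n}$.
\item Together these give the $O_{\IP}(\sqrt n/\delta)=o_{\IP}(nc_n)$ of Proposition \ref{prop:tsconsistency}. When $c_n\to0$, it is this neighbourhood, not $|I\Delta I_0|\le\eta n$, that you need for the linear drift and the slab geometry.
\item With your stated drift $\delta c_n^{1/2}|I\Delta I_0|/\sqrt n$, neither the threshold $c\delta 2^j$ nor the negligibility of the $S^\varepsilon_{[\bb n]}$ term would follow.
\item The remaining normalisation-change terms (the paper's $G_1,G_2$) are $O_{\IP}(|I\Delta I_0|/(\sqrt n\,c_n^{3/2}))$. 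They are negligible since $nc_n^2\delta^2\to\infty$.
\end{itemize}
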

Theorem \ref{thm:epiconsistency} introduces several important nuances regarding convergence rates, sufficient conditions, and search space selection.
\begin{remark}[Convergence conditions and Assumption \ref{asmp:away-from-boundary}]
    The sufficient condition for convergence is $nc_n^2\delta^2 \to \infty$. When $|I_0| \asymp n$, this recovers the standard optimal condition $n\delta^2 \to \infty$. For context, \cite{wang2025optimal} achieves a localization rate of $(\max_k n_k)\delta^2 \gg \log n$, which is minimax optimal \textit{without} the boundary restrictions of Assumption \ref{asmp:away-from-boundary}. Comparable rates appear in Gaussian subgraph detection \citep{on-combi-testing, yu2022optimal} and time-series literature, where the minimax bound $n\delta^2 \gg \log n$ relaxes to $n\delta^2 \to \infty$, when change-points are bounded away from the edges \citep{wang2020univariate}. Crucially, while such minimax bounds typically rely on strict (sub-)Gaussian data, our results hold under the significantly milder Assumption \ref{asmp:doob}, completely avoiding any reliance on fast tail decay.
\end{remark}
Of particular interest is the fixed-alternative regime ($\delta \asymp 1$), where consistent localization requires $c_n \gg n^{-1/2}$. Conversely, for degenerate ``flat'' rectangles (i.e., $\min_{k\in [d]} |b_k - a_k|=0$), the patch volume satisfies $|I| \leq |\bb{n}|_{\min}$. Consistency here necessitates a diverging signal $\delta > n^{1/2} |\bb{n}|_{\min}^{-1}$. For instance, if $\bb{n} = t_0 \bb{c}$ for some $t_0 \in \N$ and $\bb{c} \in (0,1]^d$, consistency requires $\delta \to \infty$ whenever $d > 2$. Intuitively, the sparse signal from a lower-dimensional flat patch is easily overwhelmed by surrounding baseline noise. More generally, if $c_n \asymp n^{-\gamma}$, the signal must be sufficiently strong such that $\gamma < \frac{1}{2} + \frac{\log \delta}{\log n}$.

\begin{remark}[Duality of anomalous patch localization]
    The rate $nc_n^2 \delta^2$, governed by $c_n = \min\{|I_0|/n, 1-|I_0|/n\}$, reveals an interesting \textit{duality}: localization is equally difficult whether the patch is vanishingly small ($|I_0|/n \to 0$) or overwhelmingly large ($|I_0|/n \to 1$). While upper bounds on patch size have appeared in the literature \citep[e.g., Assumption 2.iii in][]{wang2025optimal}, this explicit duality is rarely emphasized, though minimum size constraints occasionally appear \cite{walther2010, sharpnackejs}. Conceptually, if an anomaly dominates the spatial domain, the baseline effectively corresponds to the true ``anomalous'' patch.
\end{remark}
\begin{remark}(Impact of spatial dimension and convergence rates)
    The rate $r_{n, \delta}$ reflects an impact of $d$. For $d=1$ (dependent time series), Theorem \ref{thm:epiconsistency} recovers the standard $\delta^{-2}$ localization rate. See \cite{huskova95, bucchia_jspi} Further, for a fixed $d$, if the spatial field $(\varepsilon_{\bb{i}})_{\bb{i} \in \Z^d}$ possesses sufficiently many moments, such as sub-Weibull tails \citep{kontorovich2014concentration}, we again recover the near-optimal $\delta^{-2}$ rate, up to logarithmic factors.
\end{remark}
 The primary limitation of \eqref{eq:lsspatial} is computational: an exhaustive search over a grid $\prod_{k=1}^d \{1, \cdots, n_0\}$ requires $O(n_0^{2d})$ operations, which is severely prohibitive for real-world applications.
\subsection{Intelligent Sampling for Spatial Patch Localization}\label{se:intelligent-epidemic}
To overcome the $O(n_0^{2d})$ computational bottleneck of $\hat{I}_{LS}$, we develop an efficient sub-sampling algorithm to localize $I_0$. This method will also serve as a foundational building block for our subsequent multi-patch algorithm. 
Building on the ``intelligent sampling'' concept introduced by \cite{lu2017intelligent} for univariate time series, we substantially generalize the framework to multi-dimensional spatial random fields. The core idea relies on a two-stage process: first, we apply the naive estimator to a coarsely sampled spatial grid to identify high-probability candidate regions containing the patch boundaries. Second, we restrict the refined search space exclusively to these localized subsets. This strategy yields massive computational speedups without sacrificing statistical accuracy.

\begin{figure}[htbp]
    \centering
    \includegraphics[width=\linewidth]{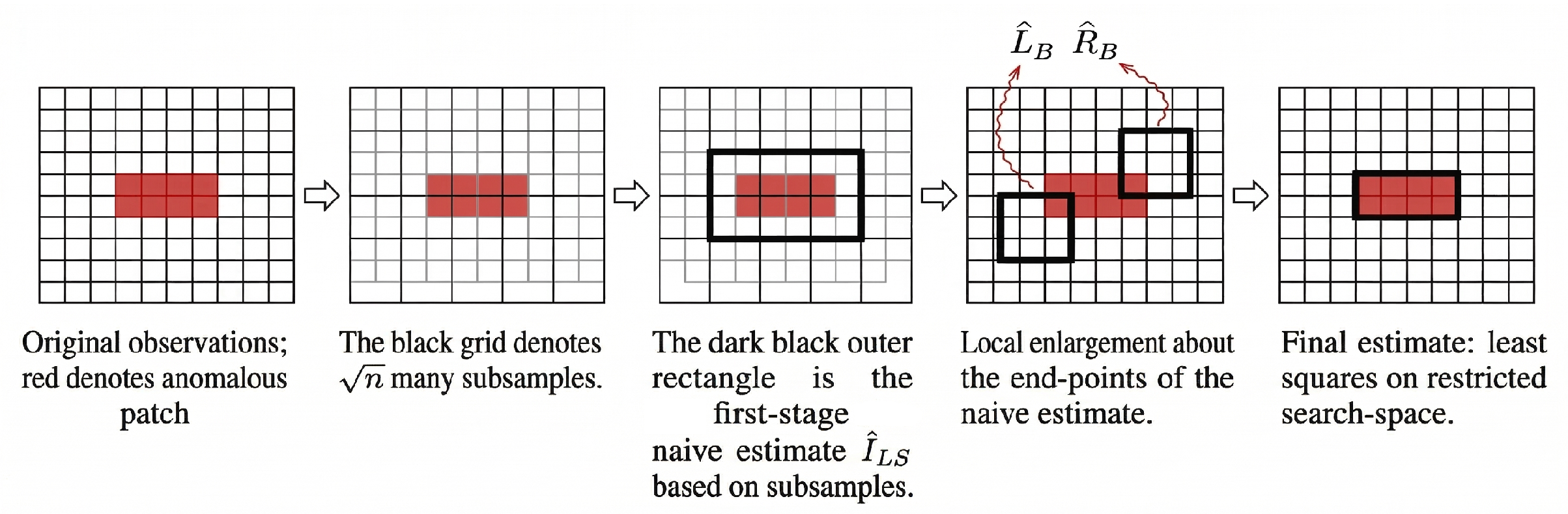}
    \caption{Illustration of the two-stage intelligent subsampling procedure detailed in Algorithm \ref{algo:subsample}.}
    \label{fig:algo-1}
\end{figure}

To build intuition before formalizing the general algorithm, consider the $d=1$ case under a fixed-alternative regime ($\delta \geq c > 0$). We first construct a coarse subsample $\mathcal{Y} = \{X_1, X_{\lfloor \sqrt{n} \rfloor}, X_{\lfloor 2\sqrt{n} \rfloor}, \ldots\}$ of size $O(\sqrt{n})$. Computing the naive estimate $\hat{I}_1 := \hat{I}_{LS}(\mathcal{Y})$ requires only $O(n)$ operations, and by Theorem \ref{thm:epiconsistency}, its endpoints are guaranteed to lie within $O(\sqrt{n})$ of the true anomaly $I_0$.
Therefore, restricting our second-stage search to $O(\sqrt{n}\log n)$ neighborhoods around the endpoints of $\hat{I}_1$ guarantees optimal localization of $I_0$ with high probability. This restricted search reduces the second-stage complexity to just $O(n \log^2 n)$, a stark improvement over the $O(n^2)$ naive approach. Generalizing this two-stage refinement strategy to $d$-dimensional fields yields the proposed algorithm.
Figure \ref{fig:algo-1} provides a schematic illustration of Algorithm \ref{algo:subsample}.
\begin{algorithm}[H]
\caption{Single spatial patch localization}\label{algo:subsample}
\begin{algorithmic}[1]
\STATE \textbf{Input:} $X=(X_{\bb{i}})_{\bb{i} \in [\bb{n}]}$, $\alpha, \kappa$.
\FOR{$k = 1$ \TO $d$}
  \STATE $L_k \leftarrow \lfloor n_k^{\alpha} \rfloor$, \quad $M_k \leftarrow \lceil n_k/L_k \rceil$.
  \STATE $\mathcal{Y}_k \leftarrow \{ (s-1)L_k + 1: s\in [M_k] \}$.
\ENDFOR
\STATE Sub-sampled dataset: $\mathcal{Y} \leftarrow \{X_{\bb{i}}: \bb{i}\in \prod_{k=1}^d \mathcal{Y}_k\}$, \quad $m \leftarrow |\mathcal{Y}|$.
\STATE $\hat{I}_{[a_I, b_I]}\leftarrow$ the preliminary \textit{naive} estimate based on $\mathcal{Y}$.
\STATE $\hat{L}_B \leftarrow \prod_{k=1}^d [L_k a_{I,k}- CL_k\bb{n}_k^{\kappa}(\log n)^{1/d} , \ L_ka_{I,k}+CL_k \bb{n}_k^{\kappa}(\log n)^{1/d}]$.
\STATE $\hat{R}_B \leftarrow \prod_{k=1}^d[L_kb_{I,k}- CL_k \bb{n}_k^{\kappa}(\log n)^{1/d}, \  L_kb_{I,k}+CL_k\bb{n}_k^{\kappa} (\log n)^{1/d}]$.
\STATE $\displaystyle
\tilde{I}:= \argmax_{\bb{i} \in \hat{L}_B,\, \bb{j} \in \hat{R}_B}
\sqrt{\frac{|\bb{j}-\bb{i}|(n-|\bb{j}-\bb{i}|)}{n^2}}
\big|\bar{X}_{I_{[\bb{i}, \bb{j}]}} - \bar{X}_{I_{[\bb{i}, \bb{j}]}^c}\big|
$.
\end{algorithmic}
\end{algorithm}

In particular, the validity and consistency of $\tilde{I}$ based on Algorithm \ref{algo:subsample} depend heavily on the coarseness of the initial grid. Intuitively, a coarser subsample degrades the first-stage estimate, increasing the risk that the local enlargement sets $\hat{L}_B$ and $\hat{R}_B$ fail to capture the true rectangle's endpoints. Conversely, the initial grid size directly dictates the computational speedup of Algorithm \ref{algo:subsample} over $\hat{I}_{LS}$, introducing a fundamental computational-statistical trade-off. We formalize this trade-off in the remarks following our next result, which establishes the consistency of Algorithm \ref{algo:subsample}.
\begin{theorem} \label{thm:single-consistency}
    Consider the model in \eqref{eq:spatialmodel}. Let $\bb{n}$ be sufficiently large such that $r_{n, \delta} \geq (\min_k n_k)^{-\kappa}$ for some $\kappa>0$, and assume $n^{1-\alpha}c_n^2 \delta^2 \to \infty$ as $n \to \infty$. Let $\tilde{I}$ be the output of Algorithm \ref{algo:subsample} with parameters $\alpha, \kappa > 0$ satisfying $\alpha + \kappa < 1$. Under the assumptions of Theorem \ref{thm:epiconsistency}, it holds that
    \begin{equation}
        |\tilde{I} \Delta I_0| = O_\IP(r_{n, \delta}^{-1}).
    \end{equation}
\end{theorem}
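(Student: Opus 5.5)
The argument has two stages, mirroring the structure of Algorithm \ref{algo:subsample}.

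\emph{Stage one: the coarse estimate is close to $I_0$.} The subsampled field $\mathcal{Y}$ is itself a model of the form \eqref{eq:spatialmodel} on the grid $\prod_k[M_k]$, of size $m\asymp n^{1-\alpha}$. Its anomalous patch is the rescaled rectangle $I_0^{\mathcal{Y}}$, whose endpoints are $\bb{a}/\bb{L}$ and $\bb{b}/\bb{L}$ up to rounding.

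We first check that the hypotheses of Theorem \ref{thm:epiconsistency} transfer to $\mathcal{Y}$.
\begin{itemize}
\item Assumption \ref{asmp:away-from-boundary} is scale invariant, so it holds for $I_0^{\mathcal{Y}}$ with the same constant.
\item The proportion $c_n$ is unchanged up to $o(1)$ rounding.
\item For Assumption \ref{asmp:doob}, a rectangle in the subgrid corresponds to a lattice-thinned rectangle in $[\bb{n}]$. The thinned field $(\varepsilon_{\bb{L}\odot\bb{j}})_{\bb{j}}$ is again stationary, and its dependence is no stronger than that of the original field. For the examples (linear fields, $m$-dependent fields, causal Bernoulli shifts) the maximal inequality of Assumption \ref{asmp:doob} holds verbatim with the same constant. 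In general I would simply add this as the standing interpretation of the assumption, namely that it holds along sublattices.
\item The condition $n^{1-\alpha}c_n^2\delta^2\to\infty$ is exactly $mc_n^2\delta^2\to\infty$.
\end{itemize}
Theorem \ref{thm:epiconsistency} then gives $|I_0^{\mathcal{Y}}\Delta \hat I_{[a_I,b_I]}|=O_\IP(r_{m,\delta}^{-1})$.

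Since $I_0^{\mathcal{Y}}$ has every side of length $\gtrsim \min_k M_k$ (by Assumption \ref{asmp:away-from-boundary} and $c_n$ bounded below on sides), a volume bound on the symmetric difference converts into a coordinatewise endpoint bound. Each endpoint error is at most the symmetric difference volume divided by the product of the other $d-1$ side lengths. I would then use $r_{m,\delta}\ge r_{n,\delta}\cdot(\text{log factor})$ together with $r_{n,\delta}\ge(\min_k n_k)^{-\kappa}$ to conclude that
\[|a_{I,k}-a_k/L_k|\le C n_k^{\kappa}(\log n)^{1/d}\]
with probability tending to one, and likewise for $b$. The $(\log n)^{1/d}$ slack absorbs the logarithmic factor in $r$ and the conversion from volume to side length.

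Multiplying by $L_k$ shows that the true endpoints $\bb{a}\in\hat L_B$ and $\bb{b}\in\hat R_B$ with probability tending to one. The requirement $\alpha+\kappa<1$ ensures these windows have side $L_k n_k^{\kappa}(\log n)^{1/d}=o(n_k)$. Hence every candidate rectangle in the refined search still has volume within $(\mathcal{C}_0c_n,1-\mathcal{C}_1c_n)n$, and in particular stays bounded away from the boundary.

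\emph{Stage two: the restricted maximizer inherits the rate.} On the event $E_n=\{\bb{a}\in\hat L_B,\bb{b}\in\hat R_B\}$, the candidate set is a subset of the full search space of \eqref{eq:lsspatial} that contains $I_0$. The proof of Theorem \ref{thm:epiconsistency} in the Bai-type style bounds the probability that the maximizer lies in a shell $\{I: |I\Delta I_0|>2^j r_{n,\delta}^{-1}\}$. It does so by comparing the objective at $I$ with its value at $I_0$, using the deterministic drift $\gtrsim c_n\delta^2|I\Delta I_0|/n$ against the maximal fluctuation controlled by Assumption \ref{asmp:doob} over a peeling of the shells. The same argument applies to any subfamily of candidates containing $I_0$, because it only uses that $\tilde I$ beats $I_0$. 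Thus
\[\IP(|\tilde I\Delta I_0|>Mr_{n,\delta}^{-1})\le \IP(E_n^c)+\IP\bigl(\exists I\text{ admissible},\ |I\Delta I_0|>Mr_{n,\delta}^{-1},\ \text{obj}(I)\ge\text{obj}(I_0)\bigr),\]
and both terms become small as $n\to\infty$ and then $M\to\infty$.

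\emph{Main obstacle.} I expect the hardest step to be the translation in stage one. It has two parts: first, justifying Assumption \ref{asmp:doob} for the thinned field and tracking that the rate $r_{m,\delta}$ (with its log factor evaluated at $\sqrt m/\delta$) is compatible with the window width $n_k^{\kappa}(\log n)^{1/d}$; second, converting a volume bound into per-coordinate endpoint bounds, which needs the side lengths of $I_0$ to be of order $n_k$. If the volume-to-side conversion is too lossy, I would instead rerun the peeling argument of Theorem \ref{thm:epiconsistency} directly on one coordinate at a time, perturbing a single face of the rectangle. That gives the drift $\delta^2\cdot(\text{face area})\cdot|\text{shift}|$ directly.
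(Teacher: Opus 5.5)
Your proposal follows the paper's proof essentially step for step. It applies Theorem~\ref{thm:epiconsistency} to the $m\asymp n^{1-\alpha}$ subsample, converts the result to coordinatewise windows containing $\bb a,\bb b$, checks that every refined candidate meets the volume constraint and so lies in the class $\mathcal D$ of that proof, and then reuses the Theorem~\ref{thm:epiconsistency} argument, which needs only that $\tilde I$ beats $I_0$. A few small remarks:
\begin{itemize}
\item Your unconditional bound via $\IP(E_n^c)$ is cleaner than the paper's supremum over conditional probabilities given $\{a_I=\bb s,b_I=\bb t\}$. That supremum step is not justified as written, because the first-stage estimate is built from the same data.
\item You are right that Assumption~\ref{asmp:doob} must hold on the thinned lattice; the paper uses this tacitly.
\item For the endpoint conversion, it is enough that each endpoint error is at most the symmetric-difference volume. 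You can therefore drop the claim that all sides of $I_0^{\mathcal Y}$ are $\gtrsim\min_k M_k$, which the stated assumptions do not support.
\item The volume-constraint step, exactly as in the paper, still tacitly needs the sides $n_k(\tau_{2,k}-\tau_{1,k})$ of $I_0$ to dominate the window width $L_kn_k^{\kappa}(\log n)^{1/d}$. This follows from $\alpha+\kappa<1$ when those sides are of order $n_k$, but not in general when $c_n\to 0$.
\end{itemize}
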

The parameter $\alpha$ in Theorem \ref{thm:single-consistency} governs a crucial statistical-computational tradeoff. While a smaller $\alpha$ relaxes the sufficient conditions for statistical consistency, a larger $\alpha$ reduces the computational cost of the preliminary estimate $\hat{I}_{[a_I, b_I]}$ at the expense of a more costly second-stage refinement for $\tilde{I}$. We formalize this tradeoff below.
\begin{remark}[Optimal choice of $\alpha$]\label{rem:single}
    For simplicity, assume $p \gg \sqrt{d}$ (so $r_{n,\delta} \asymp \delta^2$ up to logarithmic factors) and uniform dimensions $n_1 \asymp \ldots \asymp n_d$. Algorithm  \ref{algo:subsample}'s computational complexity is $O(n^{2(1-\alpha)} + n^{2(\alpha + \kappa)} \log^2 n)$. If the patch size scales as $c_n \asymp n^{-\gamma}$ for some $\gamma \in (0,1)$, consistency requires $\alpha \in (0, 1 - 2\gamma + 2\frac{\log \delta}{\log n})$. Balancing computational efficiency with this statistical constraint yields the optimal choice $\alpha^\star$:
    \[
    \alpha^{\star} = \begin{cases}
        \frac{1-\kappa}{2}, & \gamma \in \big(0, \frac{1+\kappa}{4} + \frac{\log \delta}{\log n} \big), \\
        1 - 2\gamma + 2\frac{\log \delta}{\log n}, & \gamma \in \big[\frac{1+\kappa}{4} + \frac{\log \delta}{\log n}, \frac{1}{2} + \frac{\log \delta}{\log n}\big). 
    \end{cases}
    \]
    Since $\delta^2 \gtrsim n^{-\kappa/d}$, it follows that $\frac{1+\kappa}{4} + \frac{\log \delta}{\log n} \geq \frac{1}{4} + \kappa(\frac{1}{4} - \frac{1}{2d})$. We detail the following observations for regimes where the patch size is sufficiently large ($\gamma \in (0,1/4)$). Note that this still accommodates vanishing patches ($c_n \to 0$ when $\gamma > 0$), provided they do not vanish too rapidly.\\
$\bullet$ For $d=2$, arguably the most common practical setting, we have $\alpha^{\star} = \frac{1-\kappa}{2}$ for all $\gamma \in (0,1/4)$, provided $\delta^2 \gtrsim n^{-\kappa/2}$ for $\kappa>0$. Consequently, Algorithm \ref{algo:subsample} achieves a computational complexity of $O(n^{1+\kappa})$. This represents a massive speed-up over the naive estimator $\hat{I}_{LS}$, while preserving the optimal statistical consistency rate for realistically sized anomalous patches. \\
$\bullet$ As expected, computational complexity naturally increases as the signal strength $\delta$ decreases (which corresponds to a larger $\kappa$). \\
$\bullet$ Under a fixed alternative ($\delta \asymp 1$), we set $\kappa\approx 0$, making the ideal first-stage block length $L_k(I) \approx \sqrt{n_k}$. In this regime, Algorithm \ref{algo:subsample} achieves a near-linear runtime of $O(n)$ up to logarithmic factors. In contrast, under similar settings, existing approaches \citep[e.g.,][]{wang2025optimal} require $O(n^{3/2})$ operations for $d=2$, often without explicit theoretical guarantees for consistency.
\end{remark}
In summary, while Algorithm \ref{algo:subsample} successfully leverages subsampling to yield a computationally efficient and provably valid estimator, its current formulation fundamentally assumes the presence of only a single anomalous patch. Since the true number of anomalies is rarely known \textit{a priori} in practice, this framework must be extended. Building on the foundations established in Section \ref{se:intelligent-epidemic}, the subsequent section develops a generalized algorithm capable of localizing multiple spatial patches.

\section{Multiple spatial patch localization }\label{se:multiple}

Next, we address the generalized multi-patch localization problem introduced in \eqref{eq:meanfield_deltas}. To leverage the computational and statistical efficiency of Algorithm \ref{algo:subsample}, we decompose the multiple-patch problem into several disjoint single-patch localization tasks that can be solved in parallel.  This decomposition relies on a preliminary block-based testing procedure. Despite complex spatial dependence, functional central limit theorems typically yield an asymptotic Gaussian structure for the random field. Assuming for simplicity that the long-run variance is known, we derive an asymptotic threshold for a coarse screening step. This screening isolates a set of disjoint candidate regions, each containing a single true anomalous patch. We can then apply Algorithm \ref{algo:subsample} piecemeal to each candidate region. We introduce our proposed method, for multiple patch detection, as Spatial Patch Localization of Anomalies under DEpendence (SPLADE), detailing its practical nuances and formal theoretical guarantees below.

Figure \ref{fig:algo-2} provides a schematic of Algorithm \ref{algo:multiple-subsample}. Assuming the true anomalous patches are sufficiently large and \textit{well-separated} (for instance, overlapping along the y-axis but separated along the x-axis as in Figure \ref{fig:algo-2}, formalized later in Assumption \ref{ass:min-sep}) the key steps of Algorithm \ref{algo:multiple-subsample} proceed as follows:
\begin{itemize}
    \item Let $\mu_0$ denote the baseline mean outside the anomalies. The algorithm begins with a block-based testing strategy using a threshold $\mathbf{Q}$. The sample space $[\bb{n}]$ is partitioned into $n^{1-\alpha}$ equal-sized rectangles, and a simultaneous test of $\IE[X_{\bb{i}}]=\mu_0$ is performed on each block. Since the patches are large relative to the block size, uniform Gaussian approximations (formalized in Assumption \ref{ass:fclt}) ensure that all blocks substantially overlapping with an anomaly are flagged with probability approaching one.
    \item Because the threshold controls a pre-specified Type I error, isolated false positive blocks will naturally occur. However, the probability of these false positives forming large connected components vanishes as $n$ increases. Thus, we isolate only connected regions of blocks, denoted $C_j$, that contain a sufficient number of samples. Given the separation assumption, each selected region $C_j$ captures exactly one anomalous patch, simultaneously yielding an accurate estimate of the total number of anomalies.
    \item Each connected region $C_j$ is then enveloped by a slightly larger rectangle $D_j$ such that all $D_j$ remain disjoint. Since each $D_j$ isolates a single true anomaly, we can deploy Algorithm \ref{algo:subsample} in parallel across all $D_j$ to accurately localize the patches.
\end{itemize}
\begin{figure}
    \centering    \includegraphics[width=0.8\linewidth]{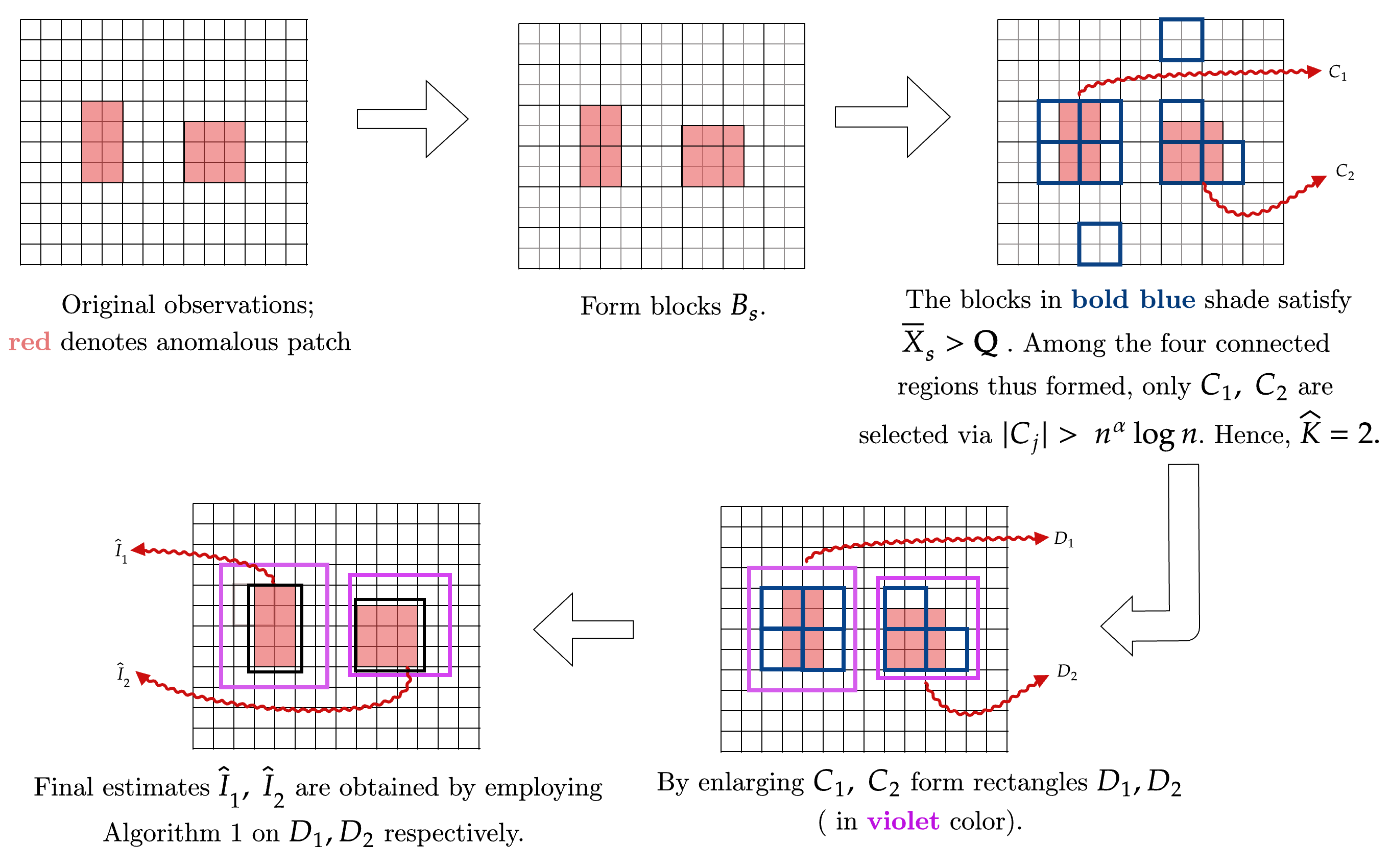}
    \caption{Illustration of the workings of Algorithm \ref{algo:multiple-subsample}.}
    \label{fig:algo-2}
\end{figure}
\begin{remark}[Computational complexity of Algorithm \ref{algo:multiple-subsample}]
As in Remark \ref{rem:single}, assume uniform dimensions $n_1 \asymp \ldots \asymp n_d \asymp n^{1/d}$. The first-stage testing requires $O(n)$ computations, and the connected components can be identified in $O(|M|)$ time using standard graph traversal techniques \citep[e.g.,][]{cormen2022introduction}. Suppose the mean-shift $\delta_j$ of each patch $I_j$ satisfies $\delta_j^2 \gg n^{-\kappa_j/d}$. Applying Algorithm \ref{algo:subsample} with parameters $(\alpha_j, \kappa_j)$ independently to each enveloping region $D_j$ yields a computational complexity of
$$O\bigg(\sum_{j=1}^K |I_j|^{2(1-\alpha_j)} + |I_j|^{2(\alpha_j+\kappa_j)} \log^2 n \bigg).$$
Choosing the optimal $\alpha_j^\star = \frac{1- \kappa_j}{2}$ in accordance with Remark \ref{rem:single} reduces the total complexity to $O(n + \sum_{j=1}^K |I_j|^{1+\kappa_j})$. If the number of anomalous patches $K = O(1)$, applying the trivial bound $\max_j |I_j| \leq n$ simplifies this to $O(n^{1 + \max_j \kappa_j})$. 
\begin{algorithm}[H]
\caption{SPLADE: Spatial Patch Localization of Anomalies under DEpendence}\label{algo:multiple-subsample}
\begin{algorithmic}[1]
\STATE \textbf{Input:} $X=(X_{\bb{i}})_{\bb{i} \in [\bb{n}]}$, block length parameter $\alpha$, first-stage threshold $\mathbf{Q}$.
\FOR{$k = 1$ \TO $d$}
  \STATE $L_k \leftarrow \lfloor n_k^{\alpha} \rfloor$, \quad $M_k \leftarrow \lceil n_k/L_k \rceil$.
  \STATE For $s\in [M_k]$, set
  \STATE \hspace{1.5em}$I_k(s) := \big\{ (s-1)L_k + 1,\,(s-1)L_k + 2,\,\dots,\,\min\{sL_k,\,n_k\} \big\}$.
\ENDFOR
\STATE Form blocks $B_{\bb{s}} := \bigotimes_{k=1}^d I_k(s_k)$ for $\bb{s}=(s_1,\dots,s_d)\in \bigotimes_{k=1}^d [M_k]$.
\STATE Set $\lvert B_{\bb{s}} \rvert = \prod_{k=1}^d \Big(\min\{s_k L_k,\,n_k\} - (s_k-1)L_k\Big)$.
\STATE Define block means $\overline{X}_{\bb{s}} := \frac{1}{\lvert B_{\bb{s}} \rvert}\sum_{\bb{i}\in B_{\bb{s}}} X_{\bb{i}}$, \; $\bb{s}\in \bigotimes_{k=1}^d [M_k]$.
\STATE Initialize $\tilde{M}$.
\FOR{each $\bb{s}\in \bigotimes_{k=1}^d [M_k]$}
  \IF{$|\bar{X}_{\bb{s}}| > \mathbf{Q}$}
    \STATE $\Tilde{M} \gets B_{\bb{s}}$.
  \ENDIF
\ENDFOR
\STATE Let $\{C_1,\dots,C_{\hat{K}}\}$ be the connected components of $\tilde{M}$, with $\min_j |C_j|> n^{\alpha} \log n$.
\FOR{$j=1$ \TO $\hat{K}$}
  \FOR{$k=1$ \TO $d$}
    \STATE $\ell_k^j \gets \min_{\bb{s}\in C_j} s_k$; \quad $r_k^j \gets \max_{\bb{s}\in C_j} s_k$.
  \ENDFOR
  \STATE $\bb{l}_j\gets [L_1 \ell_1^j - c L_1 \log^{} n, \ldots, L_d \ell_d^j - c L_d \log^{} n]$, \quad
  $\bb{r}_j\gets [L_1 r_1^j + c L_1 \log^{} n, \ldots, L_d r_d^j + c L_d \log^{} n]$.
  \STATE $D_j \gets I_{[\bb{l}_j, \bb{r}_j]}$, \quad $d_j \gets |D_j|$.
  \STATE $\displaystyle
  \hat{I}_j \gets \text{Algorithm 1}(X_{\bb{i}}; \bb{i}\in D_j).
  $
\ENDFOR
\RETURN number of patches $\hat{K}$; estimated patches $\hat{I}_j, j \in [\hat{K}]$.
\end{algorithmic}
\end{algorithm}
Crucially, for fixed alternatives where all $\delta_j \asymp 1$ (implying $\kappa_j=0$), the overall runtime is strictly $O(n)$. To the best of our knowledge, this is the \textit{only algorithm that achieves linear-time computation} for the spatial anomaly localization problem. In contrast, existing approaches such as \cite{wang2025optimal} require $O(n^{3/2})$ operations even when $K$ is bounded by a constant.
\end{remark}
As anticipated in our informal discussion, the validity of Algorithm \ref{algo:multiple-subsample} hinges on two key conditions: (i) the anomalous patches must be well-separated, and (ii) a sufficiently sharp Gaussian approximation must hold to enable simultaneous testing of the first-stage block means. We formalize these assumptions below.
\begin{assumption}[Minimum separation]\label{ass:min-sep}
    For any two rectangles $I_{[\bb{i}_1, \bb{i}_2]}$ and $I_{[\bb{j}_1, \bb{j}_2]}$ in $[\bb{n}]$, define the pseudo-metric 
    \[
    \rho(I_{[\bb{i}_1, \bb{i}_2]}, I_{[\bb{j}_1, \bb{j}_2]}) = \max_{k\in [d]} \max\{0, j_{1,k} - i_{2,k}, i_{1,k} - j_{2,k}\},
    \]
    and let $\nu(I_{[\bb{i}_1, \bb{i}_2]}, I_{[\bb{j}_1, \bb{j}_2]})$ denote the corresponding maximizer over $k \in [d]$. There exists an $\alpha \in (0,1)$ such that the $K$ disjoint anomalous patches $I_1, \ldots, I_K$ satisfy
    \[
    \min_{j\neq k} \rho(I_j, I_k) \geq c_0 n_{\nu^\star_{jk}}^{\alpha}\log n,
    \]
    where $\nu_{jk}^\star = \nu(I_j, I_k)$ and $I_j = \prod_{k=1}^d [n_k\tau_{1,k}^j, n_k\tau_{2,k}^j]$.
\end{assumption}
Assumption \ref{ass:min-sep} ensures that any pair of anomalous patches is sufficiently separated along at least one of the $d$ coordinate axes. The separation parameter $\alpha$ strictly governs the first-stage block-based testing procedure in Algorithm \ref{algo:multiple-subsample}. Specifically, if we partition the domain into rectangular blocks with side lengths $n_k^\alpha$ for $k \in [d]$, Assumption \ref{ass:min-sep} guarantees that any two distinct patches $I_j$ and $I_k$ are separated by at least $\Theta(\log n)$ many blocks along their axis of maximal separation, $\nu^\star_{jk}$.
\begin{assumption}[Uniform Gaussian approximation]\label{ass:fclt}
    Let $\mathcal{A}$ be the collection of rectangles in $[0,1]^d$. For some $n_0 \in \N$, define $A_{n_0}(A) := \sum_{\bb{i}\in \{1, \ldots, n_0\}^d} n_0 A \cap I_{[\bb{i}-1, \bb{i}]}$. In a possibly enriched probability space, there exists a standard Brownian sheet $\mathbb{W}$ on $[0,1]^d$ such that 
    \begin{equation}
        \sup_{A \in \mathcal{A}} \big| S_n(A_{n_0}(A)) - \sigma n_0^{d/2} \mathbb{W}(A) \big| = o_{\IP}(n_0^{d/q}), \label{eq:fclt}
    \end{equation}
    for some $q>2$, where $\sigma>0$ is defined as in Assumption \ref{asmp:lrv}, and $S_n(A_{n_0}(A)) = \sum_{\bb{i}\in A_{n_0}(A)} \varepsilon_{\bb{i}}$.
\end{assumption}

Assumption \ref{ass:fclt} plays a central role in our subsequent analysis by enabling a simultaneously valid, block-based testing procedure. Next, we discuss its implications.
\begin{remark}[Discussion on Assumption \ref{ass:fclt}]
    Assumption \ref{ass:fclt} functions as a strengthened \textit{weak invariance principle} (or functional central limit theorem) for spatial random fields. The $q=2$ case in \eqref{eq:fclt} is well-established for a broad class of stationary spatial random fields satisfying \eqref{eq:causalspatial} \citep{el2013central, bucchia_jspi}, building upon a rich history of classical results \citep{wichura1969inequalities, poghosyan1998invariance, doi:10.1142/6555}. 
    Stronger uniform Gaussian approximations requiring $q>2$ have been extensively developed primarily for the $d=1$ (time series) setting. Beginning with the seminal work of \citet{komios1975approximation}, a large body of literature has established optimal exponents $q=p$ (with $p$ defined in Assumption \ref{asmp:doob}) for general stationary time series \citep{sakhanenko1984, sakhanenko1989, zaitsev1998, Sakhanenko2006, gotzezaitsev, LiuLin2009, zhouzhousinica, Wu14, KarmakarWu2020, bonnerjeekarmakarwu}. 
    
    Given this substantial evidence in one dimension, we adopt Assumption \ref{ass:fclt} as a standing condition for spatial fields ($d>1$). We do not strictly require the optimal exponent $q=p$, although such rates may well be attainable in higher dimensions. Rigorously establishing these uniform approximations for $d>1$ remains a highly non-trivial open problem in probability theory, though recent advances in multiscale and high-dimensional spatial approximations \citep[e.g.,][]{proksch2018multiscale, shao_cck} offer promising steps towards this direction.
\end{remark}
The next result establishes the theoretical consistency guarantees of Algorithm \ref{algo:multiple-subsample}.
\begin{theorem}\label{thm:multiple}
    Suppose Assumptions \ref{asmp:doob}, \ref{asmp:lrv}, and \ref{ass:fclt} hold for the underlying spatial random field $(\varepsilon_{\bb{i}})_{\bb{i}\in \Z^d}$. Further, assume the baseline mean is $\mu_0=0$. Recall the definition of blocks $B_{\bb{s}}$ in Algorithm \ref{algo:multiple-subsample}. Let $\mathbf{Q}$ be the $(1-\kappa)$-th quantile of $\max_{\bb{s}\in \prod_{k=1}^d [M_k(I)]} |B_{\bb{s}}|^{-1}\big|\sigma \mathbb{W}(B_{\bb{s}})\big|$ for some $\kappa \in (0,1)$. Consider the model \eqref{eq:spatialmodel} with a bounded number of anomalous patches $K \leq C$ for some constant $C>0$, and let these patches $\{I_j\}_{j\in [K]}$ satisfy Assumption \ref{asmp:away-from-boundary}.
    \begin{itemize}
        \item \textit{(First stage guarantee)} Assume that the parameter $\alpha \in (0,1)$ in the first stage of Algorithm \ref{algo:multiple-subsample} is small enough to satisfy Assumption \ref{ass:min-sep} and 
        \begin{equation}
            \min_{j\in[K]} \min_{k\in[d]} n_k^{1-\alpha}|\tau_{2,k}^j - \tau_{1,k}^j| = \Theta(\log^{1/d}n), \label{eq:min-size}
        \end{equation} 
        yet large enough to satisfy
        \begin{equation}
            \min \bigg\{ \frac{n^{\alpha}}{\log n}\min_{j \in [K]} \delta_j^2, \ n^{\alpha/2}\sqrt{\log n} |\bb{n}|_{\infty}^{-d/q} \bigg\} \to \infty \quad \text{as } n\to \infty, \label{eq:jump-min}
        \end{equation}
        where $q>2$ is as defined in Assumption \ref{ass:fclt}. Then, for Algorithm \ref{algo:multiple-subsample}, it holds that
        \begin{equation}
            \IP(\hat{K}=K)\to 1 \quad \text{as } n \to \infty. \label{eq:correct-number-est}
        \end{equation}
        \item \textit{(Second stage guarantee)} For each $j\in [K]$, suppose there exists $\kappa_j>0$ such that $r_{|I_j|, \delta_j} \gg (\min_{k \in [d]}|I_j|_k)^{-\kappa_j}$. Moreover, suppose each individual application of Algorithm \ref{algo:subsample} within Algorithm \ref{algo:multiple-subsample} is performed with parameters $\alpha_j$ small enough such that $\alpha_j + \kappa_j < 1$, and
        \begin{equation}
            \min_{j\in [K]} \frac{n^{2\alpha} \log^{3d} n}{|I_j|^{1+\alpha_j}}\delta_j^2 \to \infty. \label{eq:indv-I_j-tradeoff}       
        \end{equation}
        Then, for any $\eta>0$, there exists an $M_{\eta}>0$ such that the estimates $\hat{K}$ and $\hat{I}_j$ (for $j\in [\hat{K}]$) from Algorithm \ref{algo:multiple-subsample} satisfy
        \begin{equation}
            \IP\big( |I_j \Delta \hat{I}_j| > M_{\eta}r_{n, \delta_j}^{-1} \text{ for all } j\in [K] \ \big| \ \hat{K}=K \big) < \eta. \label{eq:multpl-consistency}
        \end{equation} 
    \end{itemize}
\end{theorem}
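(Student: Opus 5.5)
The proof splits along the two guarantees, and the first stage is reduced to three events about the block means $\bar X_{\bb s}$.

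\textbf{Stage one: reduction to Gaussian block means.} Assumption \ref{ass:fclt} lets us replace every block noise sum $S(B_{\bb s})$ by $\sigma\mathbb{W}(B_{\bb s})$, uniformly over all blocks. The error in each block mean is $o_{\IP}(|\bb n|_\infty^{d/q})/|B_{\bb s}|$ with $|B_{\bb s}|\asymp n^{\alpha}$. By the second part of \eqref{eq:jump-min}, this error is negligible compared with the Gaussian noise scale $\sqrt{\log n}\,n^{-\alpha/2}$ of the maximum over the $n^{1-\alpha}$ blocks. The Gaussian maximum is handled by a union bound over independent normals, since disjoint blocks carry independent Brownian-sheet increments. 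This gives $\mathbf Q\asymp \sigma\sqrt{2(1-\alpha)\log n/n^{\alpha}}$.

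\textbf{The three events.} (a) \emph{Interior blocks are flagged.} Any block fully inside some $I_j$ has mean $\delta_j+O_{\IP}(\mathbf Q)$. The first part of \eqref{eq:jump-min} gives $\delta_j\gg\mathbf Q$, so all such blocks are flagged with probability tending to one. Condition \eqref{eq:min-size} guarantees there are at least $\log n$ such blocks per patch, and they form a connected set. Hence each $I_j$ produces a component with more than $n^{\alpha}\log n$ samples.

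(b) \emph{No false components.} A block disjoint from $I$ exceeds $\mathbf Q$ only with probability about $\kappa n^{\alpha-1}$ individually; I would sharpen this using a slightly inflated threshold. A connected cluster of $m\ge\log n$ null blocks has probability at most $(3^d)^m\,(\text{per-block prob})^m$ by a lattice-animal count. Summing over the $n^{1-\alpha}$ starting blocks, this vanishes once $m\gtrsim\log n$. So no purely null component meets the size cut.

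(c) \emph{Components do not merge.} Blocks straddling a patch boundary may be flagged, but they lie within one block of $I_j$. By Assumption \ref{ass:min-sep}, two patches are separated by $c_0\log n$ blocks along axis $\nu^\star_{jk}$. A bridge between their components would therefore need a null chain of length of order $\log n$, which (b) excludes. Together (a)–(c) give $\hat K=K$, i.e.\ \eqref{eq:correct-number-est}.

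\textbf{Stage two.} On $\{\hat K=K\}$, events (a)–(c) imply that each $C_j$ lies within one block of $I_j$ and covers its interior blocks. The enlargement by $cL_k\log n$ therefore gives $I_j\subset D_j$, and the $D_j$ are disjoint from the other patches, again by Assumption \ref{ass:min-sep}. Assumption \ref{asmp:away-from-boundary} holds inside $D_j$, since the margin is of order $L_k\log n$ against the side length $n_k^{\alpha}\log^{1/d}n$. The resulting $c_{n}$ inside $D_j$ is $|I_j|/|D_j|$, bounded below up to logarithmic factors. The restricted data $X|_{D_j}$ then fit the single-patch model, and Theorem \ref{thm:single-consistency} applies with sample size $|D_j|$. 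Its hypothesis $|D_j|^{1-\alpha_j}c^2\delta_j^2\to\infty$ is what \eqref{eq:indv-I_j-tradeoff} encodes once $|D_j|\lesssim |I_j|+n^{\alpha}\log^{d}n$ is substituted.

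Assumption \ref{asmp:doob} is inherited on subrectangles, though the $D_j$ are random. I would handle this by conditioning on the first-stage partition, or by taking the union over the polynomially many possible $D_j$ and using the $L_p$ maximal bound over all rectangles, which is uniform. Finally a union bound over $K\le C$ patches gives \eqref{eq:multpl-consistency}.

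\textbf{Main obstacle.} The delicate step is (b)–(c). It requires a cluster-probability bound for dependent block means: blocks are nearly independent only through the Gaussian coupling, and the threshold is set at a fixed $\kappa$ level rather than at a vanishing one. I would first try to show that, within a connected set of $m$ blocks, the Brownian increments are independent, so the joint exceedance probability factorizes. If the $\kappa$-level per-block probability proves too large for the union bound over connected sets, I would exploit the fact that $\mathbf Q$ is a maximum quantile: with probability $1-\kappa$ no null block exceeds it at all. That gives the statement up to $\kappa$, and letting $\kappa\to0$ slowly recovers convergence to one.
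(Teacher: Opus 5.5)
Your architecture matches the paper's. Interior blocks get flagged, null exceedances can neither form a component that survives the size cut nor bridge two patches, and each $D_j$ is then handed to the single-patch analysis.

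\textbf{Stage one: different tools, mostly to your advantage.}
\begin{itemize}
\item \emph{Interior blocks.} The paper applies Markov's inequality with Assumption~\ref{asmp:doob} block by block and sums over the $|I_j|/n^{\alpha}$ interior blocks. That gives a bound of order $\frac{|I_j|}{n^{\alpha}}(n^{\alpha}\delta_j^2)^{-p/2}$, which need not vanish for large patches under \eqref{eq:jump-min} alone. Your uniform argument (all block means within $O(\mathbf{Q})$ of their expectations, via the coupling and the Gaussian maximum) needs exactly $\min_j|\delta_j|\gg\mathbf{Q}$.
\item \emph{Spurious components and merging.} The paper does not count connected sets. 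After coupling, the number of flagged null blocks is dominated by a binomial with $N\asymp n^{1-\alpha}$ trials and success probability $O(1/N)$, so more than $\log n$ of them occur with probability $\lesssim(\log n)^{-\log n}$. Merging is ruled out by projecting a shared component onto the separating axis $\nu^\star_{jk}$, which forces a flagged null block in each of the $\gtrsim\log n$ separating slabs. Your lattice-animal/path bound costs a counting lemma. In exchange it only needs the per-block null exceedance probability to be polynomially small, so it also absorbs the $N^{o(1)}$ slack the coupling error creates at level $\mathbf{Q}$.
\item \emph{Your ``main obstacle'' is not one.} A null block exceeds $\mathbf{Q}$ with probability $p_0=1-(1-\kappa)^{1/N}\asymp 1/N$, up to that slack, so $N(3^d p_0)^m\to 0$ already for fixed $m\ge 2$. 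Drop the fallback $\kappa\to 0$: it would prove a statement at a vanishing level, not the theorem with $\kappa$ fixed.
\end{itemize}

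\textbf{Stage two: a genuine error in the value of $c$ on $D_j$.} The quantity entering Theorem~\ref{thm:single-consistency} on $D_j$ is $c_{D_j}=\min\{|I_j|/|D_j|,\,1-|I_j|/|D_j|\}$, not $|I_j|/|D_j|$.
\begin{itemize}
\item The margin around $C_j$ is only of order $L_k\log n$. When the sides of $I_j$ are much longer than $n_k^{\alpha}\log n$ (e.g.\ a macroscopic patch), $D_j$ is a thin shell around $I_j$. Then $1-|I_j|/|D_j|\asymp\max_k n_k^{\alpha}\log n/|I_j|_k\to 0$ polynomially; this is the ``nearly full'' side of the duality remark.
\item So your claim that $c_{D_j}$ is bounded below up to logarithms fails there, as does your check of Assumption~\ref{asmp:away-from-boundary} inside $D_j$. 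An upper bound on $|D_j|$ cannot certify a hypothesis that needs a lower bound on $c_{D_j}$.
\item The fix is to lower-bound the shell. By \eqref{eq:min-size}, $|D_j\setminus I_j|\gtrsim L_k\log n\prod_{l\neq k}|I_j|_l\gtrsim n^{\alpha}\mathrm{polylog}(n)$. On the branch $c_{D_j}=1-|I_j|/|D_j|$, where $|D_j|\le 2|I_j|$, this gives $|D_j|^{1-\alpha_j}c_{D_j}^2\delta_j^2\gtrsim n^{2\alpha}\mathrm{polylog}(n)\,\delta_j^2/|I_j|^{1+\alpha_j}$.
\item That is the shape of \eqref{eq:indv-I_j-tradeoff} up to logarithmic factors; its $|I_j|^{1+\alpha_j}$ denominator is there for this branch. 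Your reasoning covers the other branch. Also note $r_{|D_j|,\delta_j}\asymp r_{n,\delta_j}$, since $\log|D_j|\asymp\log n$.
\end{itemize}

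\textbf{The randomness of $D_j$.} Neither of your two options works as stated.
\begin{itemize}
\item Conditioning on the first-stage output changes the noise law inside $D_j$, because that output is a function of the same field.
\item A union bound over polynomially many candidate domains fails, because Theorems~\ref{thm:epiconsistency}--\ref{thm:single-consistency} give only $O_{\IP}$ control.
\end{itemize}
Instead, decompose over the values of $D_j$, bounding $\IP(\text{failure}\cap\{D_j=D\})$ by the failure probability on the fixed rectangle $D$. On your stage-one event all interior blocks are flagged and no flagged null path of length $2$ exists. Hence each $\ell_k^j, r_k^j$ lies within $O(1)$ blocks of the truth, only $O(1)$ rectangles $D$ occur, and a finite union bound closes the argument. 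The paper instead compares with a fixed enlargement $D_j^\dagger\supseteq D_j$.
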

Theorem \ref{thm:multiple} ensures that Algorithm \ref{algo:multiple-subsample} accurately estimates the true number of anomalous patches while localizing each individual patch at the optimal rate $r_{n, \delta_j}$, matching the performance of the naive estimator. To the best of our knowledge, this establishes Algorithm~\ref{algo:multiple-subsample} as the \textit{only method capable of consistently localizing rectangular anomalies under spatial dependence, while potentially achieving an $O(n)$ computational complexity}.

The theoretical guarantees of Theorem \ref{thm:multiple} rely on specific technical conditions governing the choice of $\alpha$ in the first stage and the parameters $\{\alpha_j, \kappa_j\}_{j=1}^K$ in the second stage. We detail these requirements below. Throughout this discussion, we treat the true number of anomalies $K$ as fixed relative to $n$ and $\bb{n}$, and we condition on the event that $\hat{K}=K$.
\begin{remark}[Choice of first-stage block-size parameter $\alpha$]\label{rem:mult-alpha}
    The choice of $\alpha$ is closely tied to the true patch sizes $|I_j|$ and their mean-shifts $\delta_j$. Following Remark \ref{rem:single}, assume uniform dimensions $n_1 \asymp \ldots \asymp n_d \asymp n^{1/d}$, and suppose the relative boundaries $\tau_{1,k}^j, \tau_{2,k}^j$ are independent of $n$. Under these conditions, Assumption \ref{ass:min-sep} and \eqref{eq:min-size} are trivially satisfied for any $\alpha \in (0,1)$, and \eqref{eq:jump-min} simplifies to
    \begin{equation}
        \alpha \in \bigg( \max\bigg\{\frac{2}{q}, -2\frac{\min_{j\in [K] }\log |\delta_j|}{\log n}\bigg\}, 1 \bigg). \label{eq:range of alpha}
    \end{equation}
    Equation \eqref{eq:range of alpha} requires $q>2$ (from Assumption \ref{ass:fclt}) and $\min_{j\in [K]} |\delta_j| \gg n^{-1/2}$. This is an extremely mild lower bound, ensuring the anomalous patches remain discernible at the scale of the lattice $\Z^d$. As we detail next, this choice of $\alpha$ also fundamentally constrains the individual patch localizations in the second stage of Algorithm \ref{algo:multiple-subsample}.
\end{remark}
\begin{remark}[Choice of second-stage tuning parameters $\alpha_j$]
    Analyzing \eqref{eq:indv-I_j-tradeoff} under the conditions of Remark \ref{rem:mult-alpha}, the parameters $\alpha$ and $\alpha_j$ must satisfy $n^{2\alpha-1 -\alpha_j}\delta_j^2 \to \infty$. This condition is strictly stronger than the $n\delta^2 \to \infty$ requirement in Theorem \ref{thm:epiconsistency}, reflecting the fundamental trade-off necessary to achieve computational efficiency. Ignoring logarithmic factors, this requires
    $$ \alpha_j \in \bigg(0, 2\bigg(\alpha + \frac{\log |\delta_j|}{\log n}\bigg) - 1\bigg), $$
    which implicitly necessitates $\alpha > 1/2 - (\log |\delta_j|)/\log n$. Our empirical ablation studies, presented in \S\ref{ssc:simuablation}, highlight the robustness of SPLADE across different choices of $\alpha$ for various light-tailed settings,  provided the theoretical constraints are satisfied. To further align with Theorem \ref{thm:single-consistency}, assume $|\delta_j| \asymp n^{-\kappa_j/(2d)}$ for $\kappa_j \in (0, 2d)$, and let $\min_{j \in [K]} \kappa_j = 0$ (meaning at least one patch exhibits a constant-order mean shift). For sufficiently large $n$, we have $1/q \gg \max_{j\in [K]} \kappa_j/(2d \log n)$, which simplifies the feasible parameter range to:
    \begin{equation}
        \alpha \in (2/q, 1) \quad \text{and} \quad \alpha_j \in \bigg(0, 2\bigg(\alpha - \frac{\kappa_j}{2d} - \frac{1}{2}\bigg)\bigg). \label{eq:final-choice}
    \end{equation}
    While the $O(n)$ complexity of the first stage is independent of $\alpha$, the efficiency of the second stage relies heavily on $\alpha_j$. Having argued $\kappa_j \approx 0$ in \S\ref{se:intelligent-epidemic}, in light of \eqref{eq:final-choice} we choose $\alpha_j=1/2$, which yields a fully linear-time algorithm. 
\end{remark}
 
The discussion on estimating the parameters $\mu_0$ and $\sigma$ is deferred to Appendix \ref{ssc:musigma}.

\section{Performance Evaluation}\label{sec:simu}

In this section, we provide empirical evidence corroborating the established theoretical guarantees. We first provide a sensitivity analysis on the stability of SPLADE across different choices of tuning parameters. Subsequently, we compare it extensively against three baselines: DCART \citep{madrid2021lattice}, a fused-lasso approach denoted as TV \citep{tansey2015fast}, and, in selected settings, DPLS-SAD \citep{wang2025optimal}. An implementation of SPLADE is publicly available at \href{https://github.com/sohamb01/SPLADE}{https://github.com/sohamb01/SPLADE}. We evaluate computational efficiency, localization accuracy, and robustness across varying grid sizes and anomalous patch layouts, spatial dependence structures, and signal strengths. Performance is assessed using the following metrics: the mean number of detected patches, the empirical probability of correctly estimating the true number of patches, the Adjusted Rand Index (ARI), the average runtime per iteration (in seconds\footnote{ All computations are run on 13th Gen intel(R) Core(TM) i9-13900K}), and the normalized Hausdorff distance defined as follows: For $\mathcal G := \{1,\dots,N\}\times\{1,\dots,N\},
\Lambda_0 := \mathcal G \setminus \bigcup_{k=1}^K \Lambda_k,
\widehat{\Lambda}_0 := \mathcal G \setminus \bigcup_{k=1}^{\widehat K} \widehat{\Lambda}_k$,
let $\mathcal C := \{\Lambda_k : 0 \le k \le K,\ \Lambda_k \neq \varnothing\},
\qquad
\widehat{\mathcal C} := \{\widehat{\Lambda}_k : 0 \le k \le \widehat K,\ \widehat{\Lambda}_k \neq \varnothing\}.$
For \(A,B \subseteq \mathcal G\), define the Jaccard distance
$d_J(A,B) := |A \triangle B|/|A \cup B|$
with the convention \(d_J(\varnothing,\varnothing)=0\). The normalized two-sided Hausdorff distance between \(\mathcal C\) and \(\widehat{\mathcal C}\) is
\[
d_H(\mathcal C,\widehat{\mathcal C})
:=
\max\Biggl\{
\max_{C\in\mathcal C}\min_{\widehat C\in\widehat{\mathcal C}} d_J(C,\widehat C),
\;
\max_{\widehat C\in\widehat{\mathcal C}}\min_{C\in\mathcal C} d_J(\widehat C,C)
\Biggr\}.
\] All reported results are averaged over 100 independent replicates across diverse experimental conditions.

\subsection{Sensitivity analysis for SPLADE}\label{ssc:simuablation}
In this section, we consider the following setting.

\begin{itemize}
    \item \textit{Grid sizes ($N \times N$):} For ablation, $N=500,750$ and $1000$.
   \item \textit{\textbf{Configuration 1} of anomalous patches:}  Layout of 3 true patches (Figure \ref{fig:3patches}).
   \begin{figure}[htbp]
    \centering
    \begin{subfigure}[t]{0.42\linewidth}
        \centering
        \includegraphics[width=\linewidth]{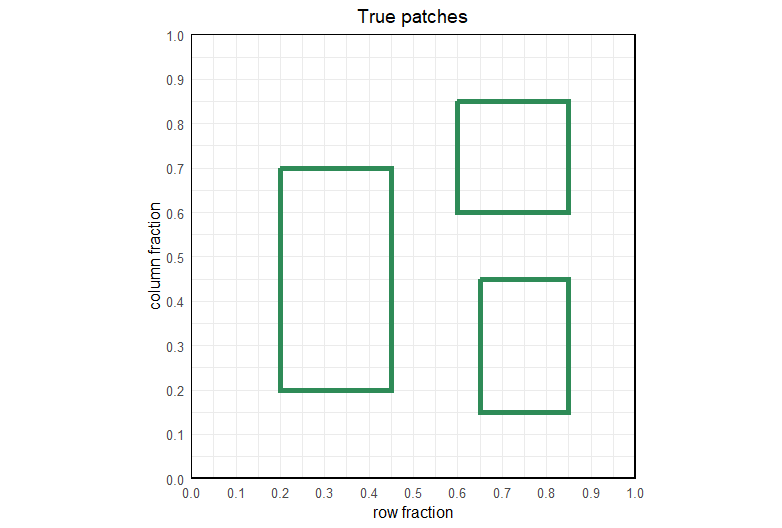}
        \caption{Individual jumps are $\delta_{\mu}$ (left), $\delta_{\mu}$ (top right), and $-\delta_{\mu}$ (bottom right)}
        \label{fig:3patches}
    \end{subfigure}
    \hfill
    \begin{subfigure}[t]{0.50\linewidth}
        \centering
        \includegraphics[width=0.76\linewidth]{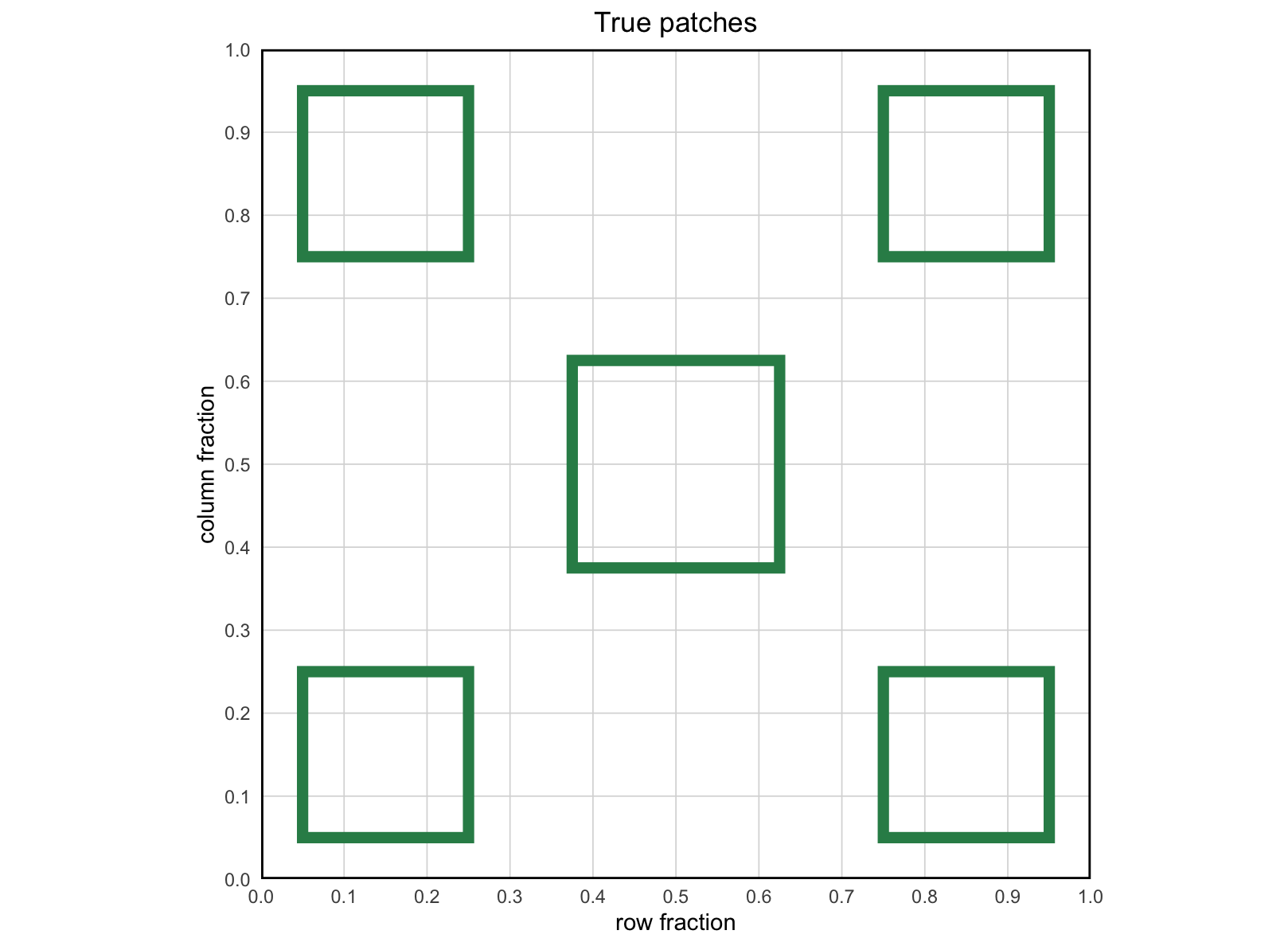}
        \caption{Individual jumps are $\delta_{\mu}$ (bottom left), $2\delta_{\mu}$ (top left),
        $3\delta_{\mu}$ (top right), $4\delta_{\mu}$ (bottom right), and $5\delta_{\mu}$ (center)}
        \label{fig:5patches}
    \end{subfigure}
    \caption{Illustration of anomalous patch configurations used for performance assessment.}
    \label{fig:multiple_patches}
\end{figure}   
    \item \textit{Spatial dependence structure:} We consider a SAR($\rho$) process defined as
\begin{equation}\label{eq:sar}
    \varepsilon_{\bb{i}}= \rho \sum_{\bb{j} \in \mathcal{N}(\bb{i})} w_{\bb{ij}}\varepsilon_{\bb{j}} + e_{\bb{i}},\end{equation}
 with $e_{\bb{i}} \stackrel{\mathrm{iid}}{\sim} N(0,1)$ and $w_{\bb{ij}}=\frac{\mathbf{1}\{\bb{j}\in\mathcal{N}(\bb{i})\}}{|\mathcal{N}(\bb{i})|},$ where neighborhood $\mathcal{N}(\bb{i})$ can have cardinality of 2, 3 or 4 depending on $\bb{i}$-th pixel's position at corner, edge or interior respectively. 
For the sensitivity analysis, we vary $\rho \in \{0.25,0.5 \}$, and keep the signal strength $\delta_\mu$ in the anomalous patches fixed at $1$. Some additional ablation studies for a non-linear distribution are deferred to Appendix \ref{ssc:simuablation_app}.
\end{itemize}
Across all experimental settings, we implement SPLADE (Algorithm \ref{algo:multiple-subsample}) with varying $\alpha \in \{0.4,0.5,0.6\}$, and we fix the second-stage tuning parameters (Algorithm \ref{algo:subsample}) to $\alpha_j=0.5$ for all $j \in [\hat{K}]$. The corresponding results are presented in Table~\ref{tab:splade_ablation_config1_sar}. SPLADE seems to perform equally well for all three choices of $\alpha$. This result is not surprising based on \eqref{eq:range of alpha} and \eqref{eq:final-choice}, whose prescribed ranges are being further widened by the light-tailed Gaussian distribution of the SAR($\rho$) errors. Importantly, our theoretical results require only the minimal assumption of a finite $p$-th moment, and so the theoretically motivated choices of $\alpha$ may, prima facie, seem somewhat conservative. Nevertheless, SPLADE is robust across a wide range of $\alpha$, highlighting its stability and further strengthening the consistent improvements in both accuracy and speed over competitors in \S\ref{ssc:simucomparative}.

\begin{table}[!htbp]
\centering
\scriptsize
\setlength{\tabcolsep}{4.5pt}
\renewcommand{\arraystretch}{1.15}
\caption{Sensitivity analysis of SPLADE (on $\alpha$ from Stage 1 of Algorithm \ref{algo:multiple-subsample}) for Config.~1 with $\delta_{\mu}=1$ under SAR $\rho$. Each cell reports average over 100 replicates in the order $\alpha=0.4$ / $\alpha=0.5$ / $\alpha=0.6$.}
\label{tab:splade_ablation_config1_sar}
\begin{tabular}{|c| c| c| c| c| c|}
\hline
$N$ & SAR($\rho$) & $\hat K$ mean & $I(\hat K=3)$ & ARI & Hausdorff \\
\hline
\multirow{2}{*}{500}
& 0.25
& 3.01 / 3.02 / 3.00
& 0.99 / 0.98 / 1.00
& 0.986 / 0.975 / 0.977
& 0.051 / 0.090 / 0.063 \\
\cline{2-6}
& 0.50
& 3.01 / 3.02 / 3.00
& 0.99 / 0.98 / 1.00
& 0.977 / 0.967 / 0.969
& 0.078 / 0.111 / 0.085 \\
\hline
\multirow{2}{*}{750}
& 0.25
& 3.00 / 3.00 / 3.00
& 1.00 / 1.00 / 1.00
& 0.982 / 0.985 / 0.983
& 0.034 / 0.044 / 0.050 \\
\cline{2-6}
& 0.50
& 3.00 / 3.00 / 3.00
& 1.00 / 1.00 / 1.00
& 0.979 / 0.984 / 0.977
& 0.049 / 0.048 / 0.065 \\
\hline
\multirow{2}{*}{1000}
& 0.25
& 3.00 / 3.00 / 2.98
& 1.00 / 1.00 / 0.96
& 0.993 / 0.996 / 0.980
& 0.018 / 0.014 / 0.059 \\
\cline{2-6}
& 0.50
& 3.00 / 3.00 / 2.97
& 1.00 / 1.00 / 0.95
& 0.991 / 0.991 / 0.976
& 0.024 / 0.033 / 0.071 \\
\hline
\end{tabular}
\end{table}

\subsection{Comparative studies}\label{ssc:simucomparative}

In this section, we provide a detailed comparison against other baseline approaches across diverse settings. In particular, we consider the following.

\begin{itemize}
    \item \textit{Grid sizes ($N \times N$):} Owing to scalability and other restrictions in the comparative study (eg. the restriction of $2^k \times 2^k$ for DCART), we choose $N=256$ and $512$.
    \item \textit{\textbf{Configuration 2} of anomalous patches:} In addition to Configuration 1 in \S\ref{ssc:simuablation}, we also consider a layout of 5 true patches (Figure \ref{fig:5patches}), closely mirroring Scenario 4 from DCART proposed in \citet{madrid2021lattice}.

    \item \textit{Spatial dependence structure:} Similar to \S\ref{ssc:simuablation}, we consider SAR($\rho$) process with 
 $\rho \in \{0.04, 0.4, 0.8\}$. Here, the $\rho=0.04$ approximates the i.i.d. setting assumed by DCART. Some additional simulations for a non-linear distribution are deferred to Appendix \ref{se:appendix simulation}.
    \item \textit{Signal strengths:} We vary the signal strength $\delta_{\mu} \in \{0.2, 0.4, 0.6, 0.8, 1\}$ for an exhaustive picture of the performance of SPLADE in both low and high SNR regimes.
\end{itemize}

\textcolor{black}{For this section, across all experimental settings, we implement SPLADE (Algorithm \ref{algo:multiple-subsample}) with the first-stage parameter set to $\alpha=0.5$, and we fix the second-stage tuning parameters (Algorithm \ref{algo:subsample}) to $\alpha_j=0.5$ for all $j \in [\hat{K}]$. } For the baseline methods, DCART and TV, we define the penalty parameter grids for $\lambda$ as $\{5, 6.78, \ldots, 30\}$ and $\{10^c : c \in \{-1, -0.785, \ldots, 3\}\}$, respectively, following the default recommendations in \citet{madrid2021lattice}.

\newcommand{\Khat}{\hat K}
\newcommand{\Pfive}{\Pr(\Khat=5)}
\newcommand{\Haus}{\text{Haus}_1}

\begin{table}[htbp]
\centering
\scriptsize
\setlength{\tabcolsep}{4.2pt}
\renewcommand{\arraystretch}{1.18}
\caption{Comparison of DCART, SPLADE and TV across grid sizes, jump sizes, and SAR $\rho$ for Config.\ 1 (Fig~\ref{fig:3patches}). Each cell reports average over 100 replicates in the order DCART / SPLADE / TV.}
\label{tab:grid256_512_rho_stacked_k3}
\resizebox{\textwidth}{!}{%
\begin{tabular}{|c|c|c|c|c|c|}
\hline
Jump & $\hat K$ & $I(\hat K=3)$ & ARI & Hausdorff distance & time/iter (sec) \\
\hline

\multicolumn{6}{|c|}{\textbf{Grid = $256 \times 256$}} \\
\hline
\multicolumn{6}{|c|}{\textbf{$\rho = 0.04$}} \\
\hline
0.2 & 4.78 / \textbf{3.66} / 1.34 & 0.09 / \textbf{0.43} / 0.03 & 0.187 / \textbf{0.490} / 0.006 & \textbf{0.86} / 0.86 / 0.96 & 8.88 / \textbf{2.58} / 4.92 \\
\hline
0.4 & 3.72 / \textbf{3.00} / 1.18 & 0.25 / \textbf{1.00} / 0.00 & 0.353 / \textbf{0.702} / 0.026 & 0.93 / \textbf{0.52} / 0.95 & 8.81 / \textbf{4.24} / 5.36 \\
\hline
0.6 & 2.83 / \textbf{3.00} / 1.31 & 0.31 / \textbf{1.00} / 0.03 & 0.275 / \textbf{0.792} / 0.057 & 0.95 / \textbf{0.38} / 0.95 & 8.85 / \textbf{5.05} / 5.87 \\
\hline
0.8 & 2.63 / \textbf{3.00} / 1.42 & 0.34 / \textbf{1.00} / 0.01 & 0.249 / \textbf{0.866} / 0.097 & 0.96 / \textbf{0.25} / 0.96 & 8.99 / \textbf{6.04} / 6.18 \\
\hline
1.0 & \textbf{2.99} / 2.99 / 1.38 & 0.37 / \textbf{0.99} / 0.04 & 0.274 / \textbf{0.886} / 0.140 & 0.97 / \textbf{0.20} / 0.95 & 8.98 / \textbf{6.37} / 6.72 \\
\hline
\multicolumn{6}{|c|}{\textbf{$\rho = 0.40$}} \\
\hline
0.2 & 7.74 / \textbf{1.68} / 1.32 & 0.04 / \textbf{0.12} / 0.04 & 0.040 / \textbf{0.108} / 0.005 & 0.99 / \textbf{0.95} / 0.95 & 16.38 / \textbf{1.21} / 8.85 \\
\hline
0.4 & 7.02 / \textbf{3.07} / 1.39 & 0.09 / \textbf{0.93} / 0.05 & 0.052 / \textbf{0.685} / 0.022 & 0.99 / \textbf{0.58} / 0.96 & 8.93 / \textbf{3.77} / 5.60 \\
\hline
0.6 & 6.21 / \textbf{3.00} / 1.37 & 0.07 / \textbf{1.00} / 0.04 & 0.013 / \textbf{0.789} / 0.049 & 1.00 / \textbf{0.39} / 0.96 & 9.00 / \textbf{4.89} / 6.07 \\
\hline
0.8 & 5.82 / \textbf{3.00} / 1.45 & 0.09 / \textbf{1.00} / 0.03 & 0.022 / \textbf{0.858} / 0.083 & 1.00 / \textbf{0.26} / 0.96 & 9.03 / \textbf{5.56} / 6.33 \\
\hline
1.0 & 5.26 / \textbf{3.00} / 1.40 & 0.14 / \textbf{1.00} / 0.05 & 0.028 / \textbf{0.892} / 0.123 & 1.00 / \textbf{0.20} / 0.95 & 9.12 / \textbf{6.04} / 6.83 \\
\hline
\multicolumn{6}{|c|}{\textbf{$\rho = 0.80$}} \\
\hline
0.2 & 28.95 / 0.10 / \textbf{2.00} & 0.00 / 0.00 / \textbf{0.24} & $-0.002$ / \textbf{0.002} / 0.001 & 1.00 / \textbf{0.94} / 0.98 & 36.00 / \textbf{0.14} / 6.09 \\
\hline
0.4 & 28.77 / 0.79 / \textbf{2.15} & 0.00 / 0.04 / \textbf{0.31} & $-0.002$ / \textbf{0.049} / 0.008 & 1.00 / \textbf{0.94} / 0.98 & 36.05 / \textbf{0.37} / 6.43 \\
\hline
0.6 & 28.51 / \textbf{2.92} / 2.00 & 0.00 / \textbf{0.30} / 0.24 & $-0.002$ / \textbf{0.289} / 0.019 & 1.00 / \textbf{0.93} / 0.98 & 36.17 / \textbf{1.38} / 6.60 \\
\hline
0.8 & 28.25 / \textbf{3.66} / 2.03 & 0.00 / \textbf{0.43} / 0.22 & $-0.002$ / \textbf{0.656} / 0.034 & 1.00 / \textbf{0.75} / 0.98 & 36.20 / \textbf{3.17} / 6.82 \\
\hline
1.0 & 28.12 / \textbf{3.21} / 2.30 & 0.00 / \textbf{0.80} / 0.28 & $-0.003$ / \textbf{0.815} / 0.052 & 1.00 / \textbf{0.43} / 0.99 & 36.71 / \textbf{4.62} / 7.18 \\
\hline

\multicolumn{6}{|c|}{\textbf{Grid = $512 \times 512$}} \\
\hline
\multicolumn{6}{|c|}{\textbf{$\rho = 0.04$}} \\
\hline
0.2 & \textbf{3.60} / 5.71 / 1.10 & \textbf{0.33} / 0.04 / 0.01 & 0.348 / \textbf{0.369} / 0.003 & \textbf{0.92} / 0.97 / 0.98 & 68.31 / \textbf{9.34} / 31.20 \\
\hline
0.4 & 2.53 / \textbf{3.00} / 1.12 & 0.27 / \textbf{1.00} / 0.01 & 0.203 / \textbf{0.782} / 0.010 & 0.97 / \textbf{0.43} / 0.98 & 40.44 / \textbf{20.33} / 22.91 \\
\hline
0.6 & 2.04 / \textbf{3.00} / 1.30 & 0.23 / \textbf{1.00} / 0.04 & 0.211 / \textbf{0.891} / 0.022 & 0.95 / \textbf{0.23} / 0.97 & 40.79 / \textbf{26.08} / 25.13 \\
\hline
0.8 & 1.97 / \textbf{3.00} / 1.13 & 0.15 / \textbf{1.00} / 0.00 & 0.200 / \textbf{0.948} / 0.037 & 0.96 / \textbf{0.11} / 0.96 & 65.32 / 43.92 / \textbf{37.55} \\
\hline
1.0 & 2.17 / \textbf{3.00} / 1.25 & 0.24 / \textbf{1.00} / 0.01 & 0.306 / \textbf{0.967} / 0.054 & 0.93 / \textbf{0.06} / 0.95 & 37.92 / 27.52 / \textbf{23.73} \\
\hline
\multicolumn{6}{|c|}{\textbf{$\rho = 0.40$}} \\
\hline
0.2 & 9.57 / 0.06 / \textbf{1.25} & 0.01 / 0.00 / \textbf{0.02} & \textbf{0.107} / 0.002 / 0.002 & 1.00 / \textbf{0.94} / 0.98 & 68.99 / \textbf{0.36} / 31.79 \\
\hline
0.4 & 8.40 / \textbf{3.23} / 1.28 & 0.05 / \textbf{0.81} / 0.02 & 0.019 / \textbf{0.764} / 0.009 & 1.00 / \textbf{0.53} / 0.98 & 40.68 / \textbf{15.34} / 23.71 \\
\hline
0.6 & 6.99 / \textbf{3.00} / 1.24 & 0.08 / \textbf{1.00} / 0.02 & 0.036 / \textbf{0.877} / 0.021 & 1.00 / \textbf{0.28} / 0.97 & 40.84 / \textbf{22.12} / 24.99 \\
\hline
0.8 & 7.12 / \textbf{3.00} / 1.26 & 0.14 / \textbf{1.00} / 0.04 & 0.045 / \textbf{0.934} / 0.035 & 1.00 / \textbf{0.15} / 0.96 & 66.66 / 40.63 / \textbf{40.27} \\
\hline
1.0 & 6.36 / \textbf{3.00} / 1.32 & 0.13 / \textbf{1.00} / 0.05 & 0.061 / \textbf{0.963} / 0.052 & 1.00 / \textbf{0.07} / 0.95 & 38.13 / 26.40 / \textbf{25.07} \\
\hline
\multicolumn{6}{|c|}{\textbf{$\rho = 0.80$}} \\
\hline
0.2 & 104.32 / 0.00 / \textbf{1.56} & 0.00 / 0.00 / \textbf{0.04} & $-0.001$ / 0.000 / \textbf{0.001} & 1.00 / \textbf{0.94} / 0.98 & 565.48 / \textbf{0.42} / 49.88 \\
\hline
0.4 & 103.90 / 0.00 / \textbf{1.52} & 0.00 / 0.00 / \textbf{0.12} & $-0.001$ / 0.000 / \textbf{0.005} & 1.00 / \textbf{0.94} / 0.98 & 425.19 / \textbf{0.36} / 48.41 \\
\hline
0.6 & 103.82 / 0.52 / \textbf{1.70} & 0.00 / 0.00 / \textbf{0.12} & $-0.001$ / \textbf{0.026} / 0.013 & 1.00 / \textbf{0.94} / 0.98 & 437.93 / \textbf{0.71} / 52.72 \\
\hline
0.8 & 103.54 / 4.88 / \textbf{1.68} & 0.00 / \textbf{0.18} / 0.14 & $-0.001$ / \textbf{0.363} / 0.023 & 1.00 / \textbf{0.96} / 0.98 & 471.83 / \textbf{4.88} / 55.70 \\
\hline
1.0 & 102.66 / 4.44 / \textbf{1.66} & 0.00 / \textbf{0.18} / 0.12 & $-0.001$ / \textbf{0.901} / 0.035 & 1.00 / \textbf{0.66} / 0.97 & 409.01 / \textbf{13.68} / 56.61 \\
\hline
\end{tabular}%
}
\end{table}
\begin{table}[htbp]
\centering
\scriptsize
\setlength{\tabcolsep}{4.2pt}
\renewcommand{\arraystretch}{1.18}
\caption{Comparison of DCART, SPLADE and TV across grid sizes, jump sizes, and SAR $\rho$ for Config.\ 2 (5 patches). Each cell reports average over 100 replicates in the order DCART / SPLADE / TV.}
\label{tab:grid256_512_rho_stacked_k5}
\resizebox{\textwidth}{!}{%
\begin{tabular}{|c|c|c|c|c|c|}
\hline
Jump & $\hat K$ & $I(\hat K=5)$ & ARI & Hausdorff distance & time/iter (sec) \\
\hline

\multicolumn{6}{|c|}{\textbf{Grid = $256 \times 256$}} \\
\hline
\multicolumn{6}{|c|}{\textbf{$\rho = 0.04$}} \\
\hline
0.2 & 6.30 / \textbf{4.78} / 1.33 & 0.14 / \textbf{0.68} / 0.00 & 0.621 / \textbf{0.752} / 0.059 & 0.84 / \textbf{0.75} / 0.97 & 9.94 / \textbf{4.01} / 4.93 \\
\hline
0.4 & 5.72 / \textbf{5.00} / 1.54 & 0.34 / \textbf{1.00} / 0.00 & 0.797 / \textbf{0.877} / 0.194 & 0.89 / \textbf{0.40} / 0.98 & 9.69 / \textbf{4.99} / 5.38 \\
\hline
0.6 & 5.52 / \textbf{5.00} / 6.00 & 0.34 / \textbf{1.00} / 0.34 & 0.851 / \textbf{0.930} / 0.831 & 0.73 / \textbf{0.21} / 0.99 & 9.76 / 5.36 / \textbf{3.76} \\
\hline
0.8 & 5.86 / \textbf{5.00} / 6.72 & 0.27 / \textbf{1.00} / 0.18 & 0.878 / \textbf{0.951} / 0.852 & 0.73 / \textbf{0.13} / 0.99 & 9.67 / 5.65 / \textbf{4.00} \\
\hline
1.0 & 6.16 / \textbf{5.00} / 7.32 & 0.22 / \textbf{1.00} / 0.05 & 0.905 / \textbf{0.958} / 0.902 & 0.74 / \textbf{0.10} / 0.99 & 9.85 / 5.77 / \textbf{3.97} \\
\hline
\multicolumn{6}{|c|}{\textbf{$\rho = 0.40$}} \\
\hline
0.2 & 9.82 / \textbf{4.24} / 1.49 & 0.04 / \textbf{0.24} / 0.00 & 0.402 / \textbf{0.715} / 0.051 & 1.00 / \textbf{0.90} / 0.98 & 9.62 / \textbf{3.57} / 4.92 \\
\hline
0.4 & 8.91 / \textbf{5.01} / 1.68 & 0.04 / \textbf{0.95} / 0.01 & 0.528 / \textbf{0.865} / 0.172 & 1.00 / \textbf{0.50} / 0.98 & 9.79 / \textbf{4.85} / 5.47 \\
\hline
0.6 & 9.02 / \textbf{5.00} / 2.20 & 0.05 / \textbf{1.00} / 0.02 & 0.651 / \textbf{0.920} / 0.302 & 0.99 / \textbf{0.26} / 0.99 & 9.80 / \textbf{5.28} / 5.72 \\
\hline
0.8 & 9.18 / \textbf{5.00} / 7.01 & 0.02 / \textbf{1.00} / 0.10 & 0.668 / \textbf{0.944} / 0.834 & 1.00 / \textbf{0.16} / 0.99 & 9.71 / 5.57 / \textbf{4.07} \\
\hline
1.0 & 9.51 / \textbf{5.00} / 7.67 & 0.01 / \textbf{1.00} / 0.04 & 0.713 / \textbf{0.954} / 0.855 & 1.00 / \textbf{0.11} / 0.99 & 9.95 / 5.72 / \textbf{4.00} \\
\hline
\multicolumn{6}{|c|}{\textbf{$\rho = 0.80$}} \\
\hline
0.2 & 29.07 / \textbf{2.85} / 2.13 & 0.00 / \textbf{0.03} / 0.00 & $-0.002$ / \textbf{0.456} / 0.020 & 1.00 / \textbf{0.96} / 0.99 & 37.35 / \textbf{2.23} / 5.34 \\
\hline
0.4 & 28.72 / \textbf{4.13} / 2.29 & 0.00 / \textbf{0.18} / 0.01 & $-0.002$ / \textbf{0.738} / 0.078 & 1.00 / \textbf{0.92} / 0.99 & 37.34 / \textbf{3.86} / 5.83 \\
\hline
0.6 & 28.79 / \textbf{4.43} / 2.89 & 0.00 / \textbf{0.39} / 0.05 & $-0.001$ / \textbf{0.818} / 0.147 & 1.00 / \textbf{0.80} / 1.00 & 37.58 / \textbf{4.57} / 6.10 \\
\hline
0.8 & 28.83 / \textbf{4.74} / 3.82 & 0.00 / \textbf{0.68} / 0.19 & $-0.001$ / \textbf{0.865} / 0.203 & 1.00 / \textbf{0.61} / 1.00 & 38.00 / \textbf{4.96} / 6.40 \\
\hline
1.0 & 29.41 / \textbf{4.97} / 5.71 & 0.00 / \textbf{0.89} / 0.20 & 0.043 / \textbf{0.908} / 0.239 & 1.00 / \textbf{0.38} / 1.00 & 38.60 / \textbf{5.28} / 6.37 \\
\hline

\multicolumn{6}{|c|}{\textbf{Grid = $512 \times 512$}} \\
\hline
\multicolumn{6}{|c|}{\textbf{$\rho = 0.04$}} \\
\hline
0.2 & 5.86 / \textbf{4.89} / 1.22 & 0.30 / \textbf{0.67} / 0.00 & 0.729 / \textbf{0.822} / 0.023 & 0.82 / \textbf{0.78} / 0.97 & 36.43 / \textbf{18.76} / 20.76 \\
\hline
0.4 & 5.91 / \textbf{5.00} / 1.27 & 0.27 / \textbf{1.00} / 0.00 & 0.849 / \textbf{0.944} / 0.059 & 0.79 / \textbf{0.27} / 0.97 & 37.15 / 23.49 / \textbf{22.81} \\
\hline
0.6 & 6.03 / \textbf{5.00} / 1.39 & 0.25 / \textbf{1.00} / 0.00 & 0.895 / \textbf{0.969} / 0.080 & 0.78 / \textbf{0.14} / 0.97 & 36.94 / \textbf{24.09} / 24.46 \\
\hline
0.8 & 5.86 / \textbf{5.00} / 1.40 & 0.30 / \textbf{1.00} / 0.00 & 0.929 / \textbf{0.981} / 0.103 & 0.59 / \textbf{0.07} / 0.97 & 37.57 / \textbf{24.61} / 26.88 \\
\hline
1.0 & 5.87 / \textbf{5.00} / 1.43 & 0.19 / \textbf{1.00} / 0.00 & 0.940 / \textbf{0.986} / 0.162 & 0.47 / \textbf{0.04} / 0.97 & 39.75 / \textbf{28.11} / 30.21 \\
\hline
\multicolumn{6}{|c|}{\textbf{$\rho = 0.40$}} \\
\hline
0.2 & 11.58 / \textbf{4.10} / 1.24 & 0.05 / \textbf{0.10} / 0.00 & 0.428 / \textbf{0.776} / 0.022 & 1.00 / \textbf{0.95} / 0.97 & 36.81 / \textbf{16.70} / 21.00 \\
\hline
0.4 & 10.70 / \textbf{5.07} / 1.36 & 0.03 / \textbf{0.94} / 0.00 & 0.527 / \textbf{0.936} / 0.061 & 1.00 / \textbf{0.32} / 0.97 & 37.37 / \textbf{22.87} / 23.54 \\
\hline
0.6 & 9.67 / \textbf{5.00} / 1.46 & 0.03 / \textbf{1.00} / 0.00 & 0.629 / \textbf{0.966} / 0.087 & 0.98 / \textbf{0.16} / 0.97 & 37.15 / \textbf{24.02} / 24.92 \\
\hline
0.8 & 9.62 / \textbf{5.00} / 1.57 & 0.02 / \textbf{1.00} / 0.00 & 0.691 / \textbf{0.978} / 0.113 & 0.97 / \textbf{0.09} / 0.98 & 37.74 / \textbf{24.64} / 26.92 \\
\hline
1.0 & 9.53 / \textbf{5.00} / 1.53 & 0.00 / \textbf{1.00} / 0.00 & 0.731 / \textbf{0.985} / 0.168 & 1.00 / \textbf{0.04} / 0.98 & 39.96 / \textbf{27.99} / 31.05 \\
\hline
\multicolumn{6}{|c|}{\textbf{$\rho = 0.80$}} \\
\hline
0.2 & 103.40 / \textbf{2.25} / 1.72 & 0.00 / \textbf{0.02} / 0.00 & $-0.001$ / \textbf{0.360} / 0.014 & 1.00 / \textbf{0.96} / 0.98 & 404.87 / \textbf{4.33} / 42.81 \\
\hline
0.4 & 102.85 / \textbf{3.71} / 1.79 & 0.00 / \textbf{0.04} / 0.00 & $-0.001$ / \textbf{0.732} / 0.051 & 1.00 / \textbf{0.95} / 0.98 & 404.71 / \textbf{14.33} / 48.50 \\
\hline
0.6 & 103.18 / \textbf{4.04} / 1.92 & 0.00 / \textbf{0.04} / 0.00 & $-0.001$ / \textbf{0.845} / 0.095 & 1.00 / \textbf{0.94} / 0.99 & 419.63 / \textbf{20.06} / 53.21 \\
\hline
0.8 & 106.10 / \textbf{4.70} / 2.00 & 0.00 / \textbf{0.58} / 0.00 & $-0.001$ / \textbf{0.876} / 0.139 & 1.00 / \textbf{0.77} / 0.99 & 383.13 / \textbf{20.81} / 55.35 \\
\hline
1.0 & 112.32 / \textbf{5.20} / 2.16 & 0.00 / \textbf{0.82} / 0.00 & 0.019 / \textbf{0.963} / 0.187 & 1.00 / \textbf{0.28} / 0.99 & 317.32 / \textbf{25.55} / 51.02 \\
\hline
\end{tabular}%
}
\end{table}

Tables \ref{tab:grid256_512_rho_stacked_k3} and \ref{tab:grid256_512_rho_stacked_k5} summarize the comparative performance of DCART, SPLADE, and TV across a diverse range of experimental configurations. Notably, DPLS-SAD \citep{wang2025optimal} is excluded from these comprehensive evaluations due to its prohibitive computational runtime; in many of our experimental configurations, the algorithm failed to terminate within a practical time frame.

Regarding computational efficiency, SPLADE is consistently the fastest method across nearly all settings; in the rare instances where TV marginally outperforms it, TV incurs a severe cost in accuracy. While average iteration times naturally scale with grid size and spatial dependence, SPLADE maintains a distinct advantage in these more challenging scenarios, achieving speed-ups of up to 15--20x over DCART and 4x over TV. Further, while the 5-patch layout (Configuration 2) increases the computational burden for all methods, it concurrently highlights the most substantial relative speed improvements for SPLADE. It is important to note that DCART was implemented with an \texttt{Rcpp} accelerator, which makes the contrast in speed even more stark. 

In terms of statistical performance, SPLADE generally yields the most accurate estimation of the true number of patches, struggling only when the baseline jump size is exceptionally small. Strong spatial dependence ($\rho=0.8$) degrades the performance of all methods; however, as the jump size (and correspondingly, the signal-to-noise ratio) increases, SPLADE recovers its count accuracy much more rapidly than competing methods. A similar trend is evident in the Adjusted Rand Index (ARI). SPLADE comprehensively outperforms the competing methods, and it is the only approach whose ARI reliably approaches 1 under high dependence as the signal strength grows. Finally, evaluating localization accuracy via the normalized Hausdorff distance reveals that SPLADE's estimation error decreases significantly as the jump size increases, whereas the error rates of DCART and TV stagnate or exhibit only marginal improvements. 
%








\begin{table}[htbp]
\centering
\scriptsize
\setlength{\tabcolsep}{4.2pt}
\renewcommand{\arraystretch}{1.18}
\caption{Comparison of DPLS-SAD and SPLADE across grid sizes, jump sizes, and SAR $\rho$ for Config.\ 1 (3 patches). Each cell reports average over 100 replicates in the order DPLS-SAD / SPLADE.}
\label{tab:dpls_SPLADE_k3_stacked}
\resizebox{\textwidth}{!}{%
\begin{tabular}{|c|c|c|c|c|c|}
\hline
$\text{Jump}$ & $\hat K$ & $I(\hat K=3)$ & ARI & Hausdorff distance & time/iter (sec) \\
\hline

\multicolumn{6}{|c|}{\textbf{Grid = $64 \times 64$}} \\
\hline
\multicolumn{6}{|c|}{\textbf{$\rho = 0.2$}} \\
\hline
0.50 & 0.56 / \textbf{2.29} & 0.00 / \textbf{0.44} & 0.003 / \textbf{0.372} & 0.96 / \textbf{0.83} & 41.40 / \textbf{1.15} \\
\hline
0.75 & 0.69 / \textbf{2.77} & 0.00 / \textbf{0.75} & 0.006 / \textbf{0.534} & 0.96 / \textbf{0.64} & 38.10 / \textbf{0.72} \\
\hline
1.00 & 0.76 / \textbf{2.40} & 0.00 / \textbf{0.50} & 0.010 / \textbf{0.511} & 0.96 / \textbf{0.71} & 43.51 / \textbf{2.42} \\
\hline
\multicolumn{6}{|c|}{\textbf{$\rho = 0.4$}} \\
\hline
0.50 & 0.50 / \textbf{1.51} & 0.00 / \textbf{0.09} & 0.003 / \textbf{0.239} & 0.95 / \textbf{0.90} & 40.77 / \textbf{0.64} \\
\hline
0.75 & 0.63 / \textbf{2.69} & 0.00 / \textbf{0.67} & 0.005 / \textbf{0.474} & 0.96 / \textbf{0.73} & 37.74 / \textbf{0.59} \\
\hline
1.00 & 0.71 / \textbf{2.76} & 0.00 / \textbf{0.76} & 0.009 / \textbf{0.568} & 0.96 / \textbf{0.60} & 38.25 / \textbf{0.80} \\
\hline
\multicolumn{6}{|c|}{\textbf{$\rho = 0.6$}} \\
\hline
0.50 & 0.36 / \textbf{0.67} & 0.00 / \textbf{0.01} & 0.002 / \textbf{0.088} & 0.95 / \textbf{0.92} & 38.63 / \textbf{0.30} \\
\hline
0.75 & 0.50 / \textbf{1.80} & 0.00 / \textbf{0.15} & 0.004 / \textbf{0.306} & 0.95 / \textbf{0.88} & 40.15 / \textbf{0.86} \\
\hline
1.00 & 0.69 / \textbf{2.58} & 0.00 / \textbf{0.57} & 0.008 / \textbf{0.484} & 0.96 / \textbf{0.77} & 13.35 / \textbf{0.24} \\
\hline

\multicolumn{6}{|c|}{\textbf{Grid = $128 \times 128$}} \\
\hline
\multicolumn{6}{|c|}{\textbf{$\rho = 0.2$}} \\
\hline
0.50 & 0.00 / \textbf{3.02} & 0.00 / \textbf{0.96} & 0.000 / \textbf{0.597} & 0.93 / \textbf{0.64} & 121.46 / \textbf{3.87} \\
\hline
0.75 & 0.00 / \textbf{3.00} & 0.00 / \textbf{0.98} & 0.000 / \textbf{0.749} & 0.93 / \textbf{0.42} & 168.22 / \textbf{2.89} \\
\hline
1.00 & 0.00 / \textbf{2.98} & 0.00 / \textbf{0.96} & 0.000 / \textbf{0.820} & 0.93 / \textbf{0.31} & 110.97 / \textbf{2.49} \\
\hline
\multicolumn{6}{|c|}{\textbf{$\rho = 0.4$}} \\
\hline
0.50 & 0.00 / \textbf{3.02} & 0.00 / \textbf{0.70} & 0.000 / \textbf{0.526} & 0.93 / \textbf{0.75} & 109.54 / \textbf{1.35} \\
\hline
0.75 & 0.00 / \textbf{3.00} & 0.00 / \textbf{0.98} & 0.000 / \textbf{0.736} & 0.93 / \textbf{0.45} & 167.38 / \textbf{2.71} \\
\hline
1.00 & 0.00 / \textbf{3.00} & 0.00 / \textbf{0.98} & 0.000 / \textbf{0.808} & 0.93 / \textbf{0.32} & 167.80 / \textbf{3.14} \\
\hline
\multicolumn{6}{|c|}{\textbf{$\rho = 0.6$}} \\
\hline
0.50 & 0.00 / \textbf{1.79} & 0.00 / \textbf{0.23} & 0.000 / \textbf{0.240} & 0.93 / \textbf{0.93} & 126.89 / \textbf{1.30} \\
\hline
0.75 & 0.00 / \textbf{3.12} & 0.00 / \textbf{0.80} & 0.000 / \textbf{0.649} & 0.93 / \textbf{0.64} & 69.53 / \textbf{0.96} \\
\hline
1.00 & 0.00 / \textbf{3.01} & 0.00 / \textbf{0.97} & 0.000 / \textbf{0.782} & 0.93 / \textbf{0.38} & 160.13 / \textbf{2.71} \\
\hline
\end{tabular}%
}
\end{table}

Table \ref{tab:dpls_SPLADE_k3_stacked} compares DPLS-SAD and SPLADE on a reduced $64 \times 64$ grid, a necessary constraint to bypass the severe computational bottlenecks DPLS-SAD encounters on larger domains. Because the original code is unavailable, we evaluated DPLS-SAD using a custom implementation based directly on the authors' pseudocode. For this targeted experiment, we test dependence levels $\rho \in \{0.2, 0.4, 0.6\}$ and jump sizes $\delta_{\mu} \in \{0.5, 0.75, 1\}$. 

The trends across all accuracy metrics align closely with our broader findings against DCART and TV. Computationally, SPLADE achieves massive speedups of 50 to 100$\times$ over our pure \texttt{R} implementation of DPLS-SAD (noting that, unlike the provided DCART package, our DPLS-SAD implementation lacks \texttt{Rcpp} acceleration). We further contextualize this stark difference in time complexity in the real data analysis (Section \ref{sec:data}).

 \ignore{
 \textcolor{red}{Do we need this paragraph? Or we can delegate it to the Conclusions section? \\ 
 Finally, we also attempted comparing with the methods presented in \cite{kirch2025scan} as they dealing with spatially $m-dependent$ scan statistics. However, since their focus is on testing existence of fissure or bubbles, their methods yield poor results as they presently are and need significant modifications for axis-aligned rectangles to actuate a fair comparison with our method and thus we decide to not present the comparative performance here. }
}

\section{Real-world data application: video surveillance footage}\label{sec:data}

\noindent In this section, we demonstrate the practical utility of our proposed method through the analysis of video surveillance footage. Specifically, we apply SPLADE to capture the spatial dynamics of two individuals meeting, evaluating the method's resolution and localization accuracy as the subjects physically approach one another. An additional real-world application concerning anomaly detection in fiber systems is deferred to Appendix \ref{se:appendix real data}.

The CAVIAR project\footnote{EC Funded CAVIAR project/IST 2001 37540, available at: http://homepages.inf.ed.ac.uk/rbf/CAVIAR/.} serves as a foundational benchmark in the field of public surveillance. It provides staged indoor video sequences featuring realistic scenarios, frame-level annotations, and semantically labeled behaviors such as walking alone, meeting, and window shopping. Due to its high-quality ground truth, the dataset has been widely adopted for the reproducible evaluation of detection, tracking and high-level activity analysis \citep{Fisher2004PETS04,CAVIARLabelSchema2005}. Further, it has significantly influenced the development of context-aware perception for ambient intelligence and human-centered video understanding \citep{CrowleyReignier2003VSCA}. 

Subsequent research has utilized CAVIAR for diverse tasks, including short-term activity recognition \citep{RibeiroSantosVictor2005HAREM}, anomalous trajectory detection \citep{SillitoFisher2008BMVC}, and multi-target tracking with social grouping cues \citep{QinShelton2012CVPR} and change point analysis \citep{bai2020multiple}. It remains a staple benchmark in contemporary surveys of surveillance-oriented activity recognition \citep{Chaquet2013Survey}. In this study, we focus specifically on the clip \emph{"Two people meet and walk together."} This scenario provides a natural testbed for our methodology, as the interaction between individuals can be effectively modeled and localized as axis-aligned anomalous spatial patches within the video frames. The selected CAVIAR sequences were recorded at the entrance lobby of the INRIA Labs in Grenoble, France, using a wide-angle camera. The footage was captured at half-resolution PAL quality ($384 \times 288$ pixels) at 25 frames per second and compressed via MPEG2. Our analysis focuses on 501 frames 1000-1550, which we process in \textsf{R} using the \texttt{readJPEG} function from the \texttt{jpeg} package. 

This yields a $288 \times 384 \times 3$ array of normalized pixel intensities in $[0, 1]$, from which the red, green, and blue channels are extracted as distinct matrices. For clarity of presentation in this paper, all images are shown in their transposed orientation. To isolate motion-driven foreground variations from the static background, we center each channel by subtracting a baseline mean image, calculated by averaging frames 1000--1150. While this centering approach is similar in spirit to the preprocessing in \citet{patra2020semi}, we maintain the spatial matrix structure of the data rather than vectorizing the RGB channels and assuming inependence across co-ordinates, thereby preserving the inherent spatial dependence across coordinates. 
\begin{wrapfigure} 
{r}{0.4\linewidth}
    \centering
\includegraphics[width=1\linewidth]{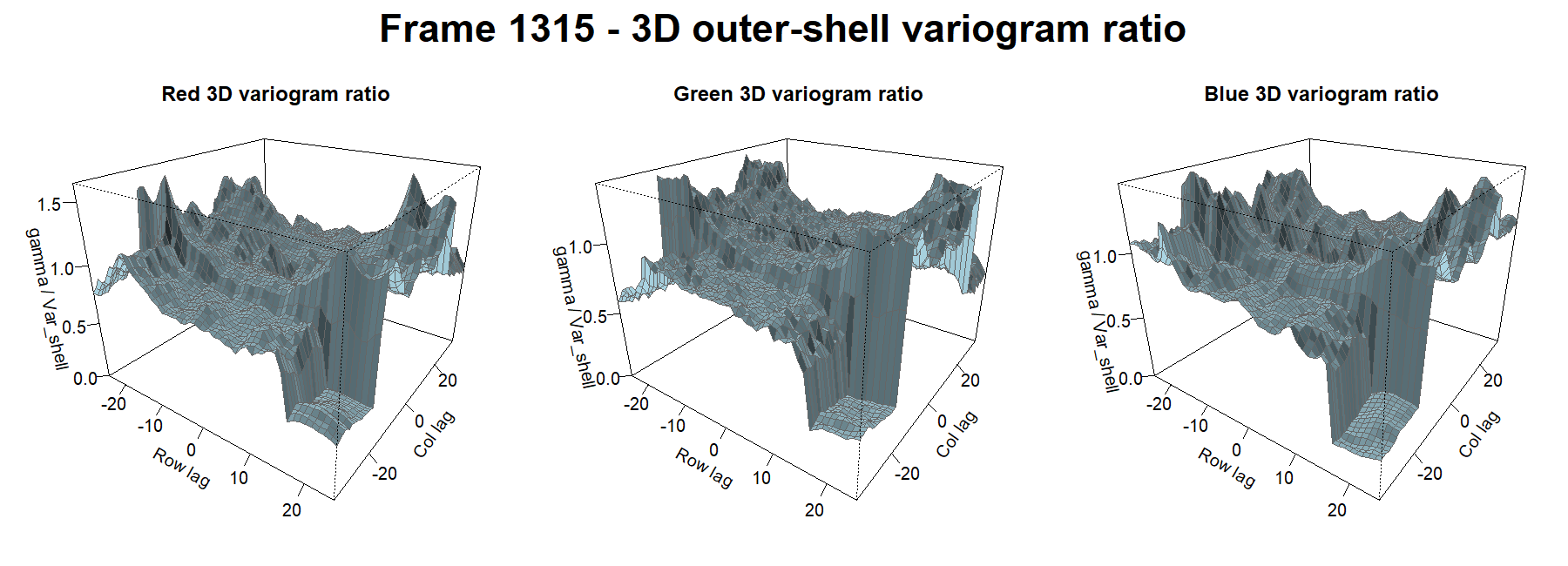}
    \caption{Variogram in both directions compared to $\gamma(\textbf{0})$ for Frame 1315.}
    \label{fig:variogram}
\end{wrapfigure}
In our support, we found significant spatial correlation as we exhibit a variogram for frame 1315 in Figure \ref{fig:variogram} which also establishes a need for developing a method that can handle spatial dependence. Our findings about detected patch boundaries are summarized in Figure \ref{fig:rgb_detection_frames}.

\begin{figure}[!htbp]
    \centering
    \captionsetup[subfigure]{font=small}

    \begin{subfigure}[t]{0.45\textwidth}
        \centering
        \includegraphics[height=0.155\textheight,keepaspectratio]{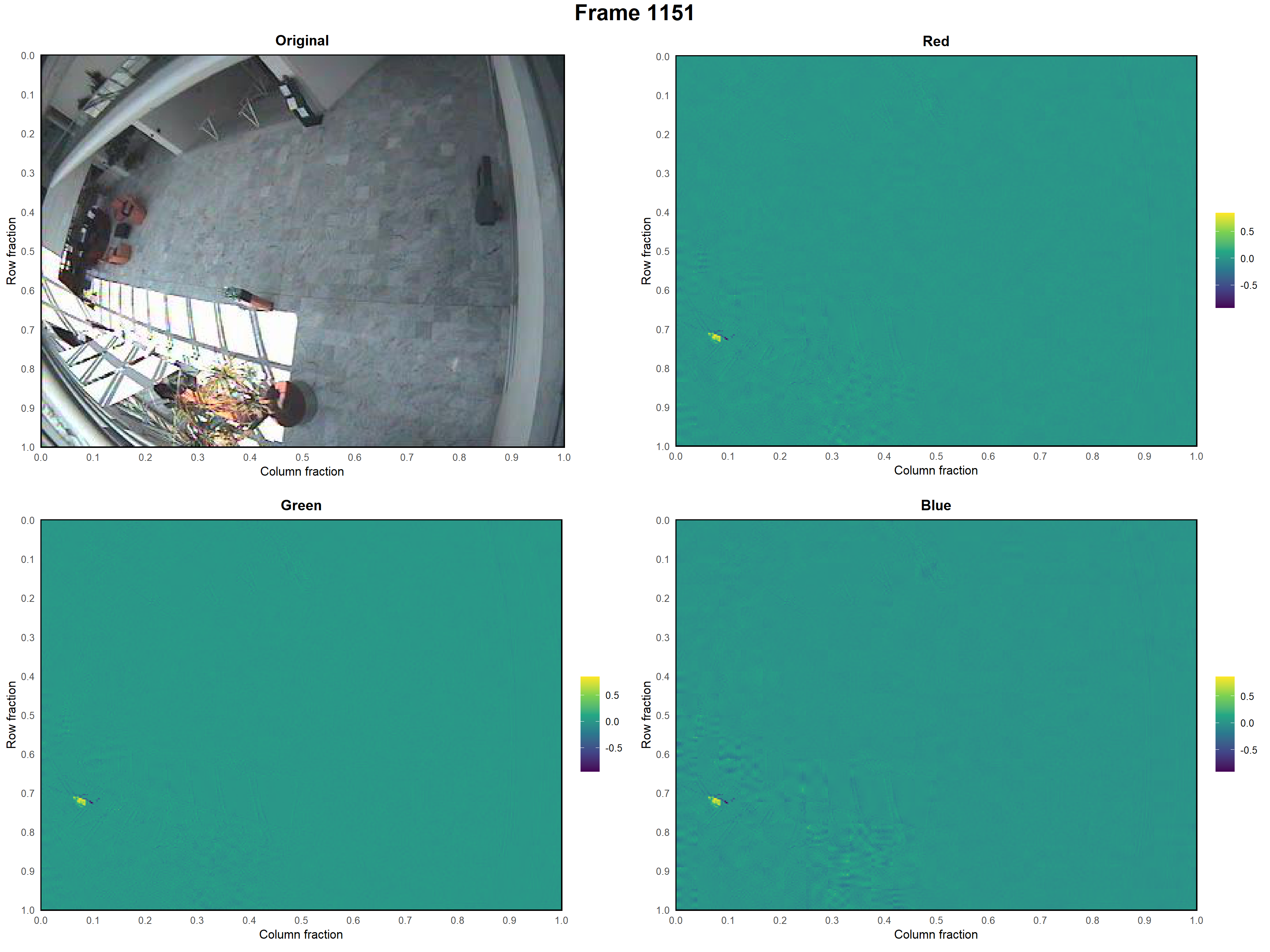}
        \caption{1151: No one in the image}
    \end{subfigure}
    \hfill
    \begin{subfigure}[t]{0.45\textwidth}
        \centering
        \includegraphics[height=0.155\textheight,keepaspectratio]{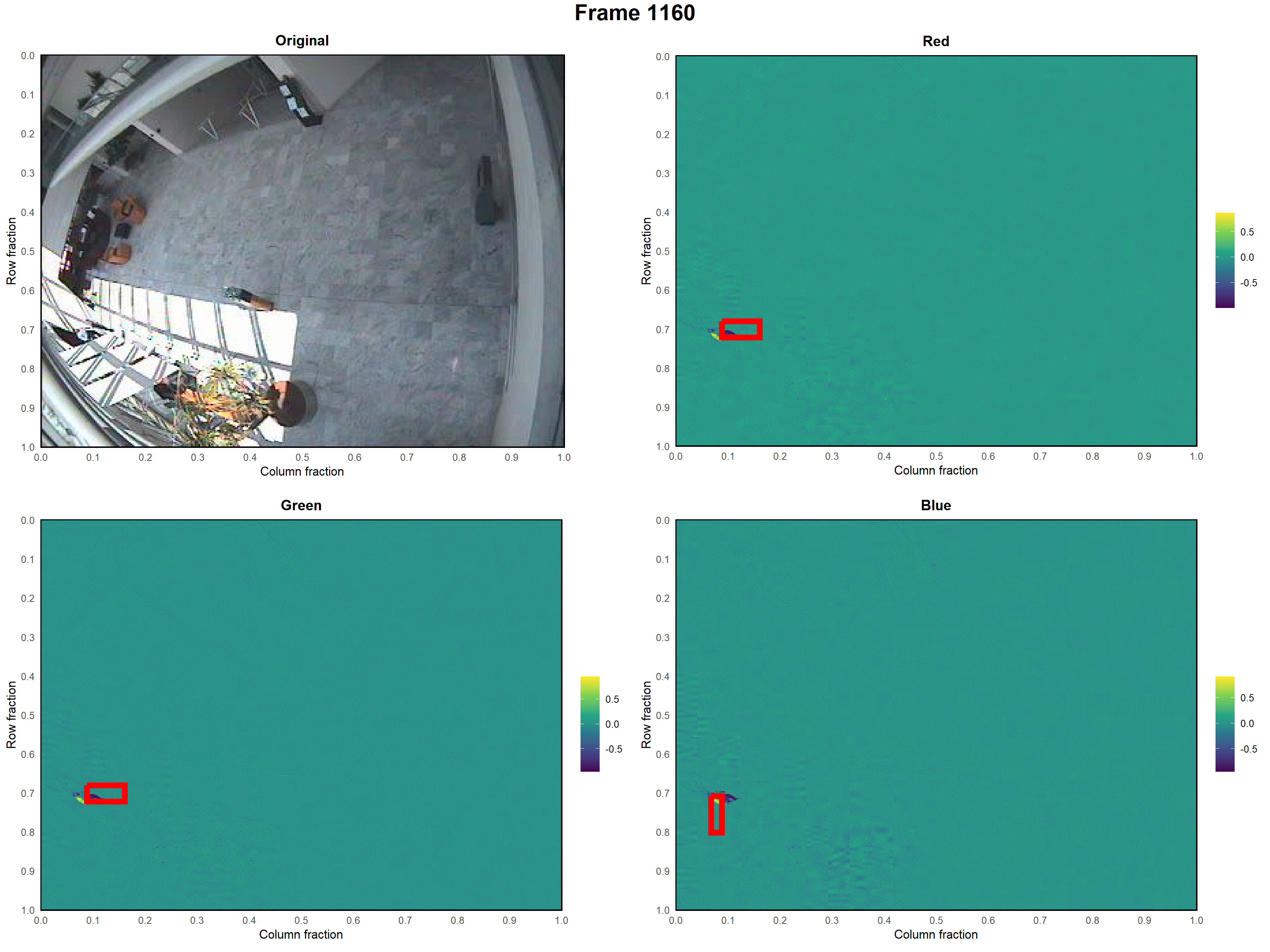}
        \caption{1160: One person in the image}
    \end{subfigure}

    \begin{subfigure}[t]{0.45\textwidth}
        \centering
        \includegraphics[height=0.155\textheight,keepaspectratio]{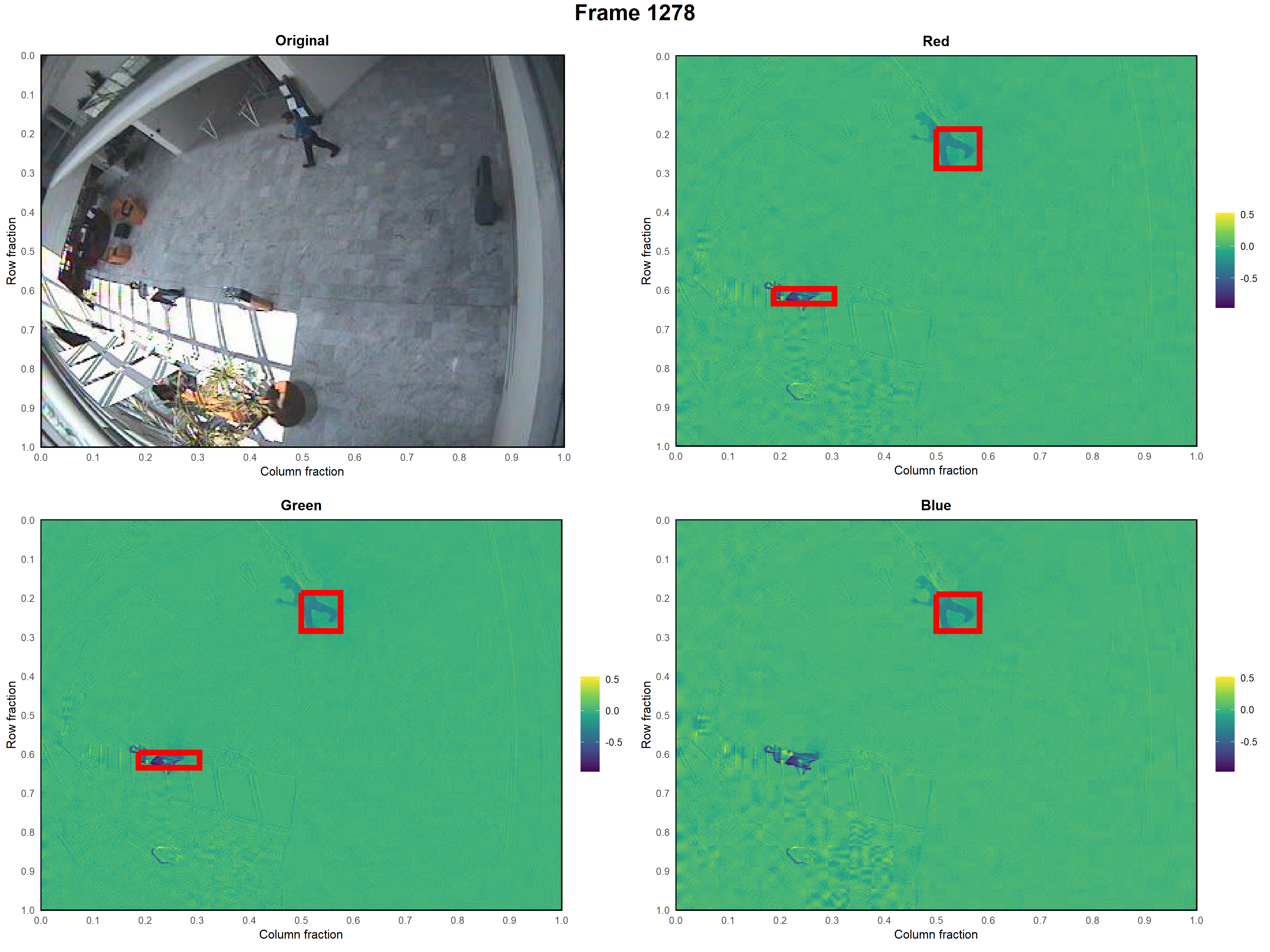}
        \caption{1278: Second person in the image}
    \end{subfigure}
    \hfill
    \begin{subfigure}[t]{0.45\textwidth}
        \centering
        \includegraphics[height=0.155\textheight,keepaspectratio]{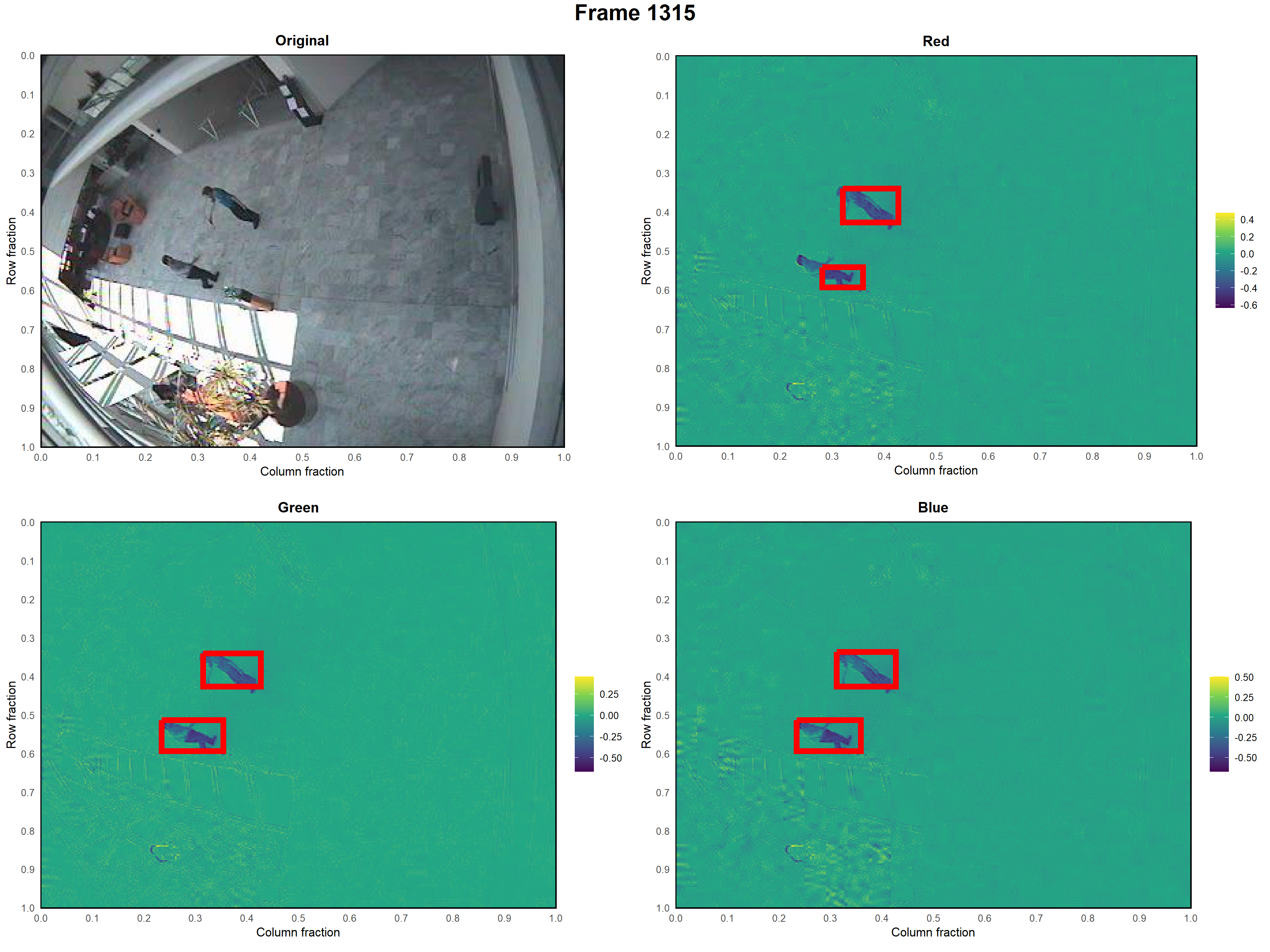}
        \caption{1315: Two persons quite close}
    \end{subfigure}

    \begin{subfigure}[t]{0.45\textwidth}
        \centering
        \includegraphics[height=0.155\textheight,keepaspectratio]{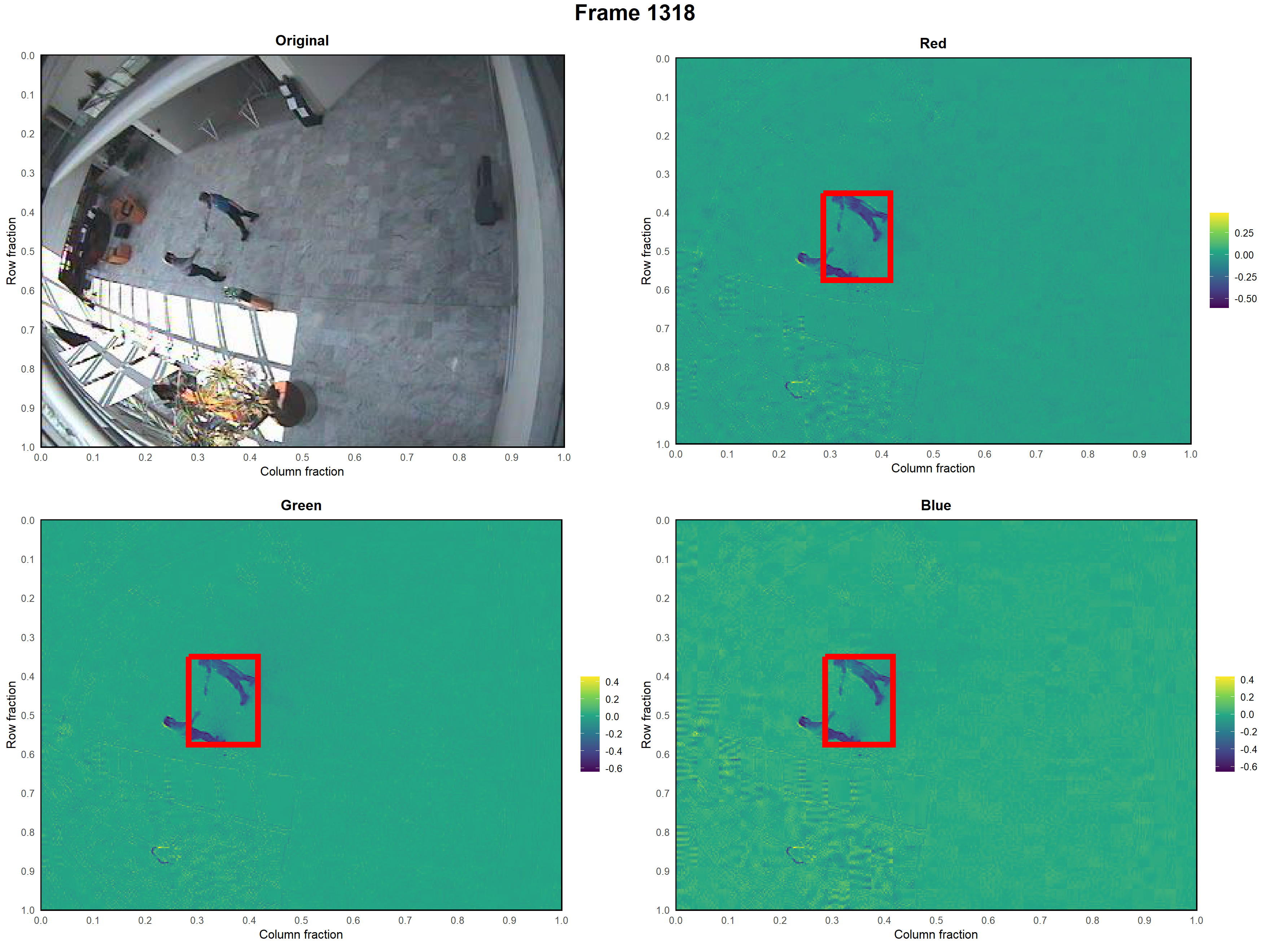}
        \caption{1318: So close that one box is detected}
    \end{subfigure}
    \hfill
    \begin{subfigure}[t]{0.45\textwidth}
        \centering
        \includegraphics[height=0.155\textheight,keepaspectratio]{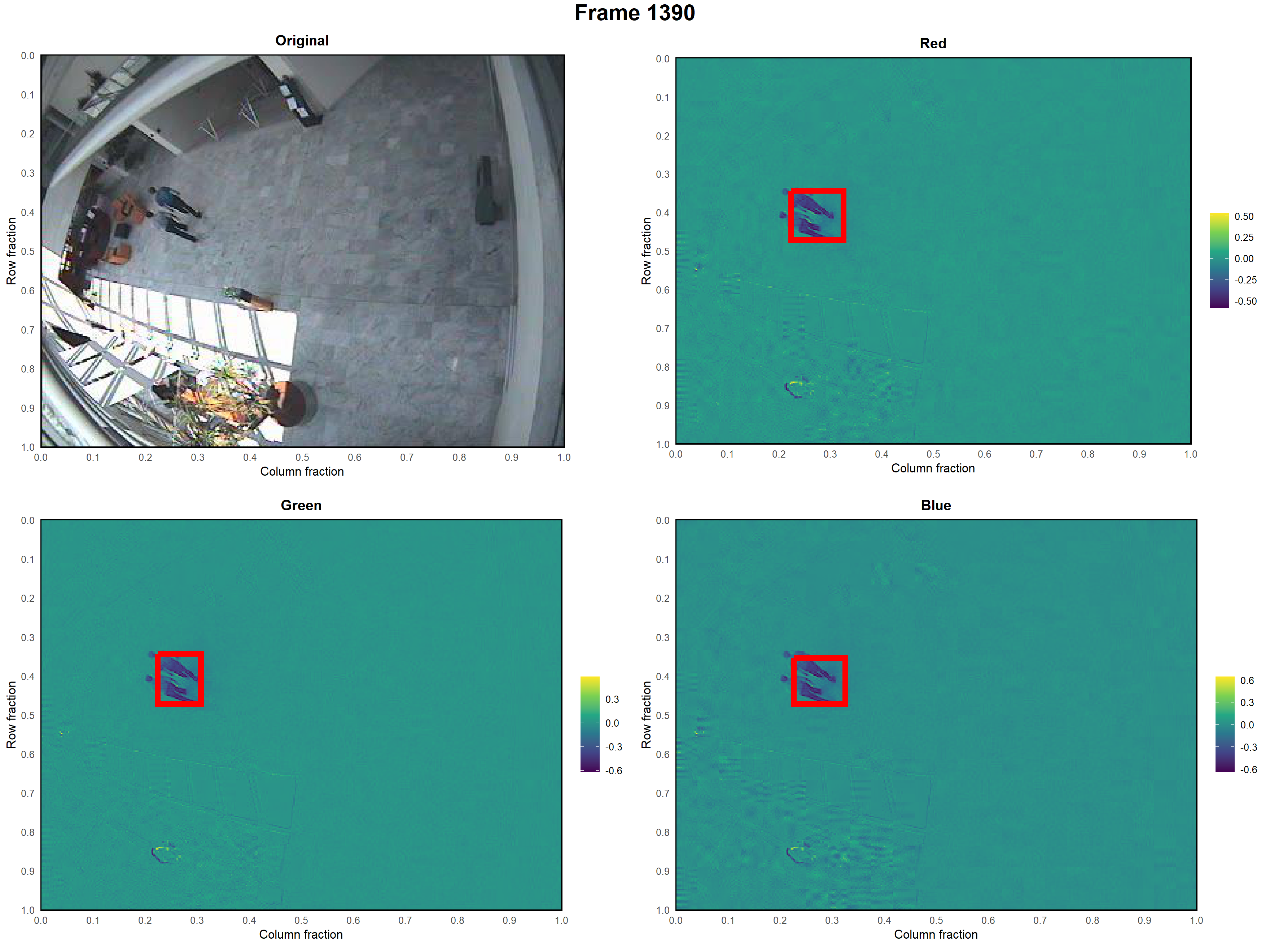}
        \caption{1390: Two persons but far apart}
    \end{subfigure}

    \begin{subfigure}[t]{0.45\textwidth}
        \centering
        \includegraphics[height=0.155\textheight,keepaspectratio]{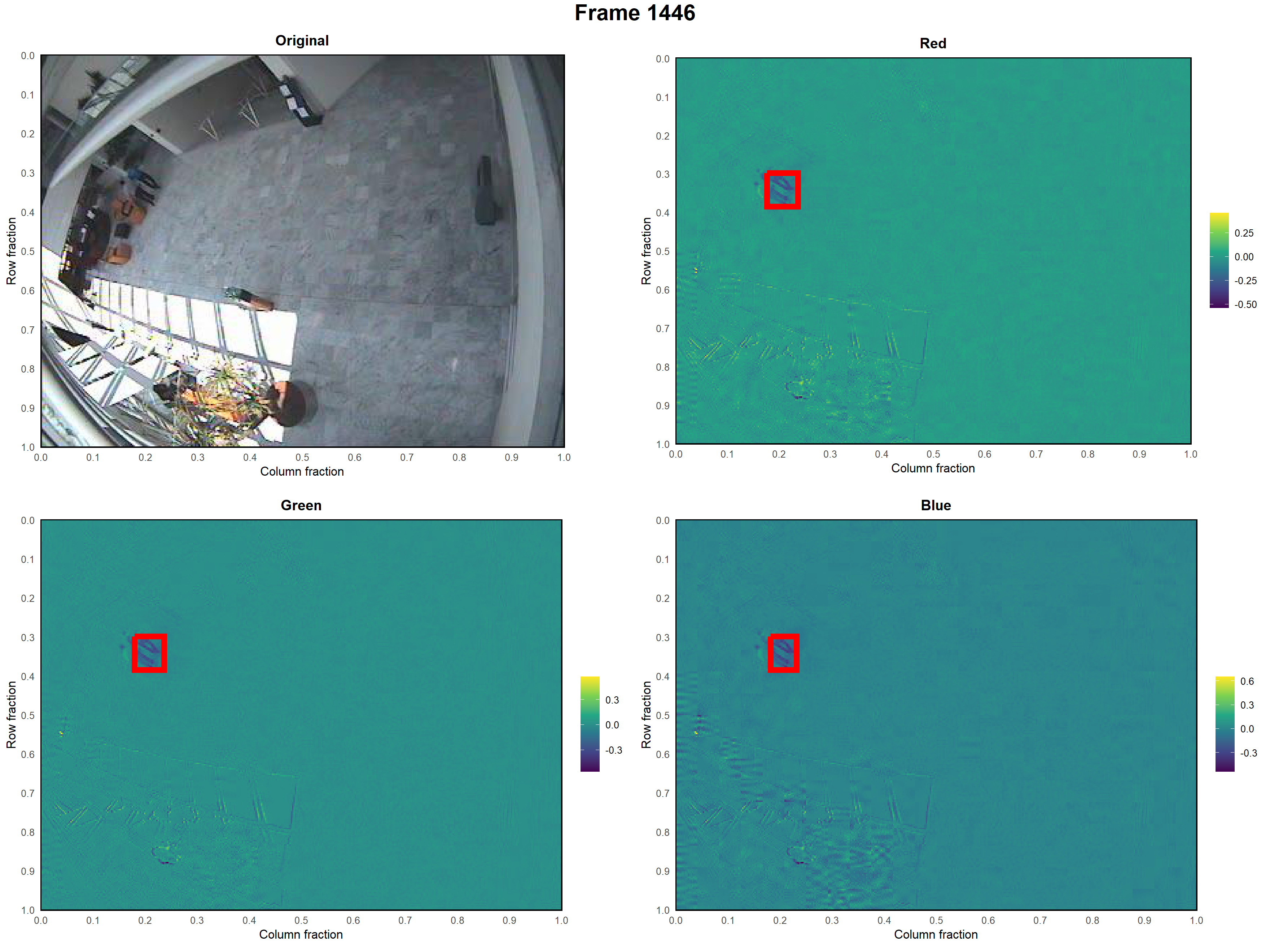}
        \caption{1446: Two persons but far apart}
    \end{subfigure}
    \hfill
    \begin{subfigure}[t]{0.45\textwidth}
        \centering
        \includegraphics[height=0.155\textheight,keepaspectratio]{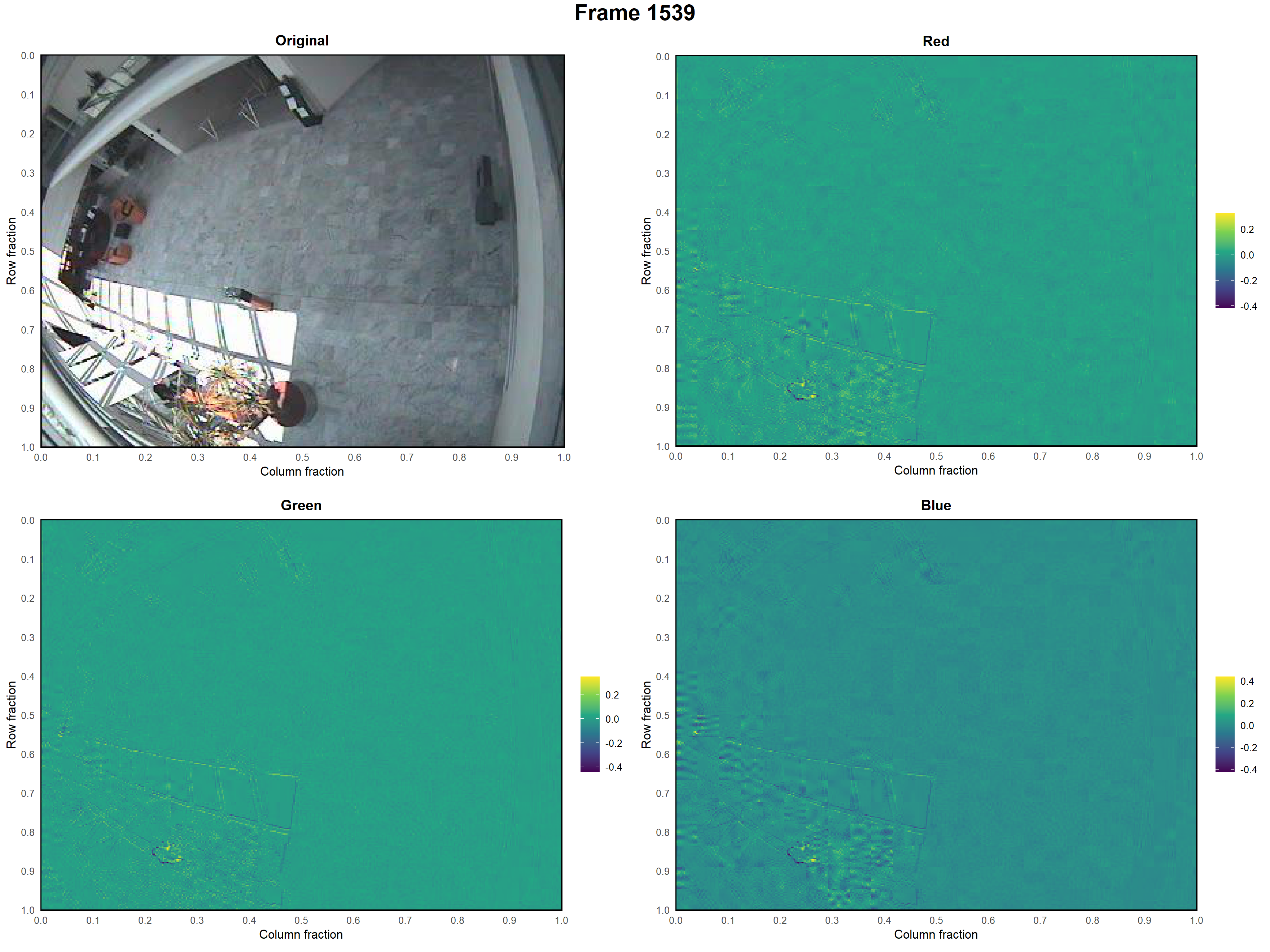}
        \caption{1539: No dynamics}
    \end{subfigure}

    \caption{RGB detection results for selected frames.}
    \label{fig:rgb_detection_frames}
\end{figure}
In Frame 1151, no anomalies are detected as a subject is just beginning to enter the scene. From Frame 1155 onward, a single block is consistently identified despite the subject moving through challenging lighting conditions (sunlight). By Frame 1250, a second individual enters the field of view; our method successfully captures both individuals as distinct entities by Frame 1278. SPLADE maintains this two-block detection with high precision even as the subjects approach one another, successfully resolving them as separate patches until Frame 1316. From Frame 1318, as they meet and walk together, the algorithm transitions to detecting a single merged block. This detection persists until Frame 1446, after which the subjects recede from the camera and the frames return to the baseline static background.

In contrast, DCART and DPLS-SAD fail to achieve this level of precision. To accommodate their inherent limitations, we provided both baselines with significant advantages: DCART was restricted to a $256 \times 256$ bottom-left subgrid (their algorithm is restricted to $2^k \times 2^k$ lattices), and DPLS-SAD was applied to a hand-cropped $111 \times 121$ subgrid specifically centered on the subjects at Frame 1315 to mitigate its lack of scalability. Despite these favorable settings, Figure \ref{fig:othermethoddata} illustrates their poor performance. Across several grid choices for their tuning parameter $\lambda$, DCART identifies numerous spurious patches that fail to intersect with the subjects, while DPLS-SAD fails to detect any anomalies entirely. These failures highlight the inability of the baseline methods to account for the significant spatial correlation present in real-world surveillance data.
\begin{figure}[htbp]
    \centering
    \begin{subfigure}[t]{0.38\linewidth}
        \centering
        \includegraphics[width=\linewidth]{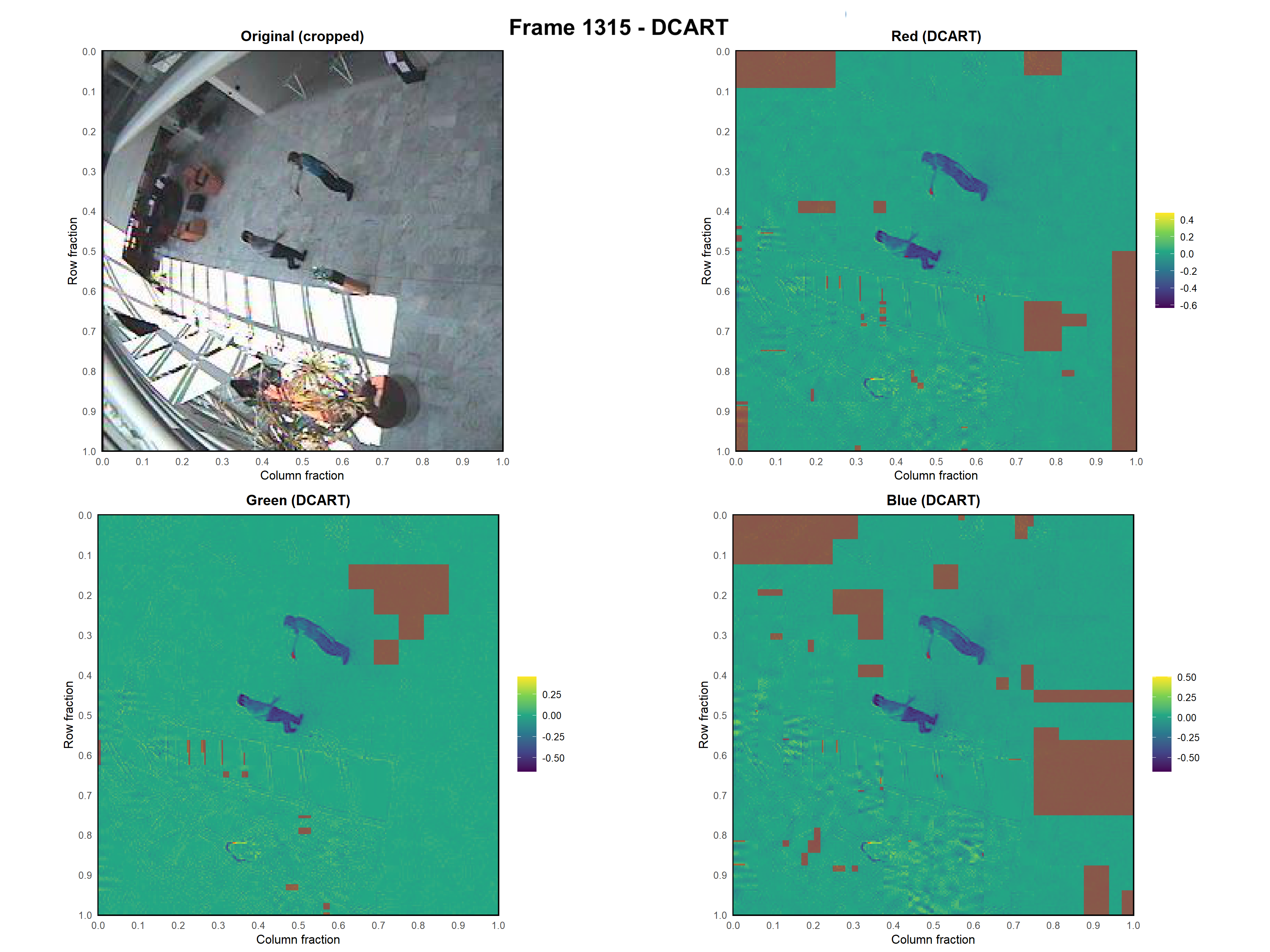}
        \caption{DCART on cropped data}
        \label{fig:DCART on cropped data}
    \end{subfigure}
    \hspace*{2cm}
    \begin{subfigure}[t]{0.38\linewidth}
        \centering
        \includegraphics[width=\linewidth]{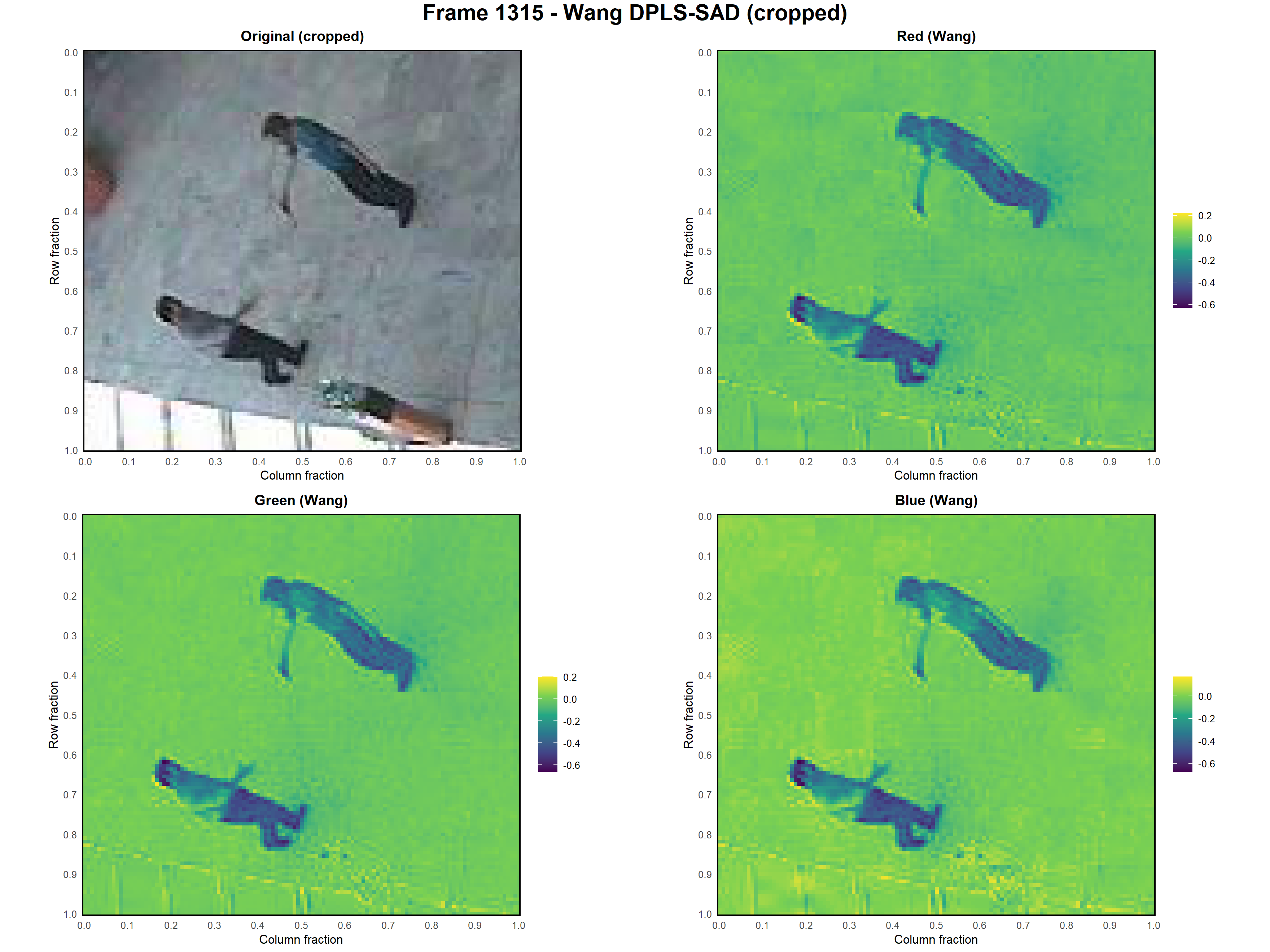}
        \caption{DPLS on cropped data}
        \label{fig:DPLS on cropped data}
    \end{subfigure}
    \caption{DCART (256 x 256) and DPLS (111 x 121) on cropped image}
    \label{fig:othermethoddata}
\end{figure}

\section{Conclusion}\label{sec:conclusion}
Despite recent attention, scalable spatial anomaly localization under general forms of spatial dependence remains a challenging problem. Prior work largely focuses on testing for the mere existence of anomalies, assuming restrictive structures such as $m$-dependence, and sacrificing computational feasibility for shape generality. In contrast, this paper focuses on identifying axis-aligned rectangular anomalous patches, introducing SPLADE: a fast, statistically accurate localization procedure robust to a wide class of spatially dependent data generating mechanisms.

SPLADE’s two-stage architecture leverages intelligent block-based sub-sampling, yielding massive computational speed-ups while facilitating rigorous theoretical guarantees under spatial dependence. Extensive experiments based on synthetic data across diverse dependence structures, anomalous patch configurations and signal strengths, not only validate our theoretically established guarantees, but also highlight SPLADE's significant computational efficiency advantage over competing methods. SPLADE’s performance does not hinge on any specific choice of tuning parameters; instead, it remains robust across a fairly broad range of settings, as supported by theoretical analysis and corroborated by ablation studies. A natural avenue for future research is extending this framework to other parametric shapes, such as ellipsoids, which we anticipate would primarily require careful modifications to the second stage of our algorithm.

\bibliographystyle{abbrvnat}
\bibliography{references}
\newpage
\appendix
\begin{center}\begin{large}\textbf{Appendix}\end{large}\end{center}
The Appendix contains all deferred discussions, including theoretical proofs and additional numerical experiments. In particular, Appendix \ref{se:cairoli} contains examples satisfying our key assumption quantifying a general dependence structure. Some auxiliary results follow in Appendix \ref{app:auxiliary} that will be used in proofs subsequently. Appendixes \ref{app:proof-1}, \ref{app:proof-2}, and \ref{app:proof-3} contains the proofs of the Theorems \ref{thm:epiconsistency}, \ref{thm:single-consistency} and \ref{thm:multiple} respectively. Appendix \ref{ssc:musigma} contains some deferred, finer details regarding implementation of SPLADE; Finally Appendices \ref{se:appendix simulation} and \ref{se:appendix real data} contain some additional simulation study, and one interesting application of SPLADE on fibre anomaly detection, respectively. \section{Assumption \ref{asmp:doob} and deferred discussion} \label{se:cairoli}
In the following, we illustrate the ubiquity of Assumption \ref{asmp:doob} through examples drawn from commonly occurring spatial processes.
\begin{lemma}\label{lem:m-dependent}
    Consider an $m$-dependent random field $(\varepsilon_{\bb{i}})_{\bb{i}\in \Z^d}$ satisfying $\varepsilon_{\bb{i}}$ and $\varepsilon_{\bb{j}}$ are independent if $|\bb{i}- \bb{j}|_{\infty}>m$. Then Assumption \ref{asmp:doob} is satisfied for $(\varepsilon_{\bb{i}})_{\bb{i}\in \Z^d}$.
\end{lemma}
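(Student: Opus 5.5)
We reduce to the independent case by partitioning $\Z^d$ into $(m+1)^d$ sublattices. For $\bb{r}\in\{0,\ldots,m\}^d$ let $\Lambda_{\bb{r}} := \{\bb{i}\in\Z^d : \bb{i}\equiv \bb{r} \bmod (m+1)\}$, coordinatewise. Two distinct points of the same $\Lambda_{\bb{r}}$ differ by at least $m+1$ in some coordinate, so $|\bb{i}-\bb{j}|_\infty>m$. Hence the variables $\{\varepsilon_{\bb{i}}:\bb{i}\in\Lambda_{\bb{r}}\}$ are pairwise independent. We need mutual independence; this follows for the standard notion of $m$-dependence, where $\sigma$-fields of index sets at $\ell_\infty$-distance $>m$ are independent. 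We adopt that reading of the lemma. Since $\Lambda_{\bb{r}}$ is an affine image of $\Z^d$, any rectangle $I\subseteq[\bb{n}]$ meets $\Lambda_{\bb{r}}$ in the image of a rectangle of the rescaled lattice $\prod_k\{1,\ldots,\lceil n_k/(m+1)\rceil\}$.

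Write $S_I^\varepsilon=\sum_{\bb{r}} S_{I\cap\Lambda_{\bb{r}}}^\varepsilon$. Then
\[
\Big\|\max_{I\subseteq[\bb{n}]}|S_I^\varepsilon|\Big\|_p\le \sum_{\bb{r}}\Big\|\max_{J}|S^{(\bb{r})}_J|\Big\|_p ,
\]
where $S^{(\bb{r})}$ is the partial-sum field of the i.i.d. (mean zero, by stationarity) array $\{\varepsilon_{\bb{i}}\}_{\bb{i}\in\Lambda_{\bb{r}}\cap[\bb{n}]}$ and $J$ ranges over rectangles of the rescaled grid. It therefore suffices to prove the bound for an i.i.d. mean-zero field $\xi_{\bb{i}}$ on $[\bb{m}]$ with $\|\xi_{\bb{0}}\|_p<\infty$, with constant $C$ times $|\bb{m}|^{1/2}$. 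We then sum over the $(m+1)^d$ residues and use $|\bb{m}|\le\prod_k (n_k/(m+1)+1)\lesssim |\bb{n}|$, assuming $n_k\ge m+1$; small $n_k$ is trivial.

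For the i.i.d. case, any rectangle sum $S_{I_{[\bb{a},\bb{b}]}}$ is an inclusion–exclusion combination of $2^d$ corner sums $S_{\bb{k}}:=\sum_{\bb{1}\le\bb{i}\le\bb{k}}\xi_{\bb{i}}$. Hence
\[
\max_I|S_I|\le 2^d\max_{\bb{k}\le\bb{m}}|S_{\bb{k}}| .
\]
The process $(S_{\bb{k}})$ is a $d$-parameter martingale with respect to the product filtration $\mathcal F_{\bb{k}}=\sigma(\xi_{\bb{i}}:\bb{i}\le\bb{k})$. For independent arrays, this filtration satisfies the commuting (F4) condition of Cairoli–Walsh. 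Cairoli's inequality then gives
\[
\|\max_{\bb{k}\le\bb{m}}|S_{\bb{k}}|\|_p\le (p/(p-1))^d\|S_{\bb{m}}\|_p .
\]
We prove this by iterating Doob's $L^p$ inequality one coordinate at a time, conditioning on the other coordinates. Finally, Rosenthal's (or Burkholder's) inequality for the i.i.d. sum gives
\[
\|S_{\bb{m}}\|_p\le C_p\big(|\bb{m}|^{1/2}\|\xi_{\bb{0}}\|_2+|\bb{m}|^{1/p}\|\xi_{\bb{0}}\|_p\big)\le C_p'|\bb{m}|^{1/2},
\]
since $p>2$. Combining the steps yields $C'=(m+1)^d 2^d (p/(p-1))^d C_p'$, which depends only on $d$, $p$, $m$ and the law of $\varepsilon_{\bb{0}}$.

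The main obstacle is justifying the Cairoli step, i.e. the iterated Doob argument. Fixing $k_2,\ldots,k_d$, the map $k_1\mapsto S_{\bb{k}}$ is a martingale in the first coordinate. This requires that $\max$ over the remaining coordinates be handled via conditional expectations that commute, which holds precisely because of independence. We will write this iteration out explicitly, applying Doob to $\max_{k_2,\ldots,k_d}|S_{(k_1,\cdot)}|$ as a submartingale in $k_1$.
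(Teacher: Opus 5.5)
Your proposal is correct and follows the paper's proof essentially step for step. The paper also splits $\Z^d$ into the $(m+1)^d$ residue sublattices, identifies $I\cap\Gamma_{\bb{a}}$ with a rectangle $J(I,\bb{a})$ of the coarse lattice, applies Cairoli's maximal inequality for independent fields, and uses Minkowski's inequality over the residues to reach $C_{p,d}(m+1)^d\|\varepsilon_{\bb{0}}\|_p\prod_r\lceil n_r/(m+1)\rceil^{1/2}$. Your extra steps (inclusion--exclusion to corner sums, iterated Doob, Rosenthal) simply unpack the paper's black-box appeal to Cairoli. Your remark that mutual independence on each sublattice needs the $\sigma$-field form of $m$-dependence, rather than the pairwise condition as literally stated, is a point the paper passes over when it calls the coarse field independent.
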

\begin{proof}[of Lemma \ref{lem:m-dependent}]
Let $q := m+1$. For each $\bb{a} \in \{0,\dots,m\}^d$, define
\[
\Gamma_{\bb{a}} := \{\bb{i} \in \mathbb{Z}^d : i_r \equiv a_r \pmod{q},\ r \in [d]\}.
\]
Then $(\Gamma_{\bb{a}})_{\bb{a} \in \{0,\dots,m\}^d}$ forms a partition of $\mathbb{Z}^d$. Hence for every rectangle $I \subseteq [\bb{n}]$,
\begin{align}\label{eq:m-dep-1}
S_I^\varepsilon
&= \sum_{\bb{i} \in I} \varepsilon_{\bb{i}}
 = \sum_{\bb{a} \in \{0,\dots,m\}^d}
   \sum_{\bb{j} \in I \cap \Gamma_{\bb{a}}} \varepsilon_{\bb{j}} .
\end{align}
Fix $\bb{a}$. If $\bb{i} \neq \bb{j} \in \Gamma_{\bb{a}}$, then there exist
$k_{\bb{i}}, k_{\bb{j}} \in \mathbb{Z}^d$ such that
$\bb{s} = q k_{\bb{s}} + \bb{a}$ for $\bb{s} \in \{\bb{i},\bb{j}\}$. Consequently,
\begin{align*}
|\bb{i}-\bb{j}|_{\infty}
= q |k_{\bb{i}} - k_{\bb{j}}|_{\infty}
\ge q = m+1 .
\end{align*}
By $m$-dependence, $\varepsilon_{\bb{i}}$ and $\varepsilon_{\bb{j}}$ are independent. Define the coarse-lattice field
\[
\varepsilon_{\bb{k}}^{(\bb{a})} := \varepsilon_{\bb{a} + q \bb{k}},
\qquad \bb{k} \in \mathbb{Z}^d .
\]
Then $(\varepsilon_{\bb{k}}^{(\bb{a})})_{\bb{k} \in \mathbb{Z}^d}$ is an independent field. Let $I = \prod_{r=1}^d [u_r,v_r] \cap \mathbb{Z}^d$. Define
\begin{align*}
\alpha_r(I,\bb{a}) &= \Big\lceil \frac{u_r-a_r}{q} \Big\rceil, \ \
\beta_r(I,\bb{a}) = \Big\lfloor \frac{v_r-a_r}{q} \Big\rfloor .
\end{align*}
Set
\[
J(I,\bb{a})
:= \prod_{r=1}^d
[\alpha_r(I,\bb{a}), \beta_r(I,\bb{a})] \cap \mathbb{Z}^d .
\]
Then $I \cap \Gamma_{\bb{a}}
= \{\bb{a} + q \bb{k} : \bb{k} \in J(I,\bb{a})\},$
and therefore
\begin{align*}
\sum_{\bb{j} \in I \cap \Gamma_{\bb{a}}} \varepsilon_{\bb{j}}
=
\sum_{\bb{k} \in J(I,\bb{a})} \varepsilon_{\bb{k}}^{(\bb{a})}.
\end{align*}
Hence, \eqref{eq:m-dep-1} can be re-written as
\begin{align*}
S_I^\varepsilon
=
\sum_{\bb{a} \in \{0,\dots,m\}^d}
\sum_{\bb{k} \in J(I,\bb{a})}
\varepsilon_{\bb{k}}^{(\bb{a})}.
\end{align*}
Since $J(I,\bb{a})$ ranges over rectangles contained in a box with side lengths at most
$\lceil n_r/q \rceil$, Cairoli's maximal inequality for independent random fields yields

\begin{align*}
\Big\| \max_{I \subseteq [\bb{n}]} |S_I^\varepsilon| \Big\|_p
&\le
\sum_{\bb{a} \in \{0,\dots,m\}^d}
\Big\|
\max_{I \subseteq [\bb{n}]}
\Big|
\sum_{\bb{k} \in J(I,\bb{a})}
\varepsilon_{\bb{k}}^{(\bb{a})}
\Big|
\Big\|_p \\
&\le
C_{p,d} (m+1)^d \|\varepsilon_0\|_p
\prod_{r=1}^d \Big\lceil \frac{n_r}{m+1} \Big\rceil^{1/2}
= O(|\bb{n}|^{1/2}),
\end{align*}
which completes the proof.
\end{proof}
\begin{lemma}\label{lem:linear}
Consider the linear random field $\varepsilon_{\bb{i}}= \sum_{\bb{s}\in \Z^d} a_{\bb{s}} e_{\bb{i}- \bb{s}}$, where $(e_{\bb{s}})_{\bb{s}\in \Z^d}$ are i.i.d. mean-zero random variables and $\sum_{\bb{s}\in \Z^d}|a_{\bb{s}}|<\infty$. Then Assumption \ref{asmp:doob} is satisfied by $(\varepsilon_{\bb{i}})_{\bb{i}\in \Z^d}$. 
\end{lemma}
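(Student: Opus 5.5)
The plan is to write the partial sum over a rectangle as a weighted combination of shifted partial sums of the i.i.d. innovations, then apply Cairoli's maximal inequality for independent fields (as in the proof of Lemma \ref{lem:m-dependent}) to each shifted sum. For any rectangle $I\subseteq[\bb{n}]$, interchanging sums (justified in $\mathcal{L}_p$ since $\sum_{\bb{s}}|a_{\bb{s}}|<\infty$ and $\|e_{\bb{0}}\|_p<\infty$) gives
\[
S_I^\varepsilon=\sum_{\bb{i}\in I}\sum_{\bb{s}\in\Z^d}a_{\bb{s}}e_{\bb{i}-\bb{s}}=\sum_{\bb{s}\in\Z^d}a_{\bb{s}}\,S^{e}_{I-\bb{s}},\qquad S^e_{J}:=\sum_{\bb{j}\in J}e_{\bb{j}} .
\]
Here $I-\bb{s}$ is again a rectangle, contained in the translated box $[\bb{n}]-\bb{s}$.

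Taking the maximum over $I\subseteq[\bb{n}]$ inside the sum yields
\[
\max_{I\subseteq[\bb{n}]}|S_I^\varepsilon|\le\sum_{\bb{s}}|a_{\bb{s}}|\max_{J\subseteq[\bb{n}]-\bb{s}}|S^e_J| .
\]
By Minkowski's inequality (monotone convergence handles the countable sum),
\[
\Big\|\max_{I}|S_I^\varepsilon|\Big\|_p\le\sum_{\bb{s}}|a_{\bb{s}}|\,\Big\|\max_{J\subseteq[\bb{n}]-\bb{s}}|S^e_J|\Big\|_p .
\]
Since $(e_{\bb{j}})$ is i.i.d., the law of $(e_{\bb{j}-\bb{s}})_{\bb{j}\in[\bb{n}]}$ does not depend on $\bb{s}$. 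Each norm on the right therefore equals $\|\max_{J\subseteq[\bb{n}]}|S^e_J|\|_p$.

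It remains to bound this single quantity by $C_{p,d}\|e_{\bb{0}}\|_p|\bb{n}|^{1/2}$, which is the independent case already invoked in Lemma \ref{lem:m-dependent}. If I had to justify it, I would proceed in three steps.
\begin{itemize}
\item Any rectangle sum is an inclusion–exclusion combination of at most $2^d$ corner sums $S^e_{I_{[\bb{0},\bb{k}]}}$. Hence $\max_J|S^e_J|\le 2^d\max_{\bb{k}\le\bb{n}}|S^e_{I_{[\bb{0},\bb{k}]}}|$.
\item The corner sums form a $d$-parameter martingale with respect to the product filtration, which satisfies the commuting condition (F4) because the innovations are independent. Cairoli's inequality therefore gives $\|\max_{\bb{k}\le\bb{n}}|S^e_{I_{[\bb{0},\bb{k}]}}|\|_p\le (p/(p-1))^d\|S^e_{[\bb{n}]}\|_p$.
\item Rosenthal's (or Burkholder's) inequality then bounds $\|S^e_{[\bb{n}]}\|_p\le C_p\|e_{\bb{0}}\|_p|\bb{n}|^{1/2}$.
\end{itemize}
Combining these gives the claim with $C'=2^dC_{p,d}\|e_{\bb{0}}\|_p\sum_{\bb{s}}|a_{\bb{s}}|$, which is independent of $\bb{n}$.

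The main obstacle is technical rather than conceptual. One must justify the interchange of the infinite sum and the maximum, i.e., check that the series defining $\varepsilon_{\bb{i}}$ converges in $\mathcal{L}_p$ so that $S_I^\varepsilon$ equals the rearranged series almost surely. Absolute summability of $a_{\bb{s}}$ together with $\|e_{\bb{0}}\|_p<\infty$ gives absolute convergence in $\mathcal{L}_p$, and the maximum is over finitely many rectangles. So I would first truncate to $|\bb{s}|_\infty\le N$, prove the bound uniformly in $N$, and then let $N\to\infty$ using Fatou's lemma.
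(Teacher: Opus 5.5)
Your proposal is correct and follows the paper's proof: you use the same rewriting $S_I^\varepsilon=\sum_{\bb{s}}a_{\bb{s}}S^e_{I-\bb{s}}$, then Minkowski's inequality, then Cairoli's maximal inequality for the i.i.d.\ innovation field. Beyond the paper, you make explicit three things it leaves implicit: translation invariance, which makes the bound uniform in $\bb{s}$; the inclusion--exclusion/Cairoli/Rosenthal chain behind the $O(|\bb{n}|^{1/2})$ bound; and the justification for interchanging the infinite sum with the maximum, which relies on $\|e_{\bb{0}}\|_p<\infty$ (stated in the main-text example but omitted from the lemma itself).
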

\begin{proof}[of Lemma \ref{lem:linear}]
    Observe that $ S_I^\varepsilon = \sum_{\bb{s}\in \Z^d} a_{\bb{s}} \sum_{\bb{i}\in I} e_{\bb{i}- \bb{s}}$, which immediately implies, via another application of Cairoli's maximal inequality for independent random fields,
    \begin{align*}
        \|\max_{I \subseteq [\bb{n}]} \ |S_I^\varepsilon|  \|_p \leq \sum_{\bb{s}\in \Z^d} |a_{\bb{s}}| \Big\|\max_{I \subseteq [\bb{n}]} \sum_{\bb{i}\in I} e_{\bb{i}-\bb{s}} \Big\|_p = O(|\bb{n}|^{1/2}).
    \end{align*}
\end{proof}

\section{Auxiliary Results} \label{app:auxiliary}
In this section we record some crucial auxiliary results in support of our theoretical arguments and broader analysis. Firstly, we address the feasibility of Assumption \ref{asmp:doob} by deriving it for a relatively broad class of spatial dependence. Specifically, \cite{cuny2025weak} establishes a Rosenthal inequality for the following dependence class:
\[\varepsilon_{\bb{i}}=g(e_{\bb{i}-\bb{s}}: \bb{s}\in \Z^d, \bb{s} \ge \bb{0}).\]
In \S\ref{se:model} we briefly discuss this model as an example of additional structures ( for example, $\bb{s}\ge \bb{0}$) imposed on spatial dependence in order to derive control on maximal partial sums. In Lemma \ref{lem:new-rosenthal}, we formalize this by proving Assumption \ref{asmp:doob} for this class. 

\begin{lemma}\label{lem:new-rosenthal}
    For mean-zero spatial stationary random field $(\varepsilon_i)_{i \in \Z^d}$, and a rectangle $I\subseteq [\bb{n}]$, let $S_I^\varepsilon := \sum_{\bb{j}\in I}\varepsilon_j$, and suppose $\|\varepsilon_{\bb{0}}\|_p < \infty$ for some $p>2$. Then, under the Assumptions of Theorem 17 of \cite{cuny2025weak}, it follows that 
    \[ \| \max_I |S_I|\|_p \leq C' |\bb{n}|^{1/2}, \]
    where $C'$ is independent of $\bb{n}$, and possibly dependent on $d$ and $p$.\
\end{lemma}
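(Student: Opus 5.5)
The plan is to combine the Rosenthal-type moment bound for rectangular partial sums from Theorem 17 of \cite{cuny2025weak} with a multiparameter Doob/Cairoli maximal argument. The causal structure $\varepsilon_{\bb{i}}=g(e_{\bb{i}-\bb{s}}:\bb{s}\ge \bb{0})$ supplies exactly the commuting filtration needed for this. Concretely, I will:
\begin{enumerate}
\item reduce the maximum over all rectangles to a maximum over \emph{anchored} rectangles $I_{[\bb{0},\bb{k}]}$;
\item control anchored partial sums by a martingale approximation, to which Cairoli's inequality applies;
\item handle the remainder through a dyadic chaining bound built from the moment inequality.
\end{enumerate}

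\textbf{Reduction to anchored sums.} By inclusion--exclusion, every rectangle sum $S_{I_{[\bb{a},\bb{b}]}}$ is a signed sum of $2^d$ anchored sums $S_{\bb{k}}:=S_{I_{[\bb{0},\bb{k}]}}$ with $\bb{k}\le \bb{n}$. Hence
\[
\max_I|S_I|\le 2^d\max_{\bb{k}\le\bb{n}}|S_{\bb{k}}|,
\]
and it suffices to prove $\|\max_{\bb{k}\le \bb{n}}|S_{\bb{k}}|\|_p\le C|\bb{n}|^{1/2}$.

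\textbf{Martingale approximation.} Let $\mathcal{F}_{\bb{k}}=\sigma(e_{\bb{j}}:\bb{j}\le\bb{k})$. Because the innovations are i.i.d. and indexed by a product order, this filtration is commuting, i.e.\ it satisfies the Cairoli--Walsh condition (F4). Under the dependence conditions of Theorem 17 in \cite{cuny2025weak} (summable physical/projective dependence measures), I write
\[
\varepsilon_{\bb{i}}=\sum_{\bb{j}\ge\bb{0}} P_{\bb{i}-\bb{j}}\varepsilon_{\bb{i}},
\]
where $P_{\bb{m}}$ is the multiparameter projection onto the martingale difference at site $\bb{m}$. For each fixed lag $\bb{j}$, the field $(P_{\bb{i}-\bb{j}}\varepsilon_{\bb{i}})_{\bb{i}}$ is an orthomartingale difference field. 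Applying Cairoli's inequality $d$ times (Doob in each coordinate) and then Burkholder's inequality to the last moment gives
\[
\Big\|\max_{\bb{k}\le\bb{n}}\Big|\sum_{\bb{i}\le\bb{k}}P_{\bb{i}-\bb{j}}\varepsilon_{\bb{i}}\Big|\Big\|_p\le (p/(p-1))^d C_p|\bb{n}|^{1/2}\|P_{\bb{0}}\varepsilon_{\bb{j}}\|_p .
\]
Summing over $\bb{j}$, this bound is finite exactly when $\sum_{\bb{j}}\|P_{\bb{0}}\varepsilon_{\bb{j}}\|_p<\infty$. I expect that condition to be implied by the hypotheses of Theorem 17 of \cite{cuny2025weak}.

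\textbf{Main obstacle and fallback.} The main obstacle is that Theorem 17 may only assume a weaker Maxwell--Woodroofe-type condition on $\|\IE[S_{\bb{n}}\mid\mathcal{F}_{\bb{0}}]\|_p$, not projective summability. In that case the decomposition above is not available directly. I would then use the following two-step route.
\begin{itemize}
\item First, take the non-maximal Rosenthal bound $\|S_I\|_p\le C|I|^{1/2}$, valid uniformly over rectangles, directly from Theorem 17.
\item Second, upgrade it to a maximal bound by a multiparameter dyadic chaining argument (a Móricz-type inequality). Write each $\bb{k}\le\bb{n}$ through its dyadic expansion in every coordinate, and bound the maximum by the sum over dyadic levels $(\ell_1,\dots,\ell_d)$ of $\ell^p$-maxima of block sums.
\end{itemize}
In the chaining step, a naive count loses a $\prod_r\log n_r$ factor. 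To avoid that loss I would use Móricz's theorem: a moment bound of the form $\|S_I\|_p^p\le g(I)^{p/2}$, with $g$ superadditive and exponent $p/2>1$, yields the maximal bound with only a constant depending on $(d,p)$. Here $g(I)=C|I|$ is additive, hence superadditive, and $p/2>1$ because $p>2$. This yields $\|\max_{\bb{k}}|S_{\bb{k}}|\|_p\le C_{d,p}|\bb{n}|^{1/2}$, and combined with the reduction step it gives the claim with $C'=2^dC_{d,p}$.
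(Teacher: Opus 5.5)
Your proof is correct, but it gets the maximal bound from a different inequality than the paper does. The inclusion--exclusion reduction is the same as the paper's last step. Both arguments also rest on the same input, the non-maximal bound $\|S^\varepsilon_{\bb{k}}\|_p\lesssim|\bb{k}|^{1/2}$. The paper uses this bound without comment when it replaces $\|S^\varepsilon_{k_1,\ldots,k_d}\|_p^{2\delta}$ by $(k_1\cdots k_d)^{\delta}$; you should likewise say where in \cite{cuny2025weak} it comes from.

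The paper plugs this bound into the Rosenthal-type maximal inequality (6.1) of \cite{cuny2025weak}. That inequality controls $\|\max_{\bb{k}\le\bb{n}}|S^\varepsilon_{\bb{k}}|\|_p$ by $|\bb{n}|^{1/p}$ times $\|\varepsilon_{\bb{0}}\|_p$ plus a weighted series in $\|S^\varepsilon_{\bb{k}}\|_p^{2\delta}$, and the paper sums that series to $|\bb{n}|^{1/2-1/p}$. You instead feed the bound into M\'oricz's maximal inequality, with the additive (hence superadditive) function $g(I)=C|I|$ and exponent $p/2>1$. This is legitimate and avoids the logarithmic loss of naive chaining.

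That fallback is a complete proof on its own, so present it as \emph{the} argument. Your martingale branch depends on $\sum_{\bb{j}}\|P_{\bb{0}}\varepsilon_{\bb{j}}\|_p<\infty$, which you have not derived from the hypotheses of Theorem 17, and the proof does not need it.

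The two routes buy different things. The paper's is a direct citation of an inequality already built for the causal setting of \cite{cuny2025weak}. Yours is purely combinatorial and never uses the commuting filtration or the restriction $\bb{s}\ge\bb{0}$. So if you combine it with the uniform bound $\|S_I\|_p\le C|I|^{1/2}$ that Remark~\ref{rem:ass-1-2} quotes from \cite{el2013central}, your argument would give Assumption~\ref{asmp:doob} for the general non-causal fields \eqref{eq:causalspatial}, for every $p>2$. That remark describes this case as open.
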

\begin{proof}
The result follows more-or-less straightforwardly from Theorem 17 of \cite{cuny2025weak}; nevertheless we provide a proof for completeness. 
    For $\bb{a}\in \Z^d$, $\bb{a} > \bb{0}$, let $S_{\bb{a}}^\varepsilon=\sum_{\bb{0} \leq \bb{i}\leq \bb{a}} \varepsilon_{\bb{i}}$. Note that, from equation (6.1) in \cite{cuny2025weak}, it follows that that 
    \begin{align}
        \Bigl\|\max_{\mathbf{k}\leq \mathbf{n}} |S_{\mathbf{k}}^\varepsilon|\Bigr\|_p
&\lesssim |\bb{n}|^{1/p}
\Biggl(
\|\varepsilon_{\bb{0}}\|_p
+
\Bigl[
\sum_{k_1=1}^{n_1}\!\ldots\!\sum_{k_d=1}^{n_d}
\frac{\|S_{k_1,\ldots,k_d}^\varepsilon\|_{p}^{2\delta}}
{k_1^{1+2\delta/p}\ldots k_d^{1+2\delta/p}}
\Bigr]^{\!1/(2\delta)}
\Biggr) \nonumber\\
& \leq  |\bb{n}|^{1/p}\Biggl(
\|\varepsilon_{\bb{0}}\|_p
+
\Bigl[
\sum_{k_1=1}^{n_1}\!\ldots\!\sum_{k_d=1}^{n_d}
\frac{k_1^\delta \ldots k_d^\delta }
{k_1^{1+2\delta/p}\ldots k_d^{1+2\delta/p}}
\Bigr]^{\!1/(2\delta)}
\Biggr) \nonumber\\
& \leq  |\bb{n}|^{1/p}\Biggl(
\|\varepsilon_{\bb{0}}\|_p
+
\Bigl[
\sum_{k_1=1}^{n_1}\!\ldots\!\sum_{k_d=1}^{n_d}
(k_1 \ldots k_d)^{\delta \frac{p-2}{p} -1}
\Bigr]^{\!1/(2\delta)}
\Biggr) \nonumber\\
& \leq |\bb{n}|^{1/p}(
\|\varepsilon_{\bb{0}}\|_p
+ |\bb{n}|^{1/2-1/p} ) \lesssim |\bb{n}|^{1/2}. \label{eq: left-bottom}
    \end{align}
    The result follows from \eqref{eq: left-bottom} by observing that for $\bb{a}, \bb{b}\in \Z^d$ with $\bb{a} \leq \bb{b}$, any rectangle $I_{[\bb{a},\bb{b}]}$ can be represented as:
    \[I_{[\bb{a},\bb{b}]}=\sum_{\boldsymbol{\eta}\in\{0,1\}^d}
(-1)^{\sum_j \eta_j}\,
S^{\varepsilon}_{\,\bb{b}-\boldsymbol{\eta}\odot(\bb{a}-\bb{1})},\]
where $\odot$ is component-wise dot-product. This completes the proof.
\end{proof}

Lemma \ref{lemma:anchoring} delivers control over sums over a special class of ``anchored'' rectangles that serves as building blocks in our proof of Theorem \ref{thm:epiconsistency}.

\begin{lemma}\label{lemma:anchoring}
    Grant Assumption \ref{asmp:doob}. For $\bb{a}, \bb{l}\in \Z^d_{\geq 0}$, define the \textit{anchored} rectangle
    \[ I(\bb{a}, \bb{l})= \prod_{k=1}^{d-1} [a_k, a_k+l_k] \otimes [1, l_d], \]
    where the anchoring is along the canonical axes in the $d$-th dimension. Given an integer $m>0$, consider the class
    \[ \mathcal{M}(m)= \{I: I= I(\bb{a}, \bb{l}), \bb{a}, \bb{l}\in \Z^d_{\geq 0}, |I|\leq m\}. \]
    Then it follows that 
    \[ \| \max_{I \in \mathcal{M}(m)} |S_{I}^\varepsilon| \|_p \leq C'' \sqrt{m} (\log m)^{\frac{d-1}{p}} (\prod_{k=1}^{d-1}n_k)^{1/p}.\]
\end{lemma}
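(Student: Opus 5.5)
The plan is a dyadic (peeling) decomposition over the side lengths $l_1,\dots,l_{d-1}$, combined with a union bound in $L^p$ over the free anchor positions. For each anchor, Assumption \ref{asmp:doob} then bounds the maximum over the remaining parameters.

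Because $I(\bb{a},\bb{l})$ is anchored at $1$ in the $d$-th coordinate, once $(a_1,\dots,a_{d-1})$ and $(l_1,\dots,l_{d-1})$ are fixed the only free parameter is $l_d$. The family $\{I(\bb{a},\bb{l})\}$ with fixed $\bb{a}$ and $l_k \le L_k$ ($k<d$) consists of sub-rectangles of the box $B(\bb{a},\bb{L}) := \prod_{k<d}[a_k,a_k+L_k]\otimes[1,L_d]$. By stationarity and Assumption \ref{asmp:doob} applied on that box,
\[
\Big\|\max_{I\subseteq B(\bb{a},\bb{L})}|S_I^\varepsilon|\Big\|_p \le C'|B(\bb{a},\bb{L})|^{1/2}.
\]

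The key steps are as follows.

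First, I partition $\mathcal{M}(m)$ by dyadic classes $\bb{j}=(j_1,\dots,j_{d-1})$ with $2^{j_k-1}< l_k+1\le 2^{j_k}$. Since $|I|\le m$ forces each $l_k\le m$, there are at most $(\log_2 m+1)^{d-1}$ classes. Within class $\bb{j}$ the constraint $|I|\le m$ gives $l_d \le m/\prod_{k<d}2^{j_k-1}$. Hence every rectangle in the class lies in a box $B(\bb{a},\bb{L})$ with $L_k=2^{j_k}$ and $L_d = 2^{d-1}m/\prod 2^{j_k}$, whose volume is at most $2^{d-1}m$.

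Second, I handle the anchors. For a fixed class, $a_k$ ranges over at most $n_k$ values, so there are at most $\prod_{k<d}n_k$ anchors. I use the crude inequality $\|\max_{t\le N}|Y_t|\|_p\le(\sum_t\|Y_t\|_p^p)^{1/p}\le N^{1/p}\max_t\|Y_t\|_p$. Together with the box bound this gives, for each class, a bound of order $C'\sqrt{2^{d-1}m}\,(\prod_{k<d}n_k)^{1/p}$.

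Third, I combine over the classes. Doing so naively with the same $\ell^p$ union trick costs a factor $(\#\text{classes})^{1/p}\asymp(\log m)^{(d-1)/p}$, which is exactly the claimed factor. So I take the maximum over the pairs (class, anchor) jointly in a single $\ell^p$ sum. The number of pairs is at most $(\log_2 m+1)^{d-1}\prod_{k<d}n_k$, and the per-pair bound is $C'\sqrt{2^{d-1}m}$, which yields $C''\sqrt m(\log m)^{(d-1)/p}(\prod_{k<d}n_k)^{1/p}$.

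The main obstacle I expect is the careful box-containment step: each class must be covered by a single box of volume $O(m)$ per anchor, so that Assumption \ref{asmp:doob} controls the whole family (including the free $l_d$ and the sub-maximal $l_k$) simultaneously. I will use $l_k+1\le 2^{j_k}$ and the lower bound $l_k+1> 2^{j_k-1}$ to control $L_d$. A secondary point is the boundary case $l_k=0$ (degenerate rectangles) and the requirement that boxes stay inside $[\bb{n}]$. For these I can simply truncate the boxes to $[\bb{n}]$, which only decreases volume, and stationarity (or applying Assumption \ref{asmp:doob} to the truncated sub-box) still applies.
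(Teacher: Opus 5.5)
Your proposal is correct and follows essentially the same route as the paper: dyadic classes over the first $d-1$ side lengths, a covering box of volume $O(m)$ for each (class, location) pair controlled by Assumption \ref{asmp:doob} via stationarity, and an $\ell^p$ union bound over at most $(\log m)^{d-1}\prod_{k<d}n_k$ such pairs. The one difference is that you attach a box to each anchor $\bb{a}$, whereas the paper attaches boxes to fixed grid cells $\mathcal{B}_{k,t_k}(\bb{r})$. This costs nothing, since the paper also bounds the number of cells by $\prod_{k<d}n_k$. It also makes the containment step airtight: an interval of length below $2^{r_k+1}$ can straddle two of the paper's grid cells of that length, so the paper's claimed containment can fail.
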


\begin{proof}
    For each $\bb{r} = (r_1, r_2, \ldots, r_{d-1})\in \Z^{d-1}_{\geq 0}$, define $I(\bb{r})=\{I\in \mathcal{M}(m): 2^{r_k} \leq l_k < 2^{r_k+1},k \in [d-1] \}.$ For each $I(\bb{a}, \bb{l})\in I(\bb{r})$, it is evident that $$l_d\leq H(\bb{r}):= \lceil \frac{m}{2^{\sum_{k=1}^{d-1}r_k}} \rceil.$$ For each $k\in [d]$, let 
    \[ \mathcal{B}_{k,s}(\bb{r})= [s 2^{r_k+1}, (s+1)2^{r_k+1}], \ s\in \{1 , \ldots, \lceil \frac{n_k}{2^{r_k+1}}\rceil\}, \]
    denote a partition of $[1, n_k]$ into intervals of length $2^{r_k+1}$. Finally, to complete our notational preparation, for $\bb{t}=(t_1, \ldots, t_{d-1})\in \prod_{k=1}^{d-1}\{1 , \ldots, \lceil \frac{n_k}{2^{r_k+1}}\rceil \}$, let us define the rectangles
    \[ \mathcal{Q}(\bb{r}, \bb{t})= \prod_{k=1}^{d-1} \mathcal{B}_{k, t_k}(\bb{r}) \times [1, H(\bb{r})]. \]
     Take any rectangle $I(\bb{a},\bb{l}) \in I(\bb{r})$. Define
\[
t_k := \left\lfloor \frac{a_k-1}{2^{r_k+1}} \right\rfloor, k\in [d-1] .
\]
Then the interval $\{a_k,\ldots,a_k+\ell_k-1\}\subseteq \mathcal{B}_{k, t_k}(\bb{r})$, and consequently, $I(\bb{a},\bb{l})\subseteq \mathcal{Q}(\bb{r}, \bb{t}).$ Therefore, one writes
\begin{align}
     \| \max_{I \in \mathcal{M}(m)} |S_{I}^\varepsilon| \|_p \leq  \bigg\|\max_{\bb{r}}\max_{ \bb{t} \in \{1 , \ldots, \lceil \frac{n_k}{2^{r_k+1}}\rceil \}^{d-1}}\sup_{I \subseteq \mathcal{Q}(\bb{r}, \bb{t})} |S_{I}^\varepsilon| \bigg\|_p. \label{eq:spatial-dyadic}
\end{align}
Let us deal with the right-hand side of \eqref{eq:spatial-dyadic}. Firstly, for a fixed $\bb{r}, \bb{t}$, Assumption \ref{asmp:doob} instructs that
\begin{align}
    \|\sup_{I \subseteq \mathcal{Q}(\bb{r}, \bb{t})} |S_{I}^\varepsilon| \|_p \lesssim |\mathcal{Q}(\bb{r}, \bb{t})|^{1/2}\leq \bigg(H(\bb{r}) \prod_{k=1}^{d-1}2^{r_k+1}\bigg)^{1/2} \lesssim \sqrt{m}, \label{eq:innermost-dyadic}
\end{align}
where, $\lesssim$ hides constants pertaining to $d$. On the other hand, for each $\bb{r}$, the number of $\bb{t}$'s are at most $\prod_{k=1}^{d-1} n_k$/ Finally, noting that $|\bb{l}|\leq m$ if $I(\bb{a}, \bb{l}) \in \mathcal{M}(m)$, the number of possible $\bb{r}$'s are at most $(\log_2 m)^{d-1}$. The proof is completed by invoking an union bound on \eqref{eq:spatial-dyadic} in view of \eqref{eq:innermost-dyadic}.
\end{proof}

Proposition \ref{prop:tsconsistency} is arguably the most vital cog in the general strategy of our proof of Theorem \ref{thm:epiconsistency}, and derives a weaker upper bound that is then leveraged in a finer analysis in Appendix \ref{app:proof-1} to conclude the theorem.  

\begin{proposition}\label{prop:tsconsistency}
    Under the assumptions of Theorem \ref{thm:epiconsistency}, it holds that 
     \begin{equation}\label{eq:weakerrate}
        |I_0 \Delta \hat{I}_{LS}(\mathcal{C}_0c_n, 1 - \mathcal{C}_1c_n)| = O_{\IP}(n^{1/2}\delta^{-1}).
    \end{equation}
\end{proposition}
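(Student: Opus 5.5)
We follow the classical route of \cite{bai1994}: compare the criterion at the estimator with the criterion at the truth, and use Assumption \ref{asmp:doob} to control the noise uniformly over all rectangles. Write $\Lambda(I) := \sqrt{|I|(n-|I|)/n^2}\,(\bar X_I - \bar X_{I^c})$. The estimator $\hat I := \hat{I}_{LS}(\mathcal{C}_0c_n, 1-\mathcal{C}_1c_n)$ maximizes $|\Lambda(I)|$, so $|\Lambda(\hat I)| \ge |\Lambda(I_0)|$.

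\textbf{Step 1 (decomposition).} Split $\Lambda(I) = \Lambda^\mu(I) + \Lambda^\varepsilon(I)$ into signal and noise parts. For the signal, an elementary computation gives
\[
\Lambda^\mu(I) = \delta\sqrt{\tfrac{|I|(n-|I|)}{n^2}}\Big(\tfrac{|I\cap I_0|}{|I|} - \tfrac{|I_0\setminus I|}{n-|I|}\Big).
\]
The squared criterion $(\Lambda^\mu)^2$ is uniquely maximized at $I_0$. We then establish a curvature bound: for $I$ in the admissible range $n\mathcal{C}_0c_n<|I|<n(1-\mathcal{C}_1c_n)$,
\[
(\Lambda^\mu(I_0))^2 - (\Lambda^\mu(I))^2 \gtrsim \delta^2 c_n\, \frac{|I\Delta I_0|}{n}.
\]
To prove it, express everything through $x=|I\cap I_0|$, $y=|I\setminus I_0|$, $z=|I_0\setminus I|$, and use $|I_0|\ge c_n n$. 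The restriction of the search range to volumes bounded away from $0$ and $n$ by a multiple of $c_n$ is what keeps the constants uniform. Assumption \ref{asmp:away-from-boundary} is not essential here, but it keeps rectangles comparable to $I_0$ inside the grid.

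\textbf{Step 2 (noise bound).} The noise part is
\[
\Lambda^\varepsilon(I) = \frac{1}{n}\sqrt{\tfrac{n}{|I|(n-|I|)}}\Big((n-|I|)S_I^\varepsilon - |I| S_{I^c}^\varepsilon\Big) = \frac{S_I^\varepsilon - (|I|/n)\,S_{[\bb n]}^\varepsilon}{\sqrt{|I|(n-|I|)/n}}.
\]
By Assumption \ref{asmp:doob}, $\max_I |S_I^\varepsilon| = O_\IP(n^{1/2})$, and therefore $\sup_{I\text{ admissible}} |\Lambda^\varepsilon(I)| = O_\IP(c_n^{-1/2})$.

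This crude bound is too weak to use directly. Instead we use the difference form
\[
(\Lambda(I_0))^2 - (\Lambda(\hat I))^2 \le 0 .
\]
Expanding gives
\[
D^\mu \le 2|\Lambda^\mu(I_0)\Lambda^\varepsilon(I_0) - \Lambda^\mu(\hat I)\Lambda^\varepsilon(\hat I)| + |\Lambda^\varepsilon(I_0)^2 - \Lambda^\varepsilon(\hat I)^2|,
\]
where $D^\mu$ denotes the signal difference from Step 1. The increments $\Lambda^\varepsilon(\hat I)-\Lambda^\varepsilon(I_0)$ involve only sums over $\hat I\Delta I_0$ and over $[\bb n]$. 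The set $\hat I\Delta I_0$ is a union of at most $2\cdot 2^d$ rectangles, so Assumption \ref{asmp:doob} bounds its sum by $O_\IP(n^{1/2})$. Together with $|\Lambda^\mu|\lesssim \delta$, the right side is $O_\IP(\delta\, n^{1/2}/(n c_n^{1/2}) \cdot n^{0}) + O_\IP(c_n^{-1})\cdot n^{-1/2}$-type terms. Normalizing and comparing with the left side $\delta^2 c_n|\hat I\Delta I_0|/n$ yields
\[
|\hat I\Delta I_0| = O_\IP\big(n^{1/2}\delta^{-1}\big)\cdot(\text{powers of } c_n).
\]
The condition $nc_n^2\delta^2\to\infty$ guarantees that the quadratic noise term is dominated. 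Along the way one also gets $|\hat I\Delta I_0|/n\to 0$.

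\textbf{Main obstacle.} The hard part is bookkeeping the $c_n$-dependence so that it cancels to give exactly $n^{1/2}\delta^{-1}$. The fallback is to first obtain consistency, $|\hat I\Delta I_0| = o_\IP(n)$. This allows $|\hat I|\asymp|I_0|$ to be used inside the normalizations, and the $c_n$ factors then cancel between the curvature bound of Step 1 and the noise scaling of Step 2. This cancellation is the same mechanism as in the $d=1$ argument of \cite{bai1994}.
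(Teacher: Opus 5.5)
There is a genuine gap, and it is the one you flag as the ``main obstacle'': your Step~1 curvature bound loses a factor $c_n$ in exactly the regime that matters, and nothing later recovers it. The bound $(\Lambda^\mu(I_0))^2-(\Lambda^\mu(I))^2\gtrsim\delta^2c_n|I\Delta I_0|/n$ is true. But for $|I\Delta I_0|\lesssim nc_n$ the true curvature is of order $\delta^2|I\Delta I_0|/n$, with no $c_n$ in front.

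The paper gets this sharper curvature from an exact identity. Write $x_1=|I^c\cap I_0^c|$, $x_2=|I\cap I_0^c|$, $x_3=|I\cap I_0|$, $x_4=|I^c\cap I_0|$ and $x_I=x_2+x_4$. Then
\[
|V_{I_0}^\mu|^2-|V_I^\mu|^2=\frac{\delta^2}{n}\Big(\frac{x_2x_3}{x_2+x_3}+\frac{x_1x_4}{x_1+x_4}\Big)\ge\frac{\delta^2}{2n}\min\{x_I,\,x_1+x_3,\,nc_n\}.
\]
The paper then rules out rectangles close to $I_0^c$, i.e.\ shows $x_1+x_3\gtrsim nc_n$. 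Dividing by $|V_{I_0}^\mu|+|V_I^\mu|\le2\delta\sqrt{\tau_n(1-\tau_n)}$ gives $|V_{I_0}^\mu|-|V_I^\mu|\gtrsim\frac{\delta}{n\sqrt{c_n}}\min\{x_I,\mathcal{C}_0nc_n\}$.

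This is combined with Bai's inequality $|V^\varepsilon_{\hat I}|+|V^\varepsilon_{I_0}|\ge|V^\mu_{I_0}|-|V^\mu_{\hat I}|$ and only the crude bound $\sup_I|V_I^\varepsilon|=O_{\IP}((nc_n)^{-1/2})$ from Assumption~\ref{asmp:doob}. A two-case split then finishes:
\begin{itemize}
\item If $x_I\ge\mathcal{C}_0nc_n$, the noise would have to exceed a constant times $\delta\sqrt{c_n}$. This has vanishing probability because $nc_n^2\delta^2\to\infty$.
\item Otherwise $x_I\lesssim n\sqrt{c_n}\,\delta^{-1}\,O_{\IP}((nc_n)^{-1/2})=O_{\IP}(n^{1/2}\delta^{-1})$.
\end{itemize}
The powers of $c_n$ cancel between $|V_{I_0}^\mu|\asymp\delta\sqrt{c_n}$ and the $c_n^{-1/2}$ in the noise. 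That cancellation needs both the sharp curvature and $|\Lambda^\mu(I)|\le|\Lambda^\mu(I_0)|$, not merely $|\Lambda^\mu|\lesssim\delta$. With your stated ingredients you only reach $O_{\IP}(n^{1/2}\delta^{-1}c_n^{-1})$, which gives the claim only when $c_n\asymp1$.

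There are further problems in Step~2.
\begin{itemize}
\item Your normalization is off by $\sqrt n$. With $\Lambda$ as you define it, $\Lambda^\varepsilon(I)=\big(S_I^\varepsilon-(|I|/n)S^\varepsilon_{[\bb n]}\big)/\sqrt{|I|(n-|I|)}$, so $\sup_I|\Lambda^\varepsilon(I)|=O_{\IP}((nc_n)^{-1/2})$, not $O_{\IP}(c_n^{-1/2})$.
\item At your scaling, comparing against $\delta^2c_n|\hat I\Delta I_0|/n$ does not even yield consistency, and the line ending in ``$\dots$-type terms'' does not follow from your displayed expressions.
\item Splitting off the increment over $\hat I\Delta I_0$ gains nothing as long as you bound those sums by $O_{\IP}(n^{1/2})$. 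Exploiting the smallness of $\hat I\Delta I_0$ requires the peeling and anchored-rectangle machinery that the paper reserves for Theorem~\ref{thm:epiconsistency}.
\end{itemize}

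Your fallback does not close the gap either. $|\hat I\Delta I_0|=o_{\IP}(n)$ does not give $|\hat I|\asymp|I_0|$ when $c_n\to0$; you need $o_{\IP}(nc_n)$, which is exactly what the first case of the split above delivers. Even then, the curvature bound you wrote still carries the spurious $c_n$.

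That said, your squared-difference route does work once you insert the sharp curvature $\frac{\delta^2}{n}\min\{x_I,nc_n\}$ and the correct noise scale. The cross term is then $O_{\IP}(\delta n^{-1/2})$ and the quadratic term is $O_{\IP}((nc_n)^{-1})$. Eventually $\delta^{-2}c_n^{-1}\le n^{1/2}\delta^{-1}$, because $n^{1/2}\delta c_n\to\infty$, so the rate $O_{\IP}(n^{1/2}\delta^{-1})$ follows.
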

\begin{proof}[ of Proposition \ref{prop:tsconsistency}]
We borrow notation from the proof of Theorem \ref{thm:epiconsistency}. One can simplify $V_I^{\mu}$ as
    \begin{align} \label{eq: expression for V_I_mu}
        b(|I|)^{-1}V_I^{\mu} = \delta \bigg( \frac{x_3}{x_2+x_3} - \frac{x_4}{x_1+x_4}\bigg).
    \end{align}
    Consider the following series of simplification:
    \begin{align}
        &|V_{I_0}^{\mu}|^2-  |V_{I}^{\mu}|^2\nonumber\\ =& n^{-2}\delta^2\bigg((x_3+x_4)(x_1+x_2)-(x_2+x_3)(x_1+x_4)( \frac{x_3}{x_2+x_3} - \frac{x_4}{x_1+x_4})^2 \bigg) \nonumber\\
        =&n^{-2}\delta^2 (x_2+x_3)(x_1+x_4)\bigg((1 -\frac{x_2}{x_2+x_3}+ \frac{x_4}{x_2+x_3})(1- \frac{x_4}{x_1+x_4}+ \frac{x_2}{x_1+x_4}) \nonumber\\ & \hspace*{4cm} - (1- \frac{x_2}{x_2+x_3}- \frac{x_4}{x_1+x_4})(1-\frac{x_4}{x_1+x_4}- \frac{x_2}{x_2+x_3} )\bigg) \nonumber\\
        =& n^{-2}\delta^2 (x_2+x_3)(x_1+x_4)\bigg(\frac{x_2 x_3 + x_1x_4}{(x_2+x_3)(x_1+x_4)} + \frac{x_2x_3}{(x_2+x_3)^2} + \frac{x_1x_4}{(x_1+x_4)^2} \bigg) \nonumber\\
        =& n^{-2}\delta^2 (x_2+x_3)(x_1+x_4)\bigg( (\frac{x_2x_3}{x_2+x_3} + \frac{x_1x_4}{x_1+x_4})\frac{n}{(x_2+x_3)(x_1+x_4)} \bigg) \nonumber\\
        =& n^{-1} \delta^2\bigg( \frac{x_2x_3}{x_2+x_3} + \frac{x_1x_4}{x_1+x_4}\bigg)\label{eq:sqdiff}.
    \end{align}
Apart from characterizing the explicit difference between $|V_{I_0}^{\mu}|^2$ and $|V_{I}^{\mu}|^2$, \eqref{eq:sqdiff} also gives a very useful information: that $|V_{I_0}^{\mu}| \geq |V_{I}^{\mu}|$; this is of course expected, since at the population level, $|V_I^{\mu}|$ should be maximizing at the true interval $I_0$. We can exploit this equality as follows.
\begin{align}
    |V_{I_0}^{\mu}|-  |V_{I}^{\mu}| = \frac{|V_{I_0}^{\mu}|^2-  |V_{I}^{\mu}|^2}{|V_{I_0}^{\mu}|+  |V_{I}^{\mu}|} &\geq \frac{n^{-1} \delta^2( \frac{x_2x_3}{x_2+x_3} + \frac{x_1x_4}{x_1+x_4})}{2\delta \sqrt{\tau_n(1-\tau_n)}} \nonumber\\
    &\geq 4^{-1}\frac{\delta}{n\sqrt{\tau_n(1-\tau_n)}}\min\{x_I, x_1+x_3, nc_n \}, \label{eq:absdiff}
\end{align}
where $c_n := \min\{\tau_n, 1-\tau_n\}$. Next, we will show that there exists a constant $C_0>0$ such that $x_1+x_3\geq C_0nc_n$. Otherwise, there exist a sequence $\{r_n\}_{n\geq 1} \in \N$, $r_n \to \infty$, and a sequence of rectangles $(I_{r_n})_{n \geq 1}\in \mathcal{R}$ such that \begin{align}\frac{x_1 + x_3}{r_n c_{r_n}} \to 0,\label{eq:falseass}\end{align} as $n \to \infty$; note that here we have kept the dependence of $x_i$ on $n$ implicit; nevertheless, they are still sequences varying with $n$. Without loss of generality assume that $\tau_n \leq 1/2$ for all sufficiently large $n$, which implies $c_{r_n}=\tau_{r_n}$ for all sufficiently large $n$. Therefore, by definition of $x_i$'s, we also must have $\frac{x_3+x_4}{r_nc_{r_n}}\to 1$ as $n \to \infty$. Therefore, in light of \eqref{eq:falseass}, we have
\begin{align*}
   \frac{|I^c_{r_n}|}{|I_{0, r_n}|}=\frac{x_1+x_4}{r_nc_{r_n}}\to 1, \text{ as $n\to \infty$},
\end{align*}
which is in direct contradiction $\frac{|I_{r_n}|}{|I_{0,r_n}|} \geq \mathcal{C}_0$. Therefore, from \eqref{eq:absdiff}, one has
\begin{align}\label{eq:absdiff2}
    |V_{I_0}^{\mu}|-  |V_{I}^{\mu}| \geq 4^{-1} \frac{\delta}{n\sqrt{\tau_n(1-\tau_n)}}\min\{x_I, \mathcal{C}_0nc_n \}.
\end{align}
Consider equation (10) of \cite{bai1994} , which yields
\begin{align}\label{eq:keyinequality}
        |V_{\hat{I}_0}^{\varepsilon}| + |V_{I_0}^{\varepsilon}| \geq |V_{I_0}^{\mu}| - |V_{\hat{I}_0}^{\mu}|.
\end{align}
Let $\kappa$ be given. Denoting by $\hat{x}_I=|I_0 \Delta \hat{I}_0(\lambda_n)|$, \eqref{eq:absdiff2} and \eqref{eq:keyinequality} provides, for some approprately chosen $M_{\kappa}$, that
\begin{align}
    \IP(\hat{x}_I > M_{\kappa}n^{1/2}\delta^{-1}) &\leq \IP(|V_{\hat{I}_0}^{\varepsilon}| + |V_{I_0}^{\varepsilon}| \geq \delta \frac{c_n}{\sqrt{\tau_n(1-\tau_n)}}) \nonumber\\ & \hspace{2cm} + \IP(\sup_{I \in \mathcal{R}: n(1-\lambda_n)>|I|> n\lambda_n} |V_I^{\varepsilon}| >  M_{\kappa} (n^{1-4/p}c_n)^{-1/2}) \nonumber\\ &:= \eqref{I-II}\text{.1} + \eqref{I-II}\text{.2}. \label{I-II}
\end{align}
For \eqref{I-II}\text{.1}, note that  
\begin{align*}
    \eqref{I-II}\text{.1} &\leq \IP\bigg(\sup_{I \in \mathcal{R}: |I|> n\lambda_n} b(|I|)|\frac{S_I}{|I|} - \frac{S_{I^c}}{|I^c|}| \geq \delta \frac{c_n}{\sqrt{\tau_n(1-\tau_n)}}\bigg)\nonumber\\
    &\leq \IP\bigg(\sup_{I \in \mathcal{R}: n(1-\lambda_n)>|I|> n\lambda_n} \sqrt{\frac{n-|I|}{n}} \frac{|S_I|}{\sqrt{|I|}} \geq \sqrt{n}\delta \frac{c_n}{2\sqrt{\tau_n(1-\tau_n)}} \bigg) + \nonumber\\ & \hspace{2cm}\IP\bigg(\sup_{I \in \mathcal{R}: n(1-\lambda_n)>|I|> n\lambda_n} \sqrt{\frac{|I|}{n}} \frac{|S_{I^c}|}{\sqrt{|I^c|}} \geq \sqrt{n}\delta \frac{c_n}{2\sqrt{\tau_n(1-\tau_n)}} \bigg)\nonumber\\
    &= \eqref{I-II}\text{.1.1} + \eqref{I-II}\text{.1.2} 
\end{align*}
Let us first focus on \eqref{I-II}\text{.1.1}. In view of $\tau_n(1-\tau_n) \asymp c_n$ and Assumption \ref{asmp:doob}, we obtain
\begin{align}
        &\IP\bigg(\sup_{n(1-\mathcal{C}_1 c_n)>|I|> n\mathcal{C}_0 c_n } \sqrt{\frac{n-|I|}{n}} \frac{|S_I|}{\sqrt{|I|}} \geq C\sqrt{n}\delta \sqrt{c_n} \bigg) \nonumber\\
        \leq & \IP\bigg(\sup_{n(1-\mathcal{C}_1 c_n)>|I|> n\mathcal{C}_0 c_n } |S_I| \geq C{n}\delta \sqrt{\frac{c_n^2}{1-\mathcal{C}_1 c_n}}\ \bigg)
        \leq  C \frac{n^{p/2} (1-\mathcal{C}_1 c_n)^{p/2}}{n^p \delta^p c_n^p} \lesssim(nc_n^2\delta^2)^{-p/2}. \label{eq:7.13.trt}
    \end{align}
Note that \eqref{I-II}\text{.1.2} can be similarly tackled by noting that $n(1-\mathcal{C}_0 c_n)>|I|> n\mathcal{C}_1 c_n $ and $\sup_I |S_{I^c}| \leq |S_{I_{[\bb{1}, \bb{n}]}}|+ \sup_I |S_I|$. Henceforth, we shift focus to \eqref{I-II}\text{.2}. Note that, for a given $\kappa>0$ and appropriately chosen $M_{\kappa}$, 
    \begin{align}
        &\ \IP(\sup_{n(1-\mathcal{C}_1 c_n)>|I|> n\mathcal{C}_0 c_n } |V_I^{\varepsilon}| >  M_{\kappa} (n^{1-4/p}c_n)^{-1/2} ) \nonumber\\
        \leq &\ \IP(\sup_{n(1-\mathcal{C}_1 c_n)>|I|> n\mathcal{C}_0 c_n }\sqrt{\frac{n-|I|}{n}} \frac{|S_I|}{\sqrt{|I|}} >  M_{\kappa} c_n^{-1/2} ) + \IP(\sup_{n(1-\mathcal{C}_1 c_n)>|I|> n\mathcal{C}_0 c_n } \sqrt{\frac{|I|}{n}} \frac{|S_{I^c}|}{\sqrt{|I^c|}} > M_{\kappa} c_n^{-1/2} ) \nonumber\\
        \leq & \ 2\IP(\sup_{n(1-\mathcal{C}_1 c_n)>|I|> n\mathcal{C}_0 c_n } |S_{I}| > \sqrt{C}_0 M_{\kappa}n^{1/2 }) + o(1) \nonumber\\
        \leq & \  4C_0^{-p/2} M_{\kappa}^{-p} \label{eq:II.I-II}.
    \end{align}
Finally, \eqref{eq:7.13.trt} and \eqref{eq:II.I-II} in conjunction with \eqref{I-II} completes the proof.
\end{proof}

\section{Proof of Theorem \ref{thm:epiconsistency}}\label{app:proof-1}
Without loss of generality, we assume $\delta>0$, since otherwise we can replace $X_i$ with $-X_i$ to leave \eqref{eq:lsspatial} unchanged. Let $I_0:= \prod_{j=1}^d [n_j \tau_{1, j}, n_j \tau_{2,j}]$. For ease of exposition, we will also omit the niceties of $\lceil n_j\tau_j \rceil$, and pretend that $n_j{\tau}_j \in \N$. This of course, raises no issue in asymptotic analysis, since for $\gamma \in (0,1)$, $ n^{-1}\lfloor n\gamma \rfloor \asymp \gamma \asymp n^{-1}\lceil n\gamma \rceil$. Let $\bb{\tau}_1 = (\tau_{1,1}, \ldots, \tau_{1,d})$, and likewise $\bb{\tau}_2 = (\tau_{2,1}, \ldots, \tau_{2,d})$. Without loss of generality, we further assume that $\tau_n:= |\bb{\tau}_2- \bb{\tau}_1|<\frac{1}{2}$, so that $c_n=\tau_n$. The other case can be treated similarly. To explain our argument effectively, we require some notations. For a candidate interval $I$, denote
    \[ x_1= |I^c \cap I_0^c| , \ x_2= |I \cap I_0^c|, \ x_3=|I \cap I_0|, \text{ and } x_4=|I^c \cap I_0|, \]
    where $A^c=I_{[1,n]}\setminus A$. For $I \in \mathcal{R}$, call $x_I=|I \Delta I_0|$. Note that $x_I=x_2+x_4$, $x_1+x_2=n(1-\tau_n)$, and $x_3+x_4=n\tau_n$. Denote by $V_I^X=b(|I|)(\bar{X}_I- \bar{X}_{I^c})$, where $b(k)=\sqrt{k(n-k)n^{-2}}$. Let us further define $V_I^{\mu}=b(|I|)(\bar{\mu}_I - \bar{\mu}_{I^c})$. Note that, for a fixed $I \in \mathcal{R}$, $V_I^{\mu}=\IE[V_I^X]$, and in particular, $V_{I_0}^{\mu}= b(|I_0|)\delta$. We likewise define $V_I^{\varepsilon}$. 
    Recall $r_{n, \delta}$ from Theorem \ref{thm:epiconsistency}. Consider the sets 
    \begin{align*}
        \mathcal{D}&:=\{I : \mathcal{C}_0nc_n < |I| < n(1-c_n), \  x_I>M_{\eta}r_{n, \delta}^{-1} \}, \ 0<\mathcal{C}_0<1 \text{ is a small constant, and},\\
        \mathcal{D}_0 &:= \mathcal{D} \cap \{I:x_I< C_{\eta} n^{1/2}\delta^{-1}\},
    \end{align*}
    where, the choice of $M_{\eta}>0$ will be specified later, and $C_{\eta}$ is such that $\IP(|x_I|> C_{\eta} n^{1/2}\delta^{-1})<\eta$ upon invoking Proposition \ref{prop:tsconsistency}. Therefore, it is sufficient to control the probability $\IP(\sup_{I \in \mathcal{D}_0} |V_I| \geq |V_{I_0}|)$.
Clearly, 
    \begin{align}
        \IP(\sup_{I \in \mathcal{D}_0} |V_I| \geq |V_{I_0}|)&\leq \IP(\sup_{I \in \mathcal{D}_0} V_I-V_{I_0} \geq 0) + \IP(\sup_{I \in \mathcal{D}_0} V_I + V_{I_0} \leq 0)\color{black}:= \eqref{eq: abs-decomp}.1 + \eqref{eq: abs-decomp}.2. \label{eq: abs-decomp}
    \end{align}
We deal with the two terms sequentially.
\subsection{Control on $\eqref{eq: abs-decomp}.1$}
We write
\begin{align}
    \IP(\sup_{I\in \mathcal{D}_0} V_I - V_{I_0}\geq 0) & \leq \IP(\sup_{I\in \mathcal{D}_0}V_I^\varepsilon -  V_{I_0}^\varepsilon - V_{I_0}^\mu + V_I^{\mu}\geq 0)\nonumber\\
    &\leq \IP(\sup_{I\in \mathcal{D}_0}V_I^\varepsilon -  V_{I_0}^\varepsilon - 2^{-1}\frac{\delta}{n\sqrt{\tau_n(1-\tau_n)}}\min\{x_I, C_0nc_n \} \geq 0 ) \label{eq: use of abs-diff2}\\
    &\leq \IP(\sup_{I\in \mathcal{D}_0}x_I > C_0nc_n) + \IP(\sup_{I\in \mathcal{D}_0} \frac{V_I^\varepsilon -  V_{I_0}^\varepsilon}{x_I} \geq 2^{-1}\frac{\delta}{n\sqrt{c_n}})\nonumber\\
    & := \eqref{eq: abs-decomp}.1.1 + \eqref{eq: abs-decomp}.1.2 \nonumber,
\end{align}
where \eqref{eq: use of abs-diff2} follows from \eqref{eq:absdiff2}. Now, since $|x_I|< C_{\eta}n^{1/2}\delta^{-1}$, it follows $\eqref{eq: abs-decomp}.1.1=0$ for all sufficiently large $n$, as $n^{}c_n^2\delta^2 \to \infty$. So we move on to tackling \eqref{eq: abs-decomp}.1.2. Following the notations of $x_i, i=1(1)4$, we define the following for a candidate interval $I\in \mathcal{D}_0$. Let
\[ S_1^\varepsilon(I) = \sum_{i \in I^c \cap I_0^c} \varepsilon_i; \ S_2^\varepsilon(I) = \sum_{i \in I \cap I_0^c} \varepsilon_i; \ S_3^\varepsilon(I) = \sum_{i \in I \cap I_0} \varepsilon_i; \ S_4^\varepsilon(I) = \sum_{i \in I^c \cap I_0} \varepsilon_i. \]
For convenience, subsequently we keep the dependence of $S^{\varepsilon}_i$'s on $I$ implicit. With these notations in place, let us rewrite \eqref{eq: abs-decomp}.1.2 as follows
\begin{align}
  \frac{V_I^\varepsilon -  V_{I_0}^\varepsilon}{x_I} &= x_I^{-1} \bigg( \Big( b(x_2+x_3) \frac{S^{\varepsilon}_2+S^{\varepsilon}_3}{x_2+x_3} - b(x_3+x_4)\frac{S^{\varepsilon}_3 + S^{\varepsilon}_4}{x_3+x_4}\Big)- \Big( b(x_3+x_4) \frac{S^{\varepsilon}_1+S^{\varepsilon}_2}{x_1+x_2} - b(x_2+x_3)\frac{S^{\varepsilon}_1 + S^{\varepsilon}_4}{x_1+x_4}\Big) \bigg)  \nonumber\\
  &:= \frac{G(I)}{x_I} + \frac{H(I)}{x_I}. \nonumber
\end{align}
In the following, we establish a control over $\frac{G(I)}{x_I}$; the term $\frac{H(I)}{x_I}$ can be dealt with similarly. To that end, we further express $G(I)$ as follows.
\begin{align}
    G(I)&=b(|I|)\frac{S^{\varepsilon}_I}{|I|} - b(|I_0|) \frac{S^{\varepsilon}_{I_0}}{|I_0|} \nonumber\\
    &= b(|I_0|)S^{\varepsilon}_{I_0} ( \frac{1}{|I|}-\frac{1}{|I_0|}) -  b(|I_0|)\frac{S^{\varepsilon}_{I} - (S_2- S_4)}{|I|} + b(|I|)\frac{S_I}{|I|}\nonumber\\
    &=  b(|I_0|)S^{\varepsilon}_{I_0} \frac{x_4-x_2}{|I||I_0|} + ( b(|I|) -  b(|I_0|))\frac{S_I}{|I|} + b(|I_0|)\frac{S_2- S_4}{|I|} \nonumber\\
    &:= G_1(I) + G_2(I) + G_3(I). \nonumber 
\end{align}
Clearly,
\begin{align}
    \IP(\sup_{I\in \mathcal{D}_0}\frac{G_1(I)}{x_I} \geq \frac{\delta}{n\sqrt{c_n}}) \leq & \IP(\frac{S^{\varepsilon}_{I_0}}{\sqrt{|I_0|}} \geq \sqrt{n} \delta \mathcal{C}_0) = O((nc_n^2\delta^2)^{-1})=o(1), \label{eq:G1}
\end{align}
where \eqref{eq:G1} follows from $n^{1-4/p}c_n^2 \delta^2 \to \infty$. On the other hand, note that $|b(|I|) -  b(|I_0|)| \leq \frac{||I| - |I_0||}{n}\leq \frac{x_I}{n}$. Therefore,
\begin{align}
     \IP(\sup_{I\in \mathcal{D}_0} \frac{G_2(I)}{x_I} \geq \frac{\delta}{n\sqrt{\tau_n}}) \leq & \IP(\sup_{I\in \mathcal{D}_0} \frac{S^{\varepsilon}_I}{|I|} \geq \frac{\delta}{\sqrt{\tau_n}}) = O((nc_n^2\delta^2)^{-1})=o(1). \label{eq:G2} 
\end{align}
    Finally, for $G_3$, we proceed via considering a carefully orchestrated partitioning argument. To introduce this, let us first consider a coordinate-wise partition of intervals in the $k$-th dimension. For an interval $I$ with $I^k:=[a_k, b_k]$ denoting its slice in the $k$-th dimension, let
\begin{align*}
    L_k(I) = [a_k, \ b_k \wedge n_k \tau_{1,k}]; \ M_k(I) = [ a_k \vee n_k \tau_{1,k}, \ b_k \wedge n_k \tau_{2,k}], \text{ and } R_k(I) =[a_k \vee n_k \tau_{2,k}, b_k].
\end{align*}
Here, for notational convenience we assume that $[a,b]$ is empty is $b <a$. Clearly, for each $k$, $L_k(I)$, $M_k(I)$ and $R_k(I)$ are disjoint, and $I_k = L_k(I) \cup M_k(I) \cup R_k(I)$.  Observe that $I \cap I_0= \prod_{k=1}^d M_k(I)$. Therefore, if $\sigma= (\sigma_1, \ldots, \sigma_d) \in \{L, M, R\}^d$ and $$\sigma^{(k)}(I) =\begin{cases}
    &L_k(I), \text{ if } \sigma_k=L\\
     &M_k(I), \text{ if } \sigma_k=M\\
      &R_k(I), \text{ if } \sigma_k=R,
\end{cases}$$ it follows that 
\begin{align} \label{eq:peeling}
    I \cap I_0 = \cup_{\sigma \neq (M, \ldots, M)} \prod_{k=1}^d \sigma^{(k)}(I) .
\end{align}
Note that the $\sigma^{(k)}(I)$'s depend on $I$ through the $L_k(I), M_k(I)$ and $R_k(I)$'s. The representation \eqref{eq:peeling} facilitates a piecemeal application of Assumption \ref{asmp:doob} and \ref{lemma:anchoring}. We call \eqref{eq:peeling} a \textit{Peeling} representation of $I\cap I_0$, in that it resembles peeling $I\cap I_0$ into an union of disjoint rectangles along different axes; the name is also justified since it is on the individual rectangles $I(\sigma):= \prod_{k=1}^d \sigma^{(k)}(I)$ that we will apply the ``peeling'' trick of dyadic decompositions. To that end, let us introduce another, more general partition of the set of rectangles $\mathcal{R}:=\{I_{[a,b]}: \bb{1}\leq a \leq \bb{n}, a,b\in \Z^d\}$. Let     \[ \mathcal{R}^k = \{I: I=[a,b], 1\leq a < b \leq n\}, \ k\in [d]\]
be the set of corresponding $k$-th dimension slice. In particular, with $I_0^k=I_{[n_k \tau_{1,k}, n_k \tau_{2,k}]}$, $k\in [d]$, let
\begin{itemize}
        \item $\mathcal{P}_1^k:=\{I:I \subseteq I_0^k\}$; 
        \item  $\mathcal{P}_2^k:=\{I:I \supseteq I_0^k\}$;
        \item  $\mathcal{P}_3^k:=\{I:I \cap I_0^k=\phi\}$;
        \item  $\mathcal{P}_4^k:=\{I:a< n_k \tau_{1,k} < b < n_k \tau_{2,k}\}$;
        \item  $\mathcal{P}_5^k:=\{I: n_k \tau_{1,k} <a < n_k \tau_{2,k} <b\}$; 
    \end{itemize}
Clearly, $\mathcal{R}^k=\cup_{i=1}^5P_{i}^k$.
Then, a partition of $\mathcal{R}$ can be represented as 
\begin{align} \label{eq:big-partition}
    \bigg\{ \prod_{k=1}^d \mathcal{P}_{\alpha_k}^k:  \bb{\alpha}:=(\alpha_1, \alpha_2, \ldots, \alpha_d)\in \{1,\ldots, 5\}^d\bigg\}.
\end{align}
In view of \eqref{eq:big-partition}, we essentially have to deal with $5^d$ cases. Fix some $\bb{\alpha} \in [5]^d$, and let $\mathcal{P}_{\alpha}=\prod_{k=1}^d \mathcal{P}_{\alpha_k}^k$. Observe that if $\alpha_k=3$ for some $k\in [d]$, then $I \cap I_0 = \phi$. Therefore, for the sake of exposition, we consider the hardest case $\bb{\alpha}\in \{1,2,4,5\}^d$.
\begin{align}
    \IP(\sup_{I \in \mathcal{D}_0 \cap \mathcal{P}_{\bb{\alpha}}} \frac{G_3(I)}{x_I} \geq \frac{\delta}{n\sqrt{\tau_n}} ) & \leq \IP(\sup_{I \in \mathcal{D}_0\cap \mathcal{P}_{\bb{\alpha}}} \frac{S^{\varepsilon}_2 - S^{\varepsilon}_4}{|x_I|} \geq \mathcal{C}_0 \delta ) \nonumber\\
    & \leq \sum_{j=\lfloor\log_2 M_\eta r_{n, \delta}^{-1} \rfloor}^{\lceil \log_2 \sqrt{n}\delta^{-1}\rceil} \bigg( \IP(\sup_{\substack{I \in \mathcal{D}_0 \cap \mathcal{P}_{\bb{\alpha}}\\2^j<|x_I|<2^{j+1} }
    } |S_2^\varepsilon| > \mathcal{C}_0 \delta 2^j ) + \IP(\sup_{\substack{I \in \mathcal{D}_0 \cap \mathcal{P}_{\bb{\alpha}}\\2^j<|x_I|<2^{j+1} }
    } |S_4^\varepsilon| > \mathcal{C}_0 \delta 2^j ) \bigg).
   \label{eq:G3} 
\end{align}
Due to the similarity of the two terms in \eqref{eq:G3}, we only elaborate on the treatment of \begin{align} \label{eq:S2-big}
    \IP\Big(\sup_{\substack{I \in \mathcal{D}_0 \cap \mathcal{P}_{\bb{\alpha}}\\2^j<|x_I|<2^{j+1} }
    } |S_2^\varepsilon| > \mathcal{C}_0 \delta 2^j \Big). 
\end{align} 
 Depending on $\alpha_k$, we can further restrict the set of $\sigma$'s in the corresponding \textit{Peeling} representation of $I\cap I_0^c$ for a candidate rectangle $I$. Let $\sigma_k(\alpha_k)$ denote the particular set of choices for $\sigma_k$ given a $ \alpha=(\alpha_1, \ldots, \alpha_d)$. 
\begin{itemize}
    \item $\alpha_k=1 \implies \sigma_k(\alpha_k)=\{M\}, \  \sigma^{(k)}(I) = M_k(I) = [a_k, b_k]$.
    \item $\alpha_k=2 \implies \sigma_k(\alpha_k) = \{L, M, R \}, \  \sigma^{(k)}(I) \in \{L_k(I), M_k(I), R_k(I)\}$, where
    \[ L_k(I) = [a_k, n_k \tau_{1,k}]; \ M_k(I) = [ n_k \tau_{1,k}, n_k \tau_{2,k}], \text{ and } R_k(I) =[ n_k \tau_{2,k}, b_k].\]
    \item $\alpha_k=4 \implies \sigma_k(\alpha_k) = \{L, M\}, \ \sigma^{(k)}(I) \in \{L_k(I), M_k(I)\}$, where  
    \[ L_k(I) = [a_k, n_k \tau_{1,k}]; \ M_k(I) = [ n_k \tau_{1,k}, b_k].\]
    \item $\alpha_k=5 \implies \sigma_k(\alpha_k)= \{M, R\},\ \sigma^{(k)}(I) \in \{M_k(I), R_k(I)\}$, where 
     \[ M_k(I) = [a_k, n_k \tau_{2,k}],  R_k(I) =[ n_k \tau_{2,k}, b_k].\]
\end{itemize}
 Further, denote
\[ \sigma(\bb{\alpha}):= \{ (\sigma_1, \ldots, \sigma_d): \sigma_k \in \sigma_k(\alpha_k)\} \setminus \{M, \ldots, M\} .\]
Note that, due to the \textit{Peeling} representation \eqref{eq:peeling},
\begin{align}
    \sup_{I \in \mathcal{P}_\alpha}|S_2^\varepsilon| \leq \sum_{\sigma \in \sigma(\bb{\alpha})} \sup_{I \in \mathcal{P}_\alpha}|S_{I(\sigma)}^\varepsilon|, \ I(\sigma):= \prod_{k=1}^d \sigma^{(k)}(I). \label{eq:peeling-decomp}
\end{align}
For each $\sigma$, there exists $k_0$ such that $\sigma_{k_0} = L$ or $\sigma_k =R$. We focus on the first case, since the other case can be tackled symmetrically. Without loss of generality, let $k_0=d$. Evidently, then $I(\sigma) \in \mathcal{M}(2^{j+1})$, where $\mathcal{M}(m)$ is defined as in Lemma \ref{lemma:anchoring}. At this point, the end-points $\bb{a}, \bb{l}$ of the rectangles $I(\bb{a}, \bb{l})$ in $\mathcal{M}(m)$ may seem to be unrestricted, but we can further restrict the box by exploiting the condition $|x_I| < 2^{j+1}$.

 Indeed, for a generic $I:= I_{[a,b]}$ with $I \cap I_0\neq \phi$, $|x_I|< 2^{j+1}$ immediately implies that $\max_{j\in [d]}(|a_j - n_j \tau_{1,j}| \ \vee \ |b_j - n_j \tau_{2,j}|) < 2^{j+1}$. Consequently, $I(\sigma)$ can be enclosed in the following box:
\begin{align}
    I \subseteq \mathcal{B}:= \prod_{k=1}^d [n_k \tau_{2,k} - 2^{j+2}, n_k \tau_{2,k} + 2^{j+1}].  \label{eq:big-box}
\end{align} 
 From \eqref{eq:S2-big}-\eqref{eq:big-box}, one obtains, for each $\sigma \in \sigma(\bb{\alpha})$, that
\begin{align}
    \IP\Big( \sup_{\substack{I \in \mathcal{D}_0 \cap \mathcal{P}_{\bb{\alpha}}\\|x_I|<2^{j+1} }}|S_{I(\sigma)}^\varepsilon| > \mathcal{C}_0 \delta 2^j \Big) 
   \leq & \ \IP\Big( \sup_{\substack{I \in \mathcal{M}(2^{j+1})\\ I \subseteq \mathcal{B}}}|S_{I}^\varepsilon|  > \mathcal{C}_0 \delta 2^j\Big) \nonumber\\
   \lesssim & \frac{2^{jp/2} (\log 2^j)^{\frac{d-1}{p}} (2^j)^{\frac{d-1}{p}} }{\delta^p 2^{jp}} \label{eq:appl-of-new-lemma}\\
   \lesssim & \frac{j^{\frac{d-1}{p}} (2^j)^{\frac{d-1}{p} - \frac{p}{2}} }{\delta^p}, \label{eq:HR}
\end{align}
where, \eqref{eq:appl-of-new-lemma} involves an application of Lemma \ref{lemma:anchoring}.
Plugging \eqref{eq:HR} into \eqref{eq:G3} yields that
\begin{align}
     \IP(\sup_{I \in \mathcal{D}_0 \cap \mathcal{P}_{\bb{\alpha}}} \frac{G_3(I)}{x_I} \geq \frac{\delta}{n\sqrt{\tau_n}} )   & \lesssim \delta^{-p}\sum_{j=\lfloor\log_2 M_\eta r_{n, \delta}^{-1} \rfloor}^{\lceil \log_2 \sqrt{n}\delta^{-1}\rceil} j^{\frac{d-1}{p}} (2^j)^{\frac{d-1}{p} - \frac{p}{2}} \nonumber\\
     & \lesssim \delta^{-p} (\log_2 \sqrt{n}\delta^{-1})^{\frac{d-1}{p}} (M_{\eta}r_{n, \delta}^{-1})^{\frac{d-1}{p} - \frac{p}{2}}\label{eq:penultimate-HR}\\
     & \leq M_{\eta}^{\frac{d-1}{p} - \frac{p}{2}},\label{eq:final-HR}
\end{align}
where we have used $$r_{n,\delta}=\delta^{\frac{2}{1- 2(d-1)/p^2}} (\log_2 \frac{\sqrt{n}}{\delta})^{- \frac{2}{p^2/(d-1) -2}},$$ to simplify \eqref{eq:penultimate-HR} into \eqref{eq:final-HR}. Finally, since $p>\sqrt{2(d-1)}$ implies that $\frac{d-1}{p} - \frac{p}{2}<0$, hence, $M_{\eta}$ can be chosen to make \eqref{eq:final-HR} arbitrarily small. Therefore, from \eqref{eq:G1}, \eqref{eq:G2}, \eqref{eq:G3} and \eqref{eq:final-HR}, one obtains $\IP(\sup_{I \in \mathcal{D}_0}\frac{G(I)}{x_I} > \frac{\delta}{n\sqrt{c_n}})<\eta$ for all sufficiently large $n$. This completes the proof by establishing a control on \eqref{eq: abs-decomp}.1.2.

\subsection{Control on $\eqref{eq: abs-decomp}.2$}
We further divide this into two sub-cases as follows.
\begin{align}
    \IP(\sup_{I \in \mathcal{D}_0} V_I + V_{I_0} \leq 0) &\leq \IP(\sup_{I \in \mathcal{D}_0: V_I\geq 0} V_I + V_{I_0} \leq 0) + \IP(\sup_{I \in \mathcal{D}_0: V_I\leq 0} V_I + V_{I_0} \leq 0):= \eqref{eq: abs-decomp}.2.1+ \eqref{eq: abs-decomp}.2.2. \label{eq:second-case-decomp}
\end{align}
Write $V_I= V_I^{\varepsilon}+ V_I^\mu$. When $V_I>0$, it is immediate that $V_I + V_{I_0} \leq 0 \iff V_{I}^\varepsilon+V_{I_0}^\varepsilon \leq - V_I^\mu - V_{I_0}^\mu \leq -V_{I_0}^{\mu}$, and therefore, from $V_{I_0}^\mu=\delta \tau_n(1-\tau_n)$, one obtains,
\begin{align}
    \eqref{eq: abs-decomp}.2.1 &\leq \IP(\sup_{I \in \mathcal{D}_0: V_I\geq 0}|V_I^{\varepsilon}| \geq \frac{\delta \sqrt{\tau_n(1-\tau_n)}}{2})+ \IP(|V_{I_0}^{\varepsilon}| \geq \frac{\delta \sqrt{\tau_n(1-\tau_n)}}{2}) \nonumber\\ &\leq 2\IP(\sup_{I: |I|> nc_n}  |V_I^{\varepsilon}| \geq \frac{\delta \sqrt{\tau_n(1-\tau_n)}}{2})\nonumber\\=& o(1) \nonumber,
\end{align}
where $o(1)$ bound occurs by a treatment following verbatim from the corresponding analysis of the term \eqref{I-II}\text{.1} in \eqref{I-II}. Next, we show that \eqref{eq: abs-decomp}.2.2 is exactly zero. Indeed, from \eqref{eq: expression for V_I_mu}, $V_I^{\mu}<0$ trivially reduces to \begin{equation*}
 x_3x_1 < x_4x_2 \iff \frac{x_3}{x_2}<\frac{\tau_n}{1-\tau_n} \iff   |I| <  \frac{x_2}{1-\tau_n},
\end{equation*}
which, light of $I \in \mathcal{D}_0$, implies that 
\begin{align}\label{eq:neg-contra}
\mathcal{C}_0 n\tau_n < |I|< C_{\eta} (1-\tau_n)^{-1} \sqrt{n}\delta^{-1} \leq   2^{-1}C_{\eta}\sqrt{n}\delta^{-1}.
\end{align}
Clearly, in view of $nc_n^2\delta^2 \to \infty$, \eqref{eq:neg-contra} constitutes a contradiction for all sufficiently large $n$, showing that $\eqref{eq: abs-decomp}.2.2=0$.

\section{Proof of Theorem \ref{thm:single-consistency}}\label{app:proof-2}
    Recall $m$ from Algorithm \ref{algo:subsample}. A direct application of Theorem \ref{thm:epiconsistency} yields the error rate for our first stage estimators. More formally, let $r_{m, \delta}$ be defined as $r_{n, \delta}$ in Theorem \ref{thm:epiconsistency}, but with $n$ replaced by $m \asymp n^{1-\alpha}$. Since $m c_n^2 \to \infty$, given $\varepsilon >0$, Theorem \ref{thm:epiconsistency} instructs that there exists $M_{\eta}$ sufficiently large, such that for every $k\in [d]$,
     \begin{align}
         a_{I,k} &\in \mathcal{L}_k:=[M_k\tau_{1,k} - M_{\eta}r_{m, \delta}^{-1}, M_k\tau_{1,k} + M_{\eta} r_{m, \delta}^{-1}], \text{ and } \nonumber\\
         b_{I,k} &\in \mathcal{R}_k:= [M_k\tau_{2,k} - M_{\eta}r_{m, \delta}^{-1}, M_k\tau_{2,k} + M_{\eta} r_{m, \delta}^{-1}], \label{eq:1ststage}
     \end{align}
     holds with probability $\geq 1-\eta$, i.e. $\IP(\mathcal{A}) \geq 1-\eta$, where the event in \eqref{eq:1ststage} is denoted by $\mathcal{A}$. Let $\mathcal{L}_B= \prod_{k=1}^d \mathcal{L}_k$, and  $\mathcal{R}_B= \prod_{k=1}^d \mathcal{R}_k$. Clearly, at the second stage, it holds that 
     \begin{align}\label{eq:1ststage-decomp}
         \IP(|\Tilde{I} \Delta I_0|> G_\eta r_{n, \delta}^{-1}) \leq \sup_{ \bb{s} \in \mathcal{L}_B, \bb{t} \in \mathcal{R}_B} \IP(|\Tilde{I} \Delta I_0|> G_\eta r_{n, \delta}^{-1} \ | a_I =\bb{s}, b_I= \bb{t}) + \eta.
     \end{align}
     We note that conditional on $\mathcal{A}$, \eqref{eq:1ststage} instructs
     \begin{align}
         |b_{I,k}-a_{I,k}| \geq M_k (\tau_{2,k}-\tau_{1,k}) - 2 M_{\eta} r_{m, \delta}^{-1}. \label{eq:a-I-lb}
     \end{align}
     Conditional on the event $\{a_I =\bb{s}, b_I=\bb{t} \}$, define the set of rectangles 
     \begin{align}
     \mathcal{P}_{\bb{s},\bb{t}}&=\Bigg\{I_{[\bb{i},\bb{j}]}: \bb{i} \in \prod_{k=1}^d\Big[s_k L_k- CL_k n_k^{\kappa}(\log n)^{1/d}, s_k L_k + CL_k n_k^{\kappa}(\log n)^{1/d}\Big], \ \nonumber\\ & \hspace{2cm} \bb{j} \in \prod_{k=1}^d\Big[t_k L_k- CL_k n_k^{\kappa}(\log n)^{1/d}, t_k L_k + CL_k n_k^{\kappa}(\log n)^{1/d}\Big]  \Bigg\}.
     \end{align}
     Evidently, $\mathcal{P}_{\bb{s},\bb{t}}$ is motivated directly from the definitions of $\hat{L}_B$ and $\hat{R}_B$ from Algorithm \ref{algo:subsample}.
     Let $\bb{n}$ be sufficiently large that $(\min_k n_k)^\kappa > M_\eta r_{m, \delta}^{-1}.$ Crucially, note that $r_{m, \delta} \asymp r_{n, \delta}$, since $\log m \asymp \log n$. Then, conditional on $\mathcal{A} \cap \{a_I =\bb{s}, b_I=\bb{t} \}$, $I_0 \in \mathcal{P}_{\bb{s},\bb{t}}$. Let $\mathcal{D}_{\bb{s},\bb{t}}:= \{I: I \in \mathcal{P}_{\bb{s},\bb{t}}, \ |I \Delta I_0|> G_\eta r_{n, \delta}^{-1}\}$. Then, for a fixed $s<t$, it holds that 
     \begin{align}
         \IP(|\Tilde{I} \Delta I_0|> G_\eta r_{n, \delta}^{-1} \ | a_I =s, b_I= t) \leq \IP(\sup_{I \in \mathcal{D}_{\bb{s},\bb{t}}} |V_I| \geq |V_{I_0}| \ | a_I =\bb{s}, b_I=\bb{t}), \nonumber
     \end{align}
     where $V_I=b(|I|)(\bar{X}_I- \bar{X}_{I^c})$, where $b(k)=\sqrt{k(n-k)n^{-2}}$. Note that, if $I \in \mathcal{P}_{\bb{s}, \bb{t}}$, then with $\tau_k = \tau_{2,k}-\tau_{1,k}$ it follows 
     \begin{align}
         |I| &\geq \prod_{k=1}^d \Big( (t_k -s_k)L_k - 2CL_k n_k^{\kappa}(\log n)^{1/d} \Big)\nonumber\\
        & \overset{(a)}{\geq} \prod_{k=1}^d (n_k \tau_k - 2 M_{\eta} L_k r_{n, \delta}^{-1} - 2L_k n_k^\kappa (\log n)^{1/d} )\nonumber\\
        & \overset{(b)}{\geq} \prod_{k=1}^k (n_k \tau_k -  4L_k n_k^\kappa (\log n)^{1/d} )\nonumber\\
        & \geq \mathcal{C}_0 n c_n, \label{eq:condn-lower-bound} \ \text{a.s.}
     \end{align} 
     where, $(a)$ follows from \eqref{eq:a-I-lb}; $(b)$ follows from $(\min_k n_k)^\kappa > M_\eta r_{m, \delta}^{-1}$, and \eqref{eq:condn-lower-bound} is the consequence of the choice of $\alpha$ guaranteeing $\alpha+\kappa<1$, culminating in the final bound for a small enough constant $\mathcal{C}_0$. Similarly, it can be shown that if $I\in \mathcal{P}_{s,t}$, then $|I|\leq n(1- \mathcal{C}_1 c_n)$ for a small constant $\mathcal{C}_1$. Therefore, it follows that conditional on $\mathcal{A} \cap \{a_I=\bb{s}, b_I = \bb{t} \}$, $\mathcal{D}_{s,t}\subseteq \mathcal{D}$, where we recall $\mathcal{D}$ from the proof of Theorem \ref{thm:epiconsistency}. Clearly, Theorem \ref{thm:epiconsistency} instructs that, 
     \[ \IP(\sup_{I \in \mathcal{D}_{\bb{s},\bb{t}}} |V_I| \geq |V_{I_0}| \ | a_I =\bb{s}, b_I= \bb{t}) \leq \IP(\sup_{I \in \mathcal{D}} |V_I|\geq |V_{I_0}| ) < \eta, \]
     upon choosing $G_{\eta}$ appropriately. This completes the proof in light of \eqref{eq:1ststage-decomp}. 

\section{Proof of Theorem \ref{thm:multiple}} \label{app:proof-3}
The proof of Theorem \ref{app:proof-3}, while quite involved, mostly consists of sequential validation of each step of Algorithm \ref{algo:multiple-subsample}. In particular, in Step 1, we establish the validity of our testing mechanism in identifying small blocks inside anomalous rectangles. In Step 2, we leverage the uniform Gaussian approximation assumption \ref{ass:fclt} along with our careful deletion steps, to argue that the blocks with significant intersection with the background noise will not be selected by our mechanism. Steps 1 and 2 together show that we will select the correct number of patches in our algorithm with probability approaching $1$. Finally, in Step 3, we provide individual level localization rate for each anomalous patch, rounding off the theoretical analysis of SPLADE. 
\subsection{Step 1} Let for each $j\in [K]$,  $\widetilde{\mathcal{B}}_j=\{\bb{s}: B_{\bb{s}} \subseteq I_j\}$, and let $\widetilde{\mathcal{B}} = \cup_{j=1}^K \mathcal{B}_j$. Recall that, $|B_{\bb{s}}|\asymp n^{\alpha}$, and from Assumption \ref{ass:min-sep}, $\min_{j\in [K]} |b_{jl}- a_{jl}| \gg  n_l^{\alpha} \log^{1/d} n$ for all $l \in [d]$. Therefore for all sufficiently large $n$, it must hold that for each $j\in [K]$, $|\widetilde{\mathcal{B}}_j|\gg \log n$. Let $\bar{X}_{\bb{s}}$ be defined the same as in Step 9 of Algorithm \ref{algo:multiple-subsample}. In the following, we show that 
\begin{align}
    \IP\bigg(\min_{j\in [K]}\min_{\bb{s} \in \widetilde{\mathcal{B}}_j}|\bar{X}_{\bb{s}}| > \mathbf{Q}\bigg) \to 1, \ \text{as $n \to \infty$}. \label{eq:step-1-mult}
\end{align}
 Indeed, it follows that for a fixed $j \in [K]$, and $\bb{s}\in \widetilde{\mathcal{B}}_j$
\begin{align}
    \IP(|\bar{X}_{\bb{s}}| < \mathbf{Q}) &\leq \IP(|\bar{\varepsilon}_{\bb{s}}| > |\delta_j|- \mathbf{Q}) \leq \frac{O(n^{-p\alpha/2})}{(|\delta_j|- \mathbf{Q})^p},  \nonumber
\end{align}
which directly implies
\begin{align}
    \IP\bigg(\min_{j\in [K]}\min_{\bb{s} \in \widetilde{\mathcal{B}}_j}|\bar{X}_{\bb{s}}| \leq \mathbf{Q}\bigg) \lesssim \max_{j\in [K]} \frac{|I_j|}{n^{\alpha}} \frac{O(n^{-p\alpha/2})}{(|\delta_j|- \mathbf{Q})^p} \overset{(a)}{=} O (\max_{j\in [K]}n^{1-\alpha(p-1)/2}c_{nj} \delta_j^{-p}  ) \to 0, \nonumber
\end{align}
where in $(a)$ we used \eqref{eq:jump-min} together with the Gaussian tail bound $\mathbf{Q}\asymp \sqrt{\frac{\log n}{n^{\alpha}}}$ to conclude that $\min_{j\in[K]}|\delta_j|\gg \mathbf{Q}.$
This shows \eqref{eq:step-1-mult}.

\subsection{Step 2} In this step, we show that $\IP(\hat{K}=K)\to 1$ as $n \to \infty$. To that end, note that by the construction of $\bar{M}$, and since $C_j$'s and $\widetilde{\mathcal{B}}_j$'s are respectively disjoint, it holds
\begin{align}
    &\IP\bigg(\min_{j\in [K]}\min_{\bb{s} \in \widetilde{\mathcal{B}}_j}|\bar{X}_{\bb{s}}| > \mathbf{Q}\bigg) \leq \IP(\Acal_{n1}), \nonumber\\
    & \text{where } \Acal_{n1}:=\{\text{For every $j \in [K]$, there exists $i_j\in [\hat{K}]$, such that $\widetilde{\mathcal{B}}_j \subseteq C_{i_j}$}\}. \label{eq:ub-mcp-2},
\end{align}
Let $\IP_{E,F}(\cdot)=\IP(\cdot \ \cap E \ \cap F )$ for any events $E$, $F$. At this stage, the relationship between $\hat{K}$ and $K$ is still not entirely clear. Subsequently, we will show that under the event $\Acal_{n1}$, the mapping $j\mapsto i_j$ is injective, establishing that $\hat{K}\geq K$ with high probability. To that end, suppose there exists $k_1 < k_2\in [K]$ such that $i_{k_1}=i_{k_2}$. Let the common component $C_{i_{k_1}}= C_{i_{k_2}}$ be denoted by $C$. Without loss of generality, suppose that $\nu^{\star}_{I_{k_1} , I_{k_2}}=1$. Further, without loss of generality, we can assume that $\max\{0, a_{k_2,1} - b_{k_1,1}, a_{k_1,1} - b_{k_2,1}\}:= a_{k_2,1} - b_{k_1,1}$. Let $s_{k_1,1} = \lceil b_{k_1,1}/ L_1\rceil$, $s_{k_2,1}=\lceil a_{k_2,1}/ L_1\rceil$. Note that $s_{k_2,1} - s_{k_1,1}$ computes the gap between the rectangles $I_{k_1}$ and $I_{k_2}$ projected into the first dimension. Moreover, by Assumption \ref{ass:min-sep}, $s_{k_2,1} - s_{k_1,1} \geq n_1^{\alpha} \log^{1/d}n$. Consider the set $$A= \{\bb{l} \in \prod_{k=1}^d [M_k(I)]: s_{1,1} < l_1 < s_{2,1} \}.$$ Because $C$ is connected in $\mathbb{R}^d$, its projection $\pi_1(C)=\{x_1:x\in C\}\subset \mathbb{R}$ is connected and hence an interval; moreover, we have $[b_{1,1},\,a_{2,1}] \subset \pi_1(C)$. Hence, for each integer $r$ with $s_{1,1}\le r\le s_{2,1}$, there exists $\bb{s}_r\in [\bb{n}]$ with $\bb{s}_{r,1}=r$ and $B_{\bb{s}_r} \in C$. Clearly, by definition of rectangles, $s_{k_1,1}$ and $s_{k_2,1}$, $B_{\bb{s}_r} \notin I_{k_1} \cup I_{k_2}$ for all $s_{1,1}\le r\le s_{2,1}$. However, Assumption \ref{ass:fclt} instructs that
\begin{align}
    & \hspace{0.5cm} \IP_{\Acal_{n1}}(\text{There exists $\bb{s}_{s_{1,1}}, \ldots, \bb{s}_{s_{2,1}}$ such that $B_{\bb{s}_r} \in I_1^c \cap I_2^c \cap C$}) \nonumber\\
     & \leq \IP(\text{There exists at least $\log n$ many points $\bb{s} \in \prod_{k=1}^d [M_k(I)]$ with $\bar{\varepsilon}_{\bb{s}} > \mathbf{Q}$}) \nonumber\\
     & \overset{(b)}{\leq}  o(1) + \IP(\text{There exists at least $\log n$ many points $\bb{s} \in \prod_{k=1}^d [M_k(I)]$ with $|B_{\bb{s}}|^{-1} \mathbb{W}_{\bb{s}} > \mathbf{Q}/\sigma$}) \nonumber\\
      &\lesssim o(1) + (\log n)^{-\log n} = o(1) \label{eq:ub-mcp-3},
\end{align}
where $(b)$ follows from Assumption \ref{ass:fclt} and $n^{\alpha/2} \gg (\log n)^{-1/2} |\bb{n}|_{\infty}^{d/q}$. 
 Therefore, from \eqref{eq:ub-mcp-2} and \eqref{eq:ub-mcp-3}, jointly with Step 1, it follows that $\IP(\hat{K} \geq K)\to 1$. Before we show the other direction, we recalibrate by letting $\Acal_{n2}=\{\text{$C_{i_k}$'s are mutually disjoint for $k\in [K]$}\}$, and realizing that we have shown 
 \[ \IP(\Acal_{n1} \cap \Acal_{n2})\to 1, \text{ as $n\to \infty$}. \]
Now if $\hat{K}> K$, then under the event $\Acal_{n1}\cap \Acal_{n2}$, there exists $j\in [\hat{K}]$ such that $C_j$ and $ \cup_{s \in \mathcal{B}} B_s$ are disjoint. Consequently, it must be true that $|C_j \cap (\cup_{k=1}^K I_k)| \leq n^{\alpha}.$ Note that, by construction of $C_j$'s in Algorithm \ref{algo:multiple-subsample}, $|C_j| \geq c n^\alpha\sqrt{\log n}$. Therefore it must be true that there are at least $2^{-1}c \sqrt{\log n}$ many $\bb{s}$'s such that $B_s \cap C_j \cap (\cup_{k=1}^K I_j)=\phi$, and $\bar{X}_{\bb{s}} > \mathbf{Q}$. Hence it follows similar to \eqref{eq:ub-mcp-3} that
\[ \IP_{\Acal_{n1}, \Acal_{n2}}(\hat{K}>K) \to 0, \text{ as $n\to \infty$}, \]
which immediately implies \eqref{eq:correct-number-est}.
\subsection{Step 3}
In this step, we show \eqref{eq:multpl-consistency} conditional on $\Acal_{n3}:=\{\hat{K}=K\}$. Under $\Acal_{n3},$ without loss of generality, we can assume $i_j=j$ , $j \in [K]$. Conditional on $\Acal_{n3}$, in this step we establish the piecewise consistency of $\hat{I}_j$ in estimating the true rectangle $I_j$. Recall that $\hat{I}_j$ is obtained by implementing Algorithm \ref{algo:subsample} on the random rectangle $D_j$. The sets $D_j$'s themselves can be thought of as an enlargement of the random sets $C_j$'s into rectangles, so as to enable an application of Algorithm \ref{algo:subsample}. To facilitate further analysis, it is imperative that the sets $D_j$'s are disjoint with high probability. 

To that end, first observe that under $\Acal_{n1}$, $I_j \subseteq C_j$, and therefore, $I_j \subseteq D_j$. Consider the deterministic rectangles 
\[ D_j^\dagger:= \prod_{k=1}^d [n_k\tau_{1,k}^j- \frac{c_0}{2}n_k^\alpha \log^{3/2}n, n_k\tau_{2,k}^j + \frac{c_0}{2}n_k^\alpha \log^{3/2}n], \]
where $c_0$ is as in Assumption \ref{ass:min-sep}. Clearly, $D_j^\dagger$ are disjoint by invoking Assumption \ref{ass:min-sep}. We will show that $D_j \subseteq D_j^\dagger$ with high probability. Observe that, a proof similar to \eqref{eq:ub-mcp-3} can be employed to deduce that under, $\IP_{\Acal_{n1},\Acal_{n2}, \Acal_{n3}}(|C_j \setminus I_j|\leq \log n)\to 1$. Let $\Acal_{n4}:=\{ |C_j \setminus I_j|\leq \log n\}$.  Recall from Algorithm \ref{algo:multiple-subsample} that
\[ \ell_k^j = \min_{\bb{s}\in C_j} s_k, \ r_k^j= \max_{\bb{s}\in C_j} s_k. \]
Under $\mathcal{A}_{n4}$ it follows that 
\begin{align}
\min\{ |L_k \ell_k^j  - n_k\tau_{1,k}^j|, |L_k r_k^j  - n_k\tau_{2,k}^j| \}< \log n. \label{eq:Dj-gap} 
\end{align}
From the definition of $D_j$ and $D_j^\dagger$, it follows from \eqref{eq:Dj-gap} that
\begin{align}
    \IP_{\Acal_{n1},\Acal_{n2}, \Acal_{n3}, \Acal_{n4}}(I_j \subseteq D_j \subseteq D_j^\dagger) \to 1, \ \text{as $n\to \infty$.} \label{eq:D_j-consistency}
\end{align}
Let $\Acal_{n5}:=\{I_j \subseteq D_j \subseteq D_j^\dagger, j\in [K]\}$. For our final step, we analyze Algorithm \ref{algo:subsample} in the context of $D_j$ and $D_j^\dagger$. To that end, let us consider the naive, least-square based estimator on $D_j$. More formally, let
\begin{align}
    \tilde{I}_j&:= \argmax_{I \subset D_j, |D_j|\lambda_2>|I|> |D_j|\lambda_1} \sqrt{\frac{|I|(|D_j|-|I|)}{|D_j|^2}} |\bar{X}_I - \bar{X}_{I^c}| 
\end{align} 
Fix $\varepsilon>0$. Observe that, in light of $|I_j|\gg n^\alpha \log n$, one derives $r_{|D_j^\dagger|, \delta_j} \asymp r_{|I_j|, \delta_j} \asymp r_{n, \delta_j}$. Consequently, it is immediate that
\begin{align}
    \IP_{\cap_{u=1}^4 \Acal_{n,u}}(| \tilde{I}_j \Delta I_j|> M_{\eta}r_{|I_j|, \delta_j}) &\leq \IP_{\cap_{u=1}^4 \Acal_{n,u}}\big(\sup_{I\subseteq D_j: |I \Delta I_j|> M_{\eta}r_{|I_j|, \delta_j}} |V_I| \geq |V_{I_j}| \big) \nonumber\\ &\leq \IP(\sup_{I\subseteq D_j^\dagger: |I \Delta I_j|> M_{\eta}r_{|I_j|, \delta_j}} |V_I| \geq |V_{I_j}|)<\eta \label{eq:I-tilde-analysis}
\end{align}
where the choice of $M_\eta$ ascertains the control by $\eta$ via Theorem \ref{thm:epiconsistency}. Therefore, for the sub-sampling step of Algorithm 1, as long as the choice of $\alpha_j$ in Algorithm \ref{algo:subsample} satisfies \eqref{eq:indv-I_j-tradeoff}, the first stage localization around the end-points of $I_j$, similar to \eqref{eq:1ststage}, is achieved with high probability, conditional on $\cap_{u=1}^5\Acal_{n,u}$. Thereafter, an argument verbatim to that of Theorem \ref{thm:single-consistency} can be employed to conclude \eqref{eq:multpl-consistency}, and thus we omit the details.

\section{Deferred implementation details for SPLADE: How to get $\mu_0$ and $\sigma$?}\label{ssc:musigma}
An important aspect of Theorem~\ref{thm:multiple} is the requirement that the baseline mean $\mu_0$ and long-run variance $\sigma^2$ be known, which may not hold in many practical applications. Therefore, we briefly discuss a procedure for estimating both the parameters even in presence of anomalous patches. Consider the \textit{boundary layer with thickness $\beta$}
\[I_n^{\mathrm{bdry}}
= \left\{
\mathbf{i} = (i_1,\ldots,i_d) \in I_n :
\exists\, j \in [d] \ \text{s.t.}\ i_j \le n_j^\beta
\ \text{or}\ i_j \ge n_j - n_j^\beta + 1
\right\}.
\]
Figure \ref{fig: est-mu-sigma} depicts $I_n^{\mathrm{bdry}}$ in the case of $d=2$. In light of Assumption \ref{asmp:away-from-boundary}, $I_n^{\mathrm{bdry}}$ is disjoint from any of the anomalous patches, and therefore, can be safely employed to estimate both $\mu_0$ and $\sigma^2$. In particular, we replace $\mu_0$ by the corresponding sample mean over $I_n^{\mathrm{bdry}}$, which is consistent via Proposition 1 of \cite{el2013central}. On the other hand, for $\sigma^2$, we employ the Kernel-based estimators from \cite{steland2025inference}, which can also be understood as generalizations of HAC estimator (see \cite{newey-west, andrews1992}). 
\begin{figure}[htbp]
    \centering
    \includegraphics[width=0.5\linewidth]{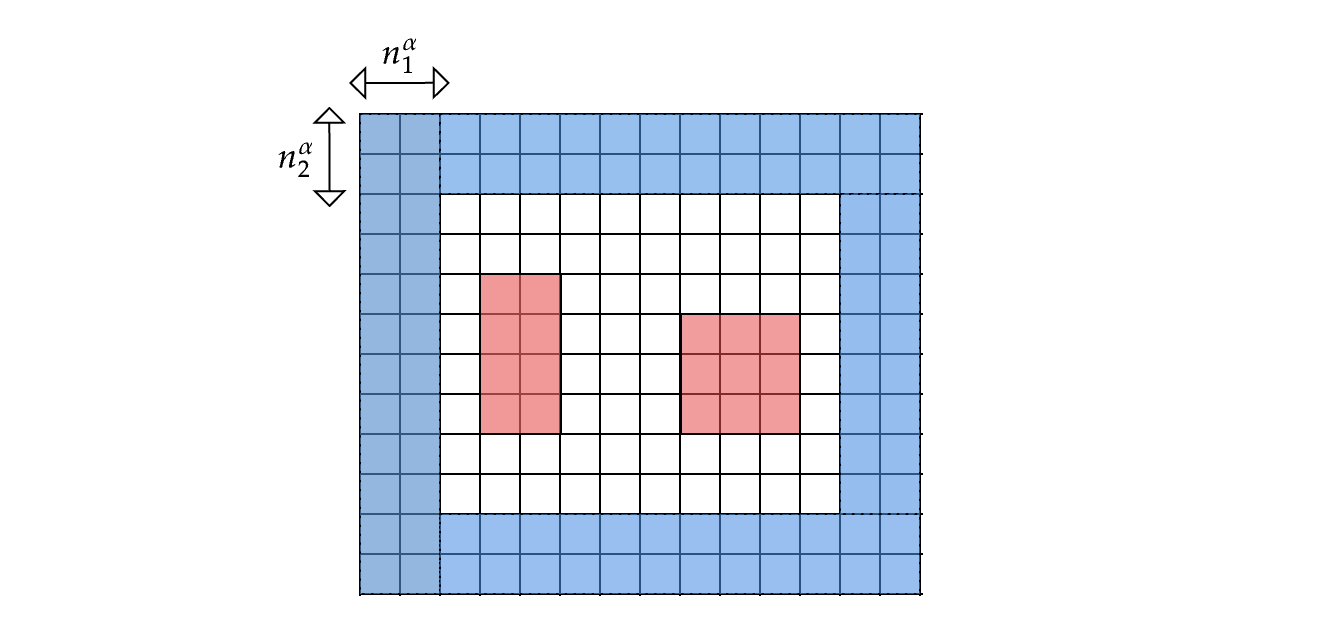}
    \caption{Example for $d=2$: estimate $\mu_0$ and $\sigma^2$ based on the \textcolor{blue}{blue shaded area}.}
    \label{fig: est-mu-sigma}
\end{figure}
Formally, let $K: \mathbb{R} \rightarrow \mathbb{R}$ be a symmetric kernel with bounded support $[-\omega, \omega]$, with $K \in \mathcal{C}^1$, and $\sup _x\left|K^{\prime}(x)\right| \leqslant C$. With a slight abuse of notation, for $v \in \mathbb{R}^d$, let $K(v):=K\left(v_1\right) \ldots K\left(v_d\right)$. Our long-run variance estimator reads
\[
\widehat{\sigma}^2
\;=\;
\frac{1}{\lvert I_n^{\mathrm{bdry}} \rvert}\sum_{i,j\in S}
K\!\big( (i-j)/ \mathbf{B}_n \big)\,
\big(X_i-\overline{X}_{I_n^{\mathrm{bdry}}}\big)\,\big(X_j-\overline{X}_{I_n^{\mathrm{bdry}}}\big),
\]
where $\mathbf{B}_n=(B_{n,1},\dots,B_{n,d})$ with $B_{n,k}\to\infty$ and $B_{n,k}/n_k^{1/d}\to 0$. Since the theoretical properties of $\hat{\sigma}^2$ for different choices of kernel functions and bandwidths $B_{n,k}$ follows directly from \cite{steland2025inference}, we omit that discussion for brevity.


\section{Additional Simulation results}\label{se:appendix simulation}
\noindent In this section we provide some more sensitivity analysis and an extensive comparative study for a non-linear spatial distribution. For $\tilde{\alpha}>2$, let $P_\varepsilon(\tilde{\alpha})$ denote the distribution of the random variable $Z - \IE[Z]$, where $Z \sim \text{Fréchet}(\tilde{\alpha})$. Let $\sum_{\bb{s}\in \Z^d} |a_{\bb{s}}| < \infty$. Define max stable distribution as
\[ Y_{\bb{t}} = \max_{\bb{s} \in \Z^d} a_{\bb{s}} \varepsilon_{\bb{t}- \bb{s}}, \ \varepsilon_{\bb{t}} \overset{i.i.d.}{\sim} P_\varepsilon(\tilde{\alpha}). \]
\noindent We set, in this part of our simulation, $a_{\bb{s}}= 0.6^{s_1+s_2}$. In contrast to the experiments on SAR model in \S\ref{sec:simu}, where an increasing $\rho$ indicates increasing spatial dependence, the relation between dependence and the parameter $\alpha$ is more nuanced for the Fréchet scenario. In particular, a larger $\alpha$ means lighter tails, so naively speaking, increasing $\alpha$ results in less extremal values under no anomalous patch, which might mean overall a weaker level of dependence. To further investigate this, we choose $\tilde{\alpha}= 2.75 \text{ and } 3$ for the sensitivity analysis and $\tilde{\alpha}= 2.5,2.75 \text{ and } 3$ for comparative study. In both cases we demean the data to keep it comparable with the mean-zero SAR ($\rho$) cases in the main draft. The sensitivity analysis results for $\delta_{\mu}=1$ are presented in Table \ref{tab:splade_ablation_config1_stable} and the comparative studies are deferred to Tables \ref{tab:grid256_512_frechet_stacked_k3} and \ref{tab:grid256_512_frechet_stacked_k5}.

\subsection{Sensitivity analysis for SPLADE- Max Stable Distribution}\label{ssc:simuablation_app}
One sees, from Table \ref{tab:splade_ablation_config1_stable} that our method SPLADE enjoys reasonable robustness across different choices of $\alpha$ parameter in the first-stage of Algorithm \ref{algo:multiple-subsample} SPLADE. In particular, ARI is consistently high and close to 1 across grid size and Fr\'echet parameter $\tilde{\alpha}$. When contrasted with Table \ref{tab:splade_ablation_config1_sar}, Table \ref{tab:splade_ablation_config1_stable} provides an interesting insight into the performance of SPLADE in the heavy-tailed set-ups, along with potentially hinting at the optimality of the theoretical assumptions in Theorem \ref{thm:multiple}. In particular, the max stable distribution is heavy-tailed, and therefore, the moment-based prescriptions in \eqref{eq:range of alpha} and \eqref{eq:final-choice} are perhaps more accurately applicable in this scenario. Note that \eqref{eq:final-choice} will be satisfied only if $\alpha>1/2$, which corresponds to the performance  boost SPLADE enjoys in Table \ref{tab:splade_ablation_config1_stable} for $\alpha=0.5$ and $0.6$ (especially in estimating $K$, and in terms of Hausdorff distance). As $N$ increases, the asymptotic regime kicks in, and for $N=1000$, the performance of SPLADE stabilizes for $\alpha>0.5$, reflecting back the same robustness property displayed in Table \ref{tab:splade_ablation_config1_sar}. This further vindicates our choice of $\alpha=1/2$ in all our experiments.
\begin{table}[!htbp]
\centering
\scriptsize
\setlength{\tabcolsep}{4.5pt}
\renewcommand{\arraystretch}{1.15}
\caption{Ablation study of SPLADE (on $\alpha$ from Stage -1 of Algorithm \ref{algo:multiple-subsample}) for Config.~1 with $\delta_{\mu}=1$ under Fr\'echet ($\tilde{\alpha}$). Each cell reports average over 100 replicates in the order $\alpha=0.4$ / $\alpha=0.5$ / $\alpha=0.6$.}
\label{tab:splade_ablation_config1_stable}
\begin{tabular}{|c| c| c| c| c| c|}
\hline
$N$ & Fr\'echet ($\tilde{\alpha}$) & $\hat K$ mean & $I(\hat K=3)$ & ARI & Hausdorff \\
\hline
\multirow{2}{*}{500}
& 2.75
& 4.19 / 3.34 / 2.91
& 0.33 / 0.7 / 0.91
& 0.991 / 0.974 / 0.976
& 0.674 / 0.338 / 0.093 \\
\cline{2-6}
& 3.00
& 4.06 / 3.38 / 2.94
& 0.36 / 0.72 / 0.978
& 0.993 / 0.974 / 0.982
& 0.643 / 0.352 / 0.064 \\
\hline
\multirow{2}{*}{750}
& 2.75
& 4.73 / 3.50 / 3.00
& 0.19 / 0.61 / 1.00
& 0.984 / 0.982 / 0.991
& 0.816 / 0.402 / 0.022 \\
\cline{2-6}
& 3.00
& 4.50 / 3.41 / 3.00
& 0.23 / 0.69 / 1.00
& 0.985 / 0.982 / 0.991
& 0.770 / 0.327 / 0.022 \\
\hline
\multirow{2}{*}{1000}
& 2.75
& 4.86 / 3.09 / 2.83
& 0.17 / 0.91 / 0.72
& 0.993 / 0.998 / 0.933
& 0.829 / 0.093 / 0.260 \\
\cline{2-6}
& 3.00
& 4.67 / 3.08 / 2.83
& 0.19 / 0.92 / 0.75
& 0.994 / 0.998 / 0.942
& 0.813 / 0.077 / 0.232 \\
\hline
\end{tabular}
\end{table}
\subsection{Comparing SPLADE with other competing methods-Max Stable Distribution}\label{ssc:simucomparativestable}
For both the grid sizes $256 \times 256 $ and $512 \times 512$, the performance of SPLADE is comparable across different $\tilde{\alpha}$ value; specifically for the larger grid, SPLADE may even seem to perform better as $\tilde{\alpha}$ increases. This vindicates our earlier notion that increasing $\tilde{\alpha}$ might mean weakening the dependency structure. To compare with other methods, firstly we focus on configuration 1, presented in Table \ref{tab:grid256_512_frechet_stacked_k3}. Here, the Hausdorff metric for SPLADE shows dramatic improvement compared to other methods while maintaining great ARI and accuracy for number of patches. 
Moreover, speed-wise, SPLADE beats DCART uniformly. However, it is slower than TV in all cases across two tables. One could also see that TV fails in other accuracy metrics compared to SPLADE almost everywhere. The speed-up of SPLADE compared to DCART is roughly between 1.2x-2x. As expected, SPLADE outperforms both methods in all 4 accuracy metrics almost uniformly. 

For configuration 2, displayed in Table \ref{tab:grid256_512_frechet_stacked_k5}, although sometimes for larger jump sizes, DCART catches up or marginally beats SPLADE in ARI or Hausdorff metric, they have a tendency to overestimate number of patches uniformly. 
\begin{table}[htbp]
\centering
\scriptsize
\setlength{\tabcolsep}{4.2pt}
\renewcommand{\arraystretch}{1.18}
\caption{Comparison of DCART, SPLADE and TV across grid sizes, jump sizes, and Fr\'echet $\tilde{\alpha}$ for Config.\ 1 (3 patches). Each cell reports avg.\ over 100 replicates in the order DCART / SPLADE / TV.}
\label{tab:grid256_512_frechet_stacked_k3}
\resizebox{\textwidth}{!}{%
\begin{tabular}{|c|c|c|c|c|c|}
\hline
Jump & $\hat K$ & $I(\hat K=3)$ & ARI & Hausdorff distance & time/iter (sec) \\
\hline

\multicolumn{6}{|c|}{\textbf{Grid = $256 \times 256$}} \\
\hline
\multicolumn{6}{|c|}{\textbf{Fr\'echet $\tilde{\alpha} = 2.50$}} \\
\hline
0.2 & 6.11 / \textbf{3.20} / 2.30 & 0.08 / \textbf{0.70} / 0.18 & 0.233 / \textbf{0.895} / 0.036 & 0.98 / \textbf{0.39} / 0.98 & 10.54 / 7.88 / \textbf{2.29} \\
\hline
0.4 & 5.98 / 3.11 / \textbf{2.95} & 0.06 / \textbf{0.67} / 0.25 & 0.286 / \textbf{0.899} / 0.206 & 0.99 / \textbf{0.41} / 0.98 & 9.89 / 8.54 / \textbf{2.29} \\
\hline
0.6 & 6.02 / \textbf{3.04} / 3.38 & 0.09 / \textbf{0.63} / 0.36 & 0.355 / \textbf{0.882} / 0.313 & 0.99 / \textbf{0.45} / 0.98 & 13.15 / 11.40 / \textbf{3.14} \\
\hline
0.8 & 7.59 / \textbf{3.01} / 3.96 & 0.01 / \textbf{0.62} / 0.26 & 0.500 / \textbf{0.872} / 0.593 & 1.00 / \textbf{0.46} / 0.98 & 17.30 / 14.64 / \textbf{4.00} \\
\hline
1.0 & 6.63 / \textbf{2.98} / 4.53 & 0.01 / \textbf{0.65} / 0.24 & 0.511 / \textbf{0.869} / 0.839 & 1.00 / \textbf{0.44} / 0.98 & 10.04 / 9.12 / \textbf{2.48} \\
\hline
\multicolumn{6}{|c|}{\textbf{Fr\'echet $\tilde{\alpha} = 2.75$}} \\
\hline
0.2 & 5.04 / \textbf{3.14} / 1.77 & 0.14 / \textbf{0.75} / 0.12 & 0.337 / \textbf{0.910} / 0.032 & 0.94 / \textbf{0.34} / 0.98 & 9.75 / 8.06 / \textbf{2.00} \\
\hline
0.4 & 4.37 / \textbf{2.99} / 2.53 & 0.23 / \textbf{0.70} / 0.22 & 0.372 / \textbf{0.885} / 0.261 & 0.98 / \textbf{0.39} / 0.97 & 9.87 / 8.92 / \textbf{2.13} \\
\hline
0.6 & 4.23 / \textbf{2.96} / 3.09 & 0.32 / \textbf{0.71} / 0.43 & 0.413 / \textbf{0.871} / 0.526 & 0.98 / \textbf{0.40} / 0.97 & 17.34 / 14.71 / \textbf{3.83} \\
\hline
0.8 & 6.28 / \textbf{2.94} / 3.68 & 0.00 / \textbf{0.71} / 0.49 & 0.614 / \textbf{0.874} / 0.836 & 1.00 / \textbf{0.40} / 0.97 & 15.32 / 13.20 / \textbf{3.40} \\
\hline
1.0 & 4.91 / \textbf{2.92} / 3.97 & 0.04 / \textbf{0.69} / 0.37 & 0.572 / 0.871 / \textbf{0.890} & 0.99 / \textbf{0.41} / 0.98 & 10.00 / 9.06 / \textbf{2.36} \\
\hline
\multicolumn{6}{|c|}{\textbf{Fr\'echet $\tilde{\alpha} = 3.00$}} \\
\hline
0.2 & 4.62 / \textbf{3.07} / 1.45 & 0.13 / \textbf{0.75} / 0.05 & 0.392 / \textbf{0.902} / 0.027 & 0.92 / \textbf{0.34} / 0.97 & 9.77 / 8.48 / \textbf{1.91} \\
\hline
0.4 & 3.99 / \textbf{2.96} / 2.37 & 0.30 / \textbf{0.71} / 0.15 & 0.452 / \textbf{0.877} / 0.315 & 0.97 / \textbf{0.39} / 0.97 & 9.84 / 9.00 / \textbf{2.06} \\
\hline
0.6 & 3.76 / \textbf{2.92} / 3.10 & 0.29 / \textbf{0.69} / 0.66 & 0.428 / \textbf{0.871} / 0.770 & 0.97 / \textbf{0.40} / 0.96 & 14.15 / 12.25 / \textbf{3.02} \\
\hline
0.8 & 5.88 / \textbf{2.91} / 3.47 & 0.01 / \textbf{0.68} / 0.63 & 0.639 / 0.869 / \textbf{0.890} & 0.99 / \textbf{0.41} / 0.97 & 9.94 / 9.01 / \textbf{2.20} \\
\hline
1.0 & 4.44 / \textbf{2.91} / 3.57 & 0.07 / \textbf{0.68} / 0.58 & 0.594 / 0.869 / \textbf{0.897} & 0.99 / \textbf{0.41} / 0.97 & 9.99 / 8.98 / \textbf{2.23} \\
\hline

\multicolumn{6}{|c|}{\textbf{Grid = $512 \times 512$}} \\
\hline
\multicolumn{6}{|c|}{\textbf{Fr\'echet $\tilde{\alpha} = 2.50$}} \\
\hline
0.2 & 12.30 / 3.15 / \textbf{3.10} & 0.00 / \textbf{0.77} / 0.19 & 0.132 / \textbf{0.927} / 0.008 & 1.00 / \textbf{0.25} / 1.00 & 50.78 / 29.19 / \textbf{11.12} \\
\hline
0.4 & 12.23 / \textbf{3.09} / 3.10 & 0.00 / \textbf{0.84} / 0.22 & 0.192 / \textbf{0.953} / 0.023 & 1.00 / \textbf{0.19} / 1.00 & 72.38 / 48.84 / \textbf{15.60} \\
\hline
0.6 & 12.70 / \textbf{3.14} / 3.67 & 0.00 / \textbf{0.86} / 0.32 & 0.434 / \textbf{0.972} / 0.198 & 1.00 / \textbf{0.17} / 1.00 & 73.36 / 51.69 / \textbf{16.44} \\
\hline
0.8 & 12.90 / \textbf{3.14} / 4.18 & 0.00 / \textbf{0.86} / 0.28 & 0.554 / \textbf{0.972} / 0.300 & 1.00 / \textbf{0.17} / 1.00 & 72.35 / 51.92 / \textbf{16.77} \\
\hline
1.0 & 13.29 / \textbf{3.13} / 4.53 & 0.00 / \textbf{0.87} / 0.19 & 0.667 / \textbf{0.972} / 0.421 & 1.00 / \textbf{0.16} / 1.00 & 42.14 / 31.57 / \textbf{10.70} \\
\hline
\multicolumn{6}{|c|}{\textbf{Fr\'echet $\tilde{\alpha} = 2.75$}} \\
\hline
0.2 & 5.92 / \textbf{3.04} / 2.27 & 0.15 / \textbf{0.89} / 0.13 & 0.226 / \textbf{0.954} / 0.009 & 0.99 / \textbf{0.14} / 0.99 & 68.27 / 44.11 / \textbf{13.27} \\
\hline
0.4 & 5.77 / \textbf{3.08} / 2.47 & 0.11 / \textbf{0.92} / 0.15 & 0.297 / \textbf{0.974} / 0.110 & 0.99 / \textbf{0.11} / 0.99 & 72.34 / 51.29 / \textbf{14.34} \\
\hline
0.6 & 6.21 / \textbf{3.08} / 3.15 & 0.09 / \textbf{0.92} / 0.24 & 0.511 / \textbf{0.974} / 0.276 & 0.99 / \textbf{0.11} / 0.99 & 72.96 / 51.87 / \textbf{15.25} \\
\hline
0.8 & 6.51 / \textbf{3.08} / 3.57 & 0.10 / \textbf{0.92} / 0.32 & 0.693 / \textbf{0.974} / 0.419 & 0.96 / \textbf{0.11} / 0.99 & 68.26 / 49.50 / \textbf{15.09} \\
\hline
1.0 & 6.70 / \textbf{3.08} / 4.28 & 0.02 / \textbf{0.92} / 0.26 & 0.749 / \textbf{0.974} / 0.818 & 0.96 / \textbf{0.11} / 0.99 & 42.09 / 31.62 / \textbf{9.86} \\
\hline
\multicolumn{6}{|c|}{\textbf{Fr\'echet $\tilde{\alpha} = 3.00$}} \\
\hline
0.2 & 3.95 / \textbf{3.07} / 1.85 & 0.33 / \textbf{0.93} / 0.11 & 0.278 / \textbf{0.974} / 0.010 & 0.97 / \textbf{0.10} / 0.99 & 71.02 / 49.17 / \textbf{12.86} \\
\hline
0.4 & 3.96 / \textbf{3.06} / 2.51 & 0.22 / \textbf{0.94} / 0.23 & 0.401 / \textbf{0.974} / 0.233 & 0.98 / \textbf{0.09} / 0.99 & 72.33 / 51.37 / \textbf{13.71} \\
\hline
0.6 & 3.96 / \textbf{3.08} / 2.68 & 0.22 / \textbf{0.92} / 0.28 & 0.585 / \textbf{0.973} / 0.327 & 0.94 / \textbf{0.11} / 0.99 & 73.26 / 51.91 / \textbf{14.59} \\
\hline
0.8 & 4.45 / \textbf{3.05} / 3.65 & 0.34 / \textbf{0.95} / 0.43 & 0.765 / \textbf{0.975} / 0.808 & 0.84 / \textbf{0.08} / 0.99 & 41.66 / 31.61 / \textbf{8.83} \\
\hline
1.0 & 4.52 / \textbf{3.06} / 4.00 & 0.32 / \textbf{0.94} / 0.46 & 0.774 / \textbf{0.976} / 0.934 & 0.95 / \textbf{0.09} / 0.99 & 41.85 / 31.72 / \textbf{9.23} \\
\hline
\end{tabular}%
}
\end{table}

\begin{table}[htbp]
\centering
\scriptsize
\setlength{\tabcolsep}{4.2pt}
\renewcommand{\arraystretch}{1.18}
\caption{Comparison of DCART, SPLADE and TV across grid sizes, jump sizes, and Fr\'echet $\tilde{\alpha}$ for Config.\ 2 (5 patches). Each cell reports avg.\ over 100 replicates in the order DCART / SPLADE / TV.}
\label{tab:grid256_512_frechet_stacked_k5}
\resizebox{\textwidth}{!}{%
\begin{tabular}{|c|c|c|c|c|c|}
\hline
Jump & $\hat K$ & $I(\hat K=5)$ & ARI & Hausdorff distance & time/iter (sec) \\
\hline

\multicolumn{6}{|c|}{\textbf{Grid = $256 \times 256$}} \\
\hline
\multicolumn{6}{|c|}{\textbf{Fr\'echet $\tilde{\alpha} = 2.50$}} \\
\hline
0.2 & 7.64 / \textbf{5.04} / 2.02 & 0.11 / \textbf{0.86} / 0.03 & 0.725 / \textbf{0.946} / 0.093 & 0.98 / \textbf{0.22} / 0.98 & 9.49 / 5.81 / \textbf{2.66} \\
\hline
0.4 & 8.31 / \textbf{5.01} / 4.21 & 0.03 / \textbf{0.99} / 0.22 & 0.898 / \textbf{0.963} / 0.661 & 0.91 / \textbf{0.08} / 0.97 & 9.67 / 5.70 / \textbf{2.87} \\
\hline
0.6 & 8.79 / \textbf{5.01} / 6.25 & 0.03 / \textbf{0.99} / 0.28 & 0.941 / \textbf{0.965} / 0.817 & 0.92 / \textbf{0.07} / 0.98 & 9.65 / 5.67 / \textbf{2.85} \\
\hline
0.8 & 8.63 / \textbf{5.00} / 6.98 & 0.00 / \textbf{1.00} / 0.07 & 0.951 / \textbf{0.965} / 0.907 & 0.89 / \textbf{0.06} / 0.97 & 9.75 / 5.69 / \textbf{2.94} \\
\hline
1.0 & 8.72 / \textbf{5.00} / 7.75 & 0.00 / \textbf{1.00} / 0.00 & 0.961 / \textbf{0.965} / 0.953 & 0.89 / \textbf{0.06} / 0.99 & 9.69 / 5.68 / \textbf{2.72} \\
\hline
\multicolumn{6}{|c|}{\textbf{Fr\'echet $\tilde{\alpha} = 2.75$}} \\
\hline
0.2 & 6.28 / \textbf{5.03} / 1.49 & 0.24 / \textbf{0.97} / 0.01 & 0.786 / \textbf{0.961} / 0.083 & 0.88 / \textbf{0.10} / 0.97 & 9.46 / 5.71 / \textbf{2.46} \\
\hline
0.4 & 6.77 / \textbf{5.01} / 4.84 & 0.06 / \textbf{0.99} / 0.49 & 0.933 / \textbf{0.965} / 0.780 & 0.63 / \textbf{0.07} / 0.96 & 9.61 / 5.69 / \textbf{2.72} \\
\hline
0.6 & 7.18 / \textbf{5.00} / 6.12 & 0.02 / \textbf{1.00} / 0.35 & 0.963 / \textbf{0.965} / 0.866 & 0.70 / \textbf{0.06} / 0.96 & 9.57 / 5.68 / \textbf{2.67} \\
\hline
0.8 & 7.07 / \textbf{5.00} / 6.77 & 0.00 / \textbf{1.00} / 0.00 & \textbf{0.972} / 0.965 / 0.957 & 0.60 / \textbf{0.06} / 0.98 & 9.63 / 5.67 / \textbf{2.61} \\
\hline
1.0 & 7.11 / \textbf{5.00} / 7.28 & 0.00 / \textbf{1.00} / 0.00 & \textbf{0.975} / 0.965 / 0.967 & 0.61 / \textbf{0.06} / 0.99 & 9.66 / 5.67 / \textbf{2.55} \\
\hline
\multicolumn{6}{|c|}{\textbf{Fr\'echet $\tilde{\alpha} = 3.00$}} \\
\hline
0.2 & 5.61 / \textbf{5.01} / 1.27 & 0.36 / \textbf{0.99} / 0.00 & 0.811 / \textbf{0.964} / 0.089 & 0.78 / \textbf{0.07} / 0.96 & 9.54 / 5.68 / \textbf{2.34} \\
\hline
0.4 & 6.20 / \textbf{5.00} / 5.13 & 0.11 / \textbf{1.00} / 0.71 & 0.946 / \textbf{0.965} / 0.818 & 0.40 / \textbf{0.06} / 0.96 & 9.59 / 5.68 / \textbf{2.56} \\
\hline
0.6 & 6.58 / \textbf{5.00} / 6.26 & 0.03 / \textbf{1.00} / 0.07 & \textbf{0.971} / 0.965 / 0.943 & 0.49 / \textbf{0.06} / 0.96 & 9.59 / 5.67 / \textbf{2.49} \\
\hline
0.8 & 6.37 / \textbf{5.00} / 6.49 & 0.02 / \textbf{1.00} / 0.00 & \textbf{0.974} / 0.965 / 0.968 & 0.32 / \textbf{0.06} / 0.99 & 9.66 / 5.69 / \textbf{2.50} \\
\hline
1.0 & 6.39 / \textbf{5.00} / 6.76 & 0.01 / \textbf{1.00} / 0.00 & \textbf{0.975} / 0.965 / 0.971 & 0.33 / \textbf{0.06} / 0.99 & 9.66 / 5.66 / \textbf{2.47} \\
\hline

\multicolumn{6}{|c|}{\textbf{Grid = $512 \times 512$}} \\
\hline
\multicolumn{6}{|c|}{\textbf{Fr\'echet $\tilde{\alpha} = 2.50$}} \\
\hline
0.2 & 14.61 / \textbf{5.06} / 2.64 & 0.00 / \textbf{0.83} / 0.04 & 0.798 / \textbf{0.974} / 0.024 & 1.00 / \textbf{0.19} / 0.99 & 43.0 / 30.3 / \textbf{11.4} \\
\hline
0.4 & 15.70 / \textbf{5.16} / 2.30 & 0.00 / \textbf{0.84} / 0.04 & 0.897 / \textbf{0.984} / 0.072 & 1.00 / \textbf{0.18} / 0.99 & 38.6 / 28.2 / \textbf{11.6} \\
\hline
0.6 & 15.94 / \textbf{5.16} / 4.60 & 0.00 / \textbf{0.84} / 0.22 & 0.950 / \textbf{0.984} / 0.676 & 1.00 / \textbf{0.17} / 0.99 & 38.8 / 28.3 / \textbf{11.8} \\
\hline
0.8 & 15.56 / \textbf{5.14} / 6.88 & 0.00 / \textbf{0.88} / 0.15 & 0.966 / \textbf{0.986} / 0.826 & 1.00 / \textbf{0.14} / 0.99 & 39.0 / 28.4 / \textbf{12.0} \\
\hline
1.0 & 15.66 / \textbf{5.14} / 8.24 & 0.00 / \textbf{0.88} / 0.04 & 0.975 / \textbf{0.986} / 0.864 & 1.00 / \textbf{0.14} / 1.00 & 38.9 / 28.3 / \textbf{12.5} \\
\hline
\multicolumn{6}{|c|}{\textbf{Fr\'echet $\tilde{\alpha} = 2.75$}} \\
\hline
0.2 & 8.46 / \textbf{5.08} / 1.81 & 0.06 / \textbf{0.91} / 0.00 & 0.850 / \textbf{0.983} / 0.025 & 0.96 / \textbf{0.12} / 0.99 & 38.3 / 28.0 / \textbf{9.8} \\
\hline
0.4 & 9.14 / \textbf{5.10} / 2.71 & 0.02 / \textbf{0.91} / 0.09 & 0.958 / \textbf{0.986} / 0.473 & 0.94 / \textbf{0.11} / 0.97 & 38.4 / 28.2 / \textbf{10.7} \\
\hline
0.6 & 9.24 / \textbf{5.10} / 5.31 & 0.01 / \textbf{0.91} / 0.38 & 0.981 / \textbf{0.986} / 0.812 & 0.92 / \textbf{0.11} / 0.98 & 38.5 / 28.2 / \textbf{10.8} \\
\hline
0.8 & 9.30 / \textbf{5.10} / 6.65 & 0.00 / \textbf{0.91} / 0.19 & 0.985 / \textbf{0.986} / 0.865 & 0.91 / \textbf{0.11} / 0.99 & 38.7 / 28.2 / \textbf{10.9} \\
\hline
1.0 & 9.30 / \textbf{5.10} / 7.77 & 0.00 / \textbf{0.91} / 0.02 & \textbf{0.986} / 0.986 / 0.953 & 0.91 / \textbf{0.11} / 0.99 & 38.7 / 28.3 / \textbf{11.6} \\
\hline
\multicolumn{6}{|c|}{\textbf{Fr\'echet $\tilde{\alpha} = 3.00$}} \\
\hline
0.2 & 6.52 / \textbf{5.10} / 1.32 & 0.23 / \textbf{0.91} / 0.01 & 0.881 / \textbf{0.986} / 0.025 & 0.80 / \textbf{0.11} / 0.98 & 38.3 / 28.0 / \textbf{9.4} \\
\hline
0.4 & 7.13 / \textbf{5.10} / 4.04 & 0.03 / \textbf{0.91} / 0.24 & 0.978 / \textbf{0.986} / 0.727 & 0.63 / \textbf{0.11} / 0.97 & 38.4 / 28.1 / \textbf{9.8} \\
\hline
0.6 & 7.19 / \textbf{5.10} / 5.55 & 0.00 / \textbf{0.91} / 0.54 & \textbf{0.988} / 0.986 / 0.853 & 0.61 / \textbf{0.11} / 0.98 & 38.5 / 28.2 / \textbf{10.1} \\
\hline
0.8 & 7.18 / \textbf{5.10} / 6.66 & 0.00 / \textbf{0.91} / 0.09 & \textbf{0.988} / 0.986 / 0.947 & 0.60 / \textbf{0.11} / 0.99 & 38.6 / 28.1 / \textbf{10.4} \\
\hline
1.0 & 7.18 / \textbf{5.10} / 7.17 & 0.00 / \textbf{0.91} / 0.00 & \textbf{0.989} / 0.986 / 0.978 & 0.59 / \textbf{0.11} / 0.99 & 38.7 / 28.1 / \textbf{11.0} \\
\hline
\end{tabular}%
}
\end{table}
\section{Additional Real-world dataset analysis}\label{se:appendix real data}

Compression in glass fibre-reinforced polymers often leads to unreliable or potentially deformed fibre clusters. Recently, multi-computed Tomography has been heavily used to produce three-dimensional voxelized images of the fibre microstructure \citep{emerson2017individual, garcea2018x} , which are further processed via MAVI (Modular Algorithms for Volume Images - \citet{wirjadi2016estimating}) over a scanning window to produce three-dimensional fibre directions. The key idea, as illustrated in \cite{dresvyanskiy2020detecting} is that fibres are just cylinders without a distinct ``head" or ``tail";  therefore, on a high level, MAVI constructs a cube around some voxel $(m_1, m_2, m_3)$, and averages the local fibre direction vectors $(x,y,z)$ inside the cube. Finally, corresponding to each of the directions $x,y,$ and $z$, the absolute value of the corresponding direction is assigned to the voxel $(m_1, m_2, m_3)$. Concretely, given a tomographic image of size $m_1 \times m_2 \times m_3$ voxels, a MAVI scan using equal-sized cubic blocks of side length $b$ produces three 3-dimensional datasets, corresponding to fibre directions parallel to the $x$-, $y$-, and $z$-axes, each of size
\[
\left(\frac{m_1}{b}\right)\times\left(\frac{m_2}{b}\right)\times\left(\frac{m_3}{b}\right).
\]
Based on the three-dimensional fibre-detection datasets, one can employ algorithms for anomalous patch detections to identify deformations on fibre systems. This key idea was analyzed in \cite{dresvyanskiy2019application, dresvyanskiy2020detecting} with the spatial dependence between the fibre directions being assumed to be $m$-dependent. In contrast, we employ Algorithm \ref{algo:multiple-subsample} that allows general form of spatial dependence, along with providing an estimate of the anomalous region, rather than tackling a testing problem. To that end, we consider the fibre directions datasets\footnote{The authors gratefully thank Prof. Claudia Redenbach for providing this dataset in personal communication. The dataset is available upon request contingent on permission from Prof. Redenbach.} corresponding to 3D-images of a glass fibre reinforced polymer, collected by the Institute for Composite Materials(IVW) in Kaiserslautern. In particular, we look at two images:
\begin{itemize}
    \item a \textit{simulated} glass-fibre image of $2000\times 2000\times 2100$ voxels, analyzed by MAVI with block-size $b=24$ to provide three datasets (corresponding to $3$ fibre-directions), each of size $83\times83\times87$ voxels. This simulated dataset acts as a baseline sanity check for the performance of Algorithm \ref{algo:multiple-subsample} against the detection algorithms in \cite{dresvyanskiy2020detecting}.
    \item Moving beyond simulated dataset, we also analyze a real glass-fibre image of $970\times 1469\times 1217$ voxels, which was further processed by MAVI with $b=15$ to produce three fibre-directions datasets, each of size $65\times 98 \times 81$ voxels. 
\end{itemize}
Mimicking main drafts setting, for each of the dataset, we employ Algorithm \ref{algo:multiple-subsample} SPLADE with $\alpha=0.5$, with the corresponding parameters for the application of Algorithm \ref{algo:subsample} inside Algorithm \ref{algo:multiple-subsample} being $\alpha_j=0.5, \ \kappa_j=0.01,$ for every $j\in [\hat{K}]$. The identified anomalous patches for both the simulated and real datasets are shown in Figure \ref{fig:fibre}. 

Firstly, we discuss the results on simulated dataset (Figure \ref{fig:fibre}, subfigures (a)-(b)-(c)). As a sanity check, we note that the anomalous hyper-rectangular patch is uniformly localized across each of three fibre-directions, which also corresponds almost identically to Figures 7 and 8 in \cite{dresvyanskiy2020detecting}. This indicates the accuracy of our algorithm, albeit in synthetic settings where fibre deformations are uniform across different directions. Moreover, as can be seen in Table 5 in the aforementioned paper, their cluster-based anomaly detection algorithm fails to detect any anomalous patch along the direction of $z$-axis. On the other hand, Algorithm \ref{algo:multiple-subsample} discovers the same anomalous rectangle for the $z$-axis fibre direction as with the other two directions, indicating the increased accuracy and power of our algorithm.

The situation is much more nuanced for the real glass-fibre reinforced polymer dataset, corresponding to \S 6.2 in \cite{dresvyanskiy2020detecting}. Therein, in Table 6, the authors show that they fail to detect any anomalous patch along $x$-axis direction, whereas anomalous patches are discovered along the $y$- and $z$- axes directions. In contrast, our results for the $x$-axis direction (depicted in Figure \ref{fig:fibre}, subfigures (d)) recover the same anomalous patch as highlighted by \cite{dresvyanskiy2020detecting}, yet again highlighting the elevated accuracy of our algorithm compared to available methods. Moving on, in \cite{dresvyanskiy2020detecting}, the discovered patches are then combined to produce Figures 14-15-16 therein, where the discovered anomalous regions are mostly identical.  However, our results show that such combinations of means of local fibre directions may be too simplistic to represent the total characteristics  of the anomalous region. For example, Figure \ref{fig:fibre}, subfigures (d)-(e)-(f) clearly show that the anomaly is anisotropic, and different spatial fibre directions exhibit different directional biases. This may highlight local shear or flow causing fibres to rotate mainly in one planar direction; compression or warping causing more change in vertical orientation; layered structure where one component picks up the boundary more strongly than another, or even diffuse orientation disorder - requiring further investigations and in-depth physical analysis. In light of this, Algorithm \ref{algo:multiple-subsample} can accurately predict complex anomalies occurring even in three-dimensional systems, marrying scalability with performance. 

\begin{figure}[htbp]
    \centering
    \begin{subfigure}[b]{0.32\textwidth}
        \centering
        \includegraphics[width=\textwidth]{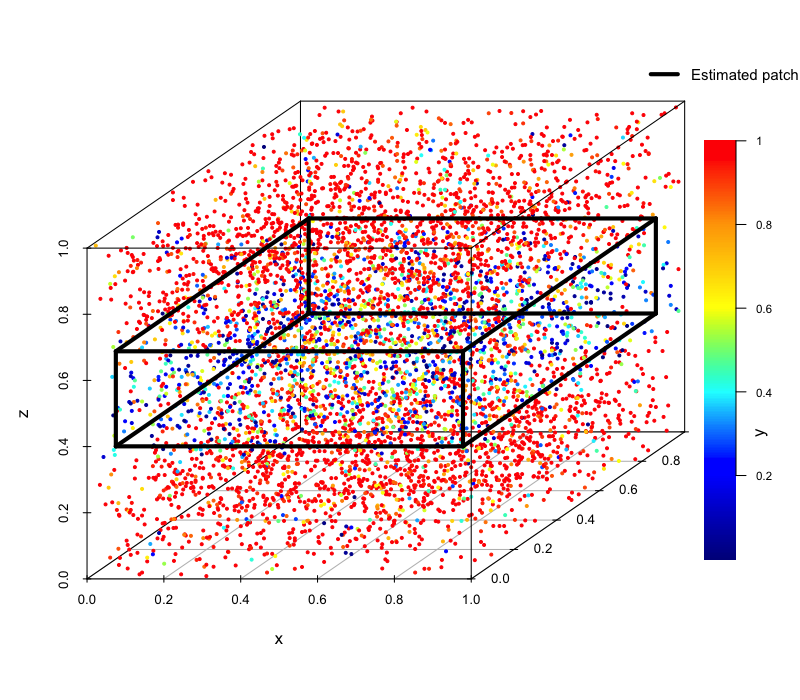}
        \caption{Simulated; fibre-direction: $x$-axis}
        \label{fig:sub1_fibre}
    \end{subfigure}
    \hfill
    \begin{subfigure}[b]{0.32\textwidth}
        \centering
        \includegraphics[width=\textwidth]{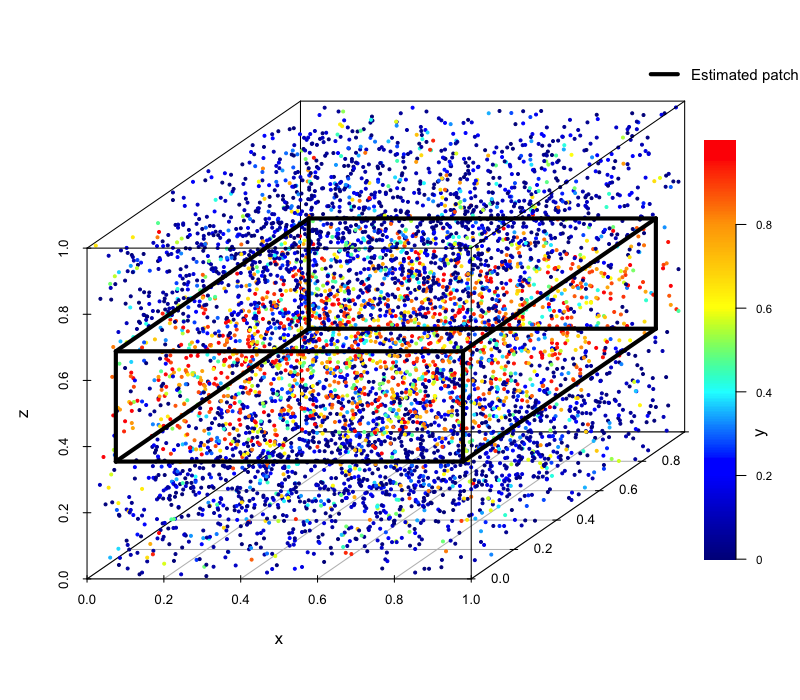}
        \caption{Simulated; fibre-direction: $y$-axis}
        \label{fig:sub2_fibre}
    \end{subfigure}
    \hfill
    \begin{subfigure}[b]{0.32\textwidth}
        \centering
        \includegraphics[width=\textwidth]{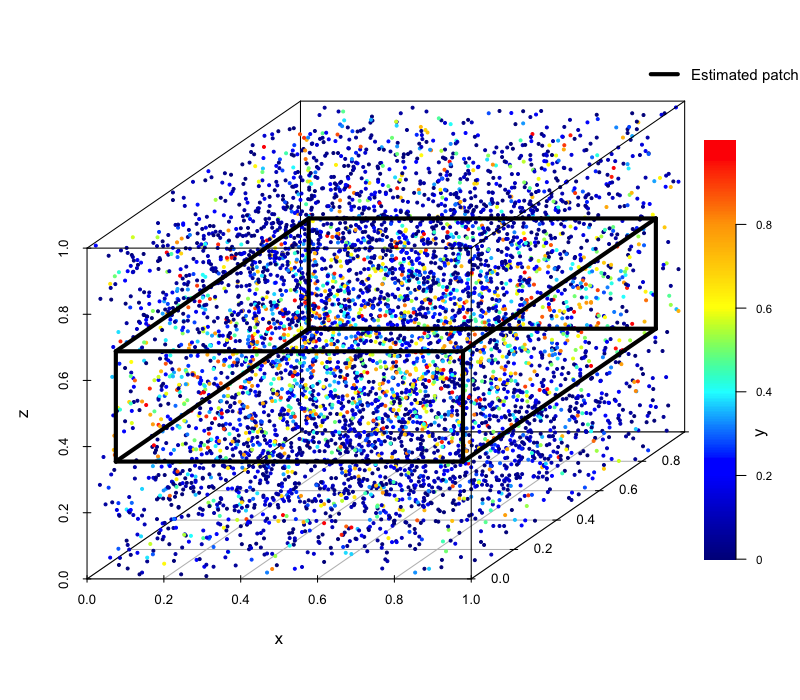}
        \caption{Simulated; fibre-direction: $z$-axis}
        \label{fig:sub3_fibre}
    \end{subfigure}

    \vspace{0.5em}

    \begin{subfigure}[b]{0.32\textwidth}
        \centering
        \includegraphics[width=\textwidth]{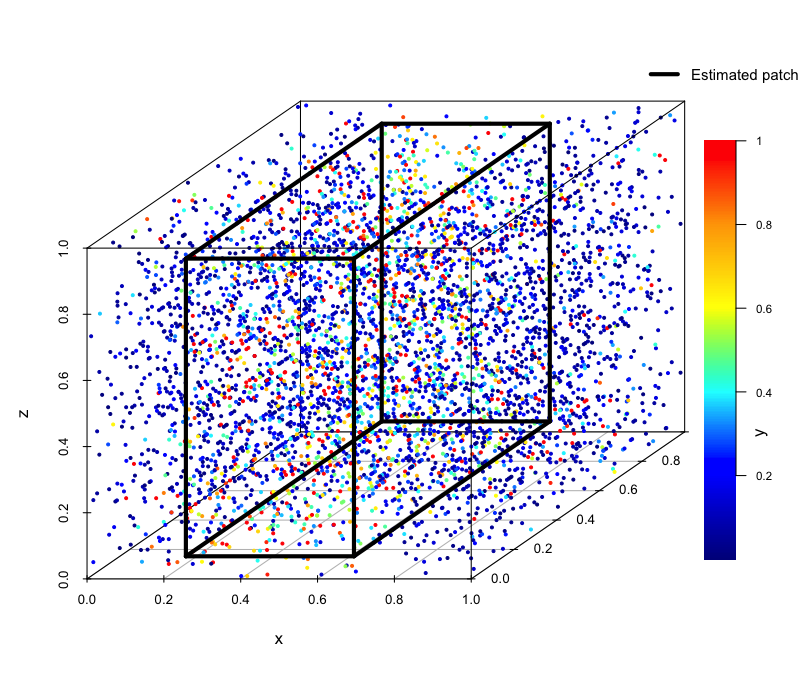}
        \caption{Real data; fibre-direction: $x$-axis}
        \label{fig:sub4_fibre}
    \end{subfigure}
    \hfill
    \begin{subfigure}[b]{0.32\textwidth}
        \centering
        \includegraphics[width=\textwidth]{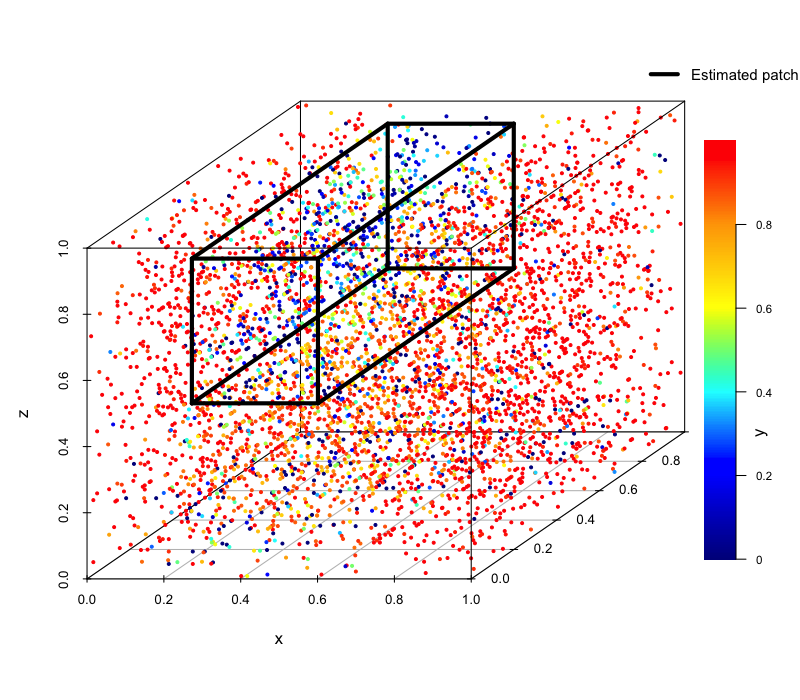}
        \caption{Real data; fibre-direction: $y$-axis}
        \label{fig:sub5_fibre}
    \end{subfigure}
    \hfill
    \begin{subfigure}[b]{0.32\textwidth}
        \centering
        \includegraphics[width=\textwidth]{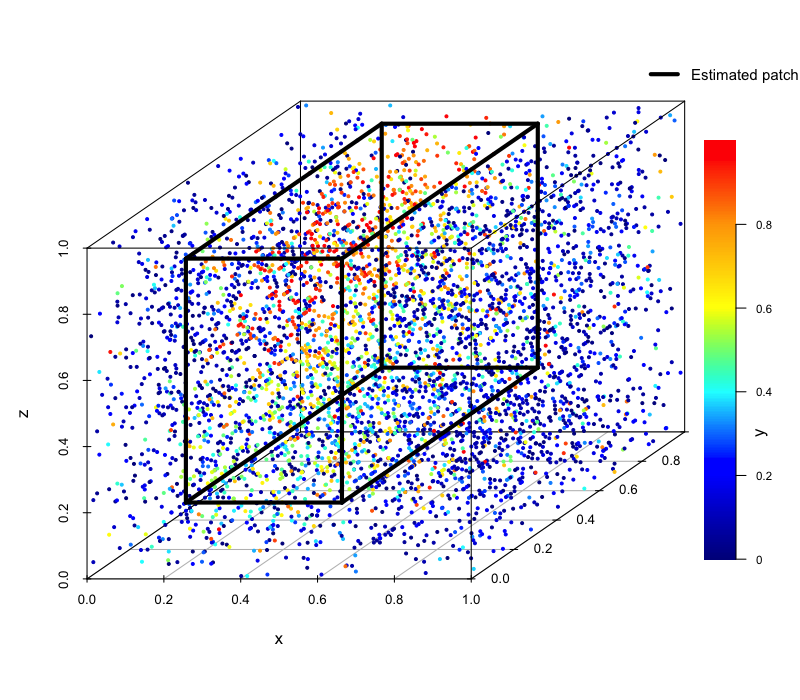}
        \caption{Real data; fibre-direction: $z$-axis}
        \label{fig:sub6_fibre}
    \end{subfigure}

    \caption{Application of Algorithm \ref{algo:multiple-subsample} to the fibre systems dataset.}
    \label{fig:fibre}
\end{figure}

\end{document}